\DeclareMathOperator{\E}{\mathrm{E}}
\DeclareMathOperator{\R}{\mathbb{R}}
\DeclareMathOperator{\n}{\mbox{normal}}
\DeclareMathOperator{\X}{\mathcal{X}}
\renewcommand\@seccntformat[1]{\csname the#1\endcsname.\quad}
\newtheorem{theorem}{Theorem}
\newtheorem{repthm}{Theorem}
\DeclareMathOperator{\w}{\mathbf{w}}
\newcommand\mycom[2]{\genfrac{}{}{0pt}{}{#1}{#2}}
\definecolor{codegreen}{rgb}{0,0.6,0}
\definecolor{codegray}{rgb}{0.5,0.5,0.5}
\definecolor{codepurple}{rgb}{0.58,0,0.82}
\definecolor{backcolour}{rgb}{0.99,0.99,0.97}
\lstdefinestyle{R2}{
	language=R,
	backgroundcolor=\color{backcolour},   
	commentstyle=\color{codegreen},
	keywordstyle=\color{black},
	numberstyle=\tiny\color{codegray},
	stringstyle=\color{codepurple},
	emphstyle=\color{black},%
	basicstyle={\small \ttfamily},
	breakatwhitespace=false,         
	breaklines=true,                 
	captionpos=b,                    
	keepspaces=true,                 
	numbers=left,                    
	numbersep=5pt,                  
	showspaces=false,                
	showstringspaces=false,
	showtabs=false,                  
	tabsize=2,
	morecomment=[f][0][\color{codegreen}]{\#}
}
\lstdefinestyle{stan}{
	literate={~}{$\sim$}{1},
	otherkeywords={real, simplex,  matrix, data,  parameters, transformed, model\{, vector, generated, quantities},
	backgroundcolor=\color{backcolour},   
	commentstyle=\color{codegreen},
	keywordstyle=\color{magenta},
	numberstyle=\tiny\color{codegray},
	stringstyle=\color{codepurple},
	emph={  
		normal, cauchy, inv_gamma, bernoulli_logit, log, softmax, std_normal, append_row, to_vector, exp,rep_vector
	},
	emphstyle=\color{codepurple},%
	basicstyle={\small \ttfamily},
	breakatwhitespace=false,         
	breaklines=true,                 
	captionpos=b,                    
	keepspaces=true,                 
	numbers=left,                    
	numbersep=5pt,                  
	showspaces=false,                
	showstringspaces=false,
	showtabs=false,                  
	tabsize=2
}
\def\@maketitle{%
	\begin{center}%
		\let \footnote \thanks
		{\large \@title \par}%
		{\normalsize
			\begin{tabular}[t]{c}%
				\@author
			\end{tabular}\par}%
		{\small \@date}%
	\end{center}%
}
\title{
	\bf Bayesian hierarchical stacking:  Some models are (somewhere) useful \vspace{.1in}  
}
\author{
	Yuling Yao\footnote{Department of Statistics, Columbia University, New York, USA. {\small\texttt{yy2619@columbia.edu}.}}
	 \and Gregor Pirš\footnote{Faculty of  Computer and Information Science,  University of Ljubljana, Ljubljana, Slovenia.} 
	  \and Aki Vehtari\footnote{Department of  Computer Science,  Aalto University, Espoo, Finland.} 
	 	\and Andrew Gelman\footnote{Department of Statistics and Political Science, Columbia University, New York, USA.}
} 
\date{\vspace{.1in}  20 May 2021  \vspace{-.15in} }
\begin{document}\sloppy
	\maketitle
\thispagestyle{empty}

\begin{abstract}
			Stacking is a widely used model averaging technique that asymptotically yields  optimal predictions among  linear averages.  We show that stacking is most effective when model predictive performance is heterogeneous in inputs, and we can further improve the stacked mixture  with a hierarchical model. 
We generalize  stacking to Bayesian hierarchical stacking. The model weights are varying as a function of data, partially-pooled, and inferred using Bayesian inference.  
%Our Bayesian formulation includes constant-weight (complete-pooling) stacking as a special case.
We further incorporate discrete and continuous inputs, other structured priors, and time series and longitudinal data. 
To verify the performance gain of the proposed method, we 
derive  theory bounds, and demonstrate  on several applied problems.
	
	\textbf{Keywords}: 
	Bayesian hierarchical modeling, 
	conditional prediction,
	covariate shift,
	model averaging,
	stacking,
	prior construction.
\end{abstract}

 \section{Introduction}
 Statistical inference is conditional on the model, and a general challenge is how to make full use of multiple candidate models. 
 Consider data $\mathcal{D}=(y_i\in \mathcal{Y}, x_i\in \mathcal{X})_{i=1}^n$, and $K$ models $M_1, \dots, M_k$, each having its own parameter vector $\theta_k \in \Theta_k$, likelihood, and prior.  We fit each model and obtain posterior predictive distributions, 
 \begin{equation}\label{eq_select}
 	p(\tilde y| \tilde  x, M_k)= \int_{\Theta_k} p(\tilde y|\tilde  x,  \theta_k, M_k ) p(\theta_k|\{y_i, x_i\}_{i=1}^n, M_k)\, d\theta_k.
 \end{equation}
 The model fit is judged by its expected predictive utility of future (out-of-sample) data $(\tilde y, \tilde x)\in \mathcal{Y}\times \mathcal{X}$, which generally have an unknown \emph{true} joint density $p_t(\tilde y, \tilde x)$. Model selection seeks the best model with the highest utility when averaged over $p_t(\tilde y, \tilde x)$. Model averaging assigns models with weight $w_1, \dots, w_K$ subject to a simplex constraint $\w\in \mathcal{S}_K= \{\w: \sum_{k=1}^K w_k=1;  w_k\in [0,1],  \forall k\}$, and  the future prediction is a linear mixture from individual models:
 \begin{equation}\label{eq_linear_ave}
 	p(\tilde y| \tilde x, \w, \mathrm{model ~ averaging})=\sum_{k=1}^K w_k p(\tilde y|\tilde  x, M_k), ~ \w\in \mathcal{S}_{K}.
 \end{equation}
 Stacking \citep{wolpert1992stacked}, among other ensemble-learners, has been successful for various  prediction  tasks. \cite{yao2018using} apply the stacking idea to combine predictions from separate Bayesian inferences.  The first step is to fit each individual model 
 and evaluate the pointwise leave-one-out  predictive density of each data point $i$ under each model $k$:
 $$p_{k,-i} = \int_{\Theta_k}  p (y_i|\theta_k, x_i, M_k) p\left(\theta_k| M_k, \{(x_{i^\prime}, y_{i^\prime}) : {i^{\prime}\neq i}\}  \right) d\theta_k,$$
 which in a Bayesian context we can approximate using posterior simulations and Pareto-smoothed importance sampling \citep{vehtari2017practical}. Reusing data eliminates the need to model the unknown joint density $p_t(\tilde y, \tilde x)$.
 The next step is to determine the vector\footnote{We use the bold letter $\w$, or $\w(\cdot)$ to reflect that the weight is vector, or a vector of functions.} of weights $\w=(w_1,\dots,w_K)$ that optimize the average log score of    the stacked prediction,
 \begin{equation}\label{stacking_obj}
 	\hat {\w}^{\mathrm{stacking}}=\arg\max_{\w}\sum_{i=1}^n \log \left(\sum_{k=1}^K w_k p_{k,-i}\right), \mbox{ such that } \w\in \mathcal{S}_{K}.
 \end{equation}
 
 However, the linear mixture \eqref{eq_linear_ave} restricts  an identical set of weights for all input $x$. We will later label this solution \eqref{stacking_obj} as \emph{complete-pooling stacking}.  The present paper proposes \emph{hierarchical stacking}, an approach that goes further in three ways:
 \begin{enumerate}[noitemsep,topsep=0pt,  itemsep=4pt]
 	\item Framing the estimation of the stacking weights as a Bayesian inference problem rather than a pure optimization problem. This in itself does not make much difference in the complete-pooling estimate \eqref{stacking_obj} but is helpful in the later development.
 	\item Expanding to a hierarchical model in which the stacking weights can vary over the population.  If the model predictors $x$ take on $J$ different values in the data, we can use Bayesian inference to estimate a $J\! \times\! K$ matrix of weights that partially pools the data both in row and column.
 	\item Further expanding to allow weights to vary as a function of continuous predictors.  This idea generalizes the feature-weighted linear stacking \citep{sill2009feature} with a more flexible form and Bayesian hierarchical shrinkage.
 \end{enumerate}
 
 There are two reasons we would like to consider input-dependent model weights.
 First, the scoring rule measures the expected predictive performance averaged over $\tilde x$ and $\tilde y$, as the objective function in \eqref{stacking_obj} divided by $n$ is a consistent estimate of  $\E_{\tilde x, \tilde y} \log \left(\sum_{k=1}^K w_k p(\tilde y | \tilde x, \mathcal{D}, M_k)\right)$.
 But an overall good model fit does not ensure a good conditional prediction at a given location $\tilde x=\tilde x_0$, or under covariate shift when the distribution of input $x$ in the observations differs from the population of interest. 
 More importantly, different models can be good at explaining different regions in the input-response space, which is why model averaging can be a better solution to model selection. 
 Even if we are only interested in the average  performance, we can further improve model averaging by learning \emph{where} a model is good so as to \emph{locally} inflate its weight.

 In Section \ref{sec_method}, we develop  detailed implementation
 of hierarchical stacking. We explain why it is legitimate to convert an optimization problem into a formal Bayesian  model. With hierarchical shrinkage, we partially pool the stacking weights across data. By varying priors, hierarchical stacking includes classic  stacking and selection as special cases.  We generalize this approach to continuous input variables,  other structured priors, and time-series and longitudinal data. In Section \ref{sec_bound},  we turn heuristics from the previous paragraph into a rigorous learning bound, indicating the benefit from model selection to model averaging, and from complete-pooling model averaging to a local averaging that allows the model weights to vary in the population.
 We outline related work in Section \ref{sec_related}. In Section \ref{sec_examles}, we evaluate the proposed method  in several simulated and  real-data examples,  including a U.S. presidential election forecast.   %We discuss more on the hierarchical shrinkage,  connection to full-Bayesian inference, and future directions in Section \ref{sec_discuss}. 
 
 This paper makes two main contributions:
 \begin{itemize}[noitemsep,topsep=0pt,  itemsep=4pt]
 	\item Hierarchical stacking provides a  Bayesian recipe for model averaging with input-dependent weights and hierarchical regularization. It is beneficial for both improving the overall model fit, and the conditional local fit in small and new areas.
 	\item Our theoretical results characterize how the model list should be locally separated to be useful in model averaging and local model averaging.
 	%\item We use stacking not just as a tool for optimizing predictions but also as a way to understand problems with fitted models.
 	
 	% I move the last point into discussion. I don' think it is exactly true. Stacking weight is not necessarily a proxy for model fit. For example, in the theoretical example in the appendix, when two models become closer and  closer, their weights are more polarized. 
 \end{itemize}

 \section{Hierarchical stacking}\label{sec_method}
 The present paper generalizes the linear model averaging \eqref{eq_linear_ave} to 
 pointwise model averaging. The goal is to construct an  input-dependent model weight function $\w(x)=(w_1(x), \dots, w_K(x)): \mathcal{X} \to \mathcal{S}_K$, and combine the predictive densities pointwisely by 
 \begin{equation}\label{eq_point_stacking}
 	p(\tilde y| \tilde x, \w(\cdot), ~\mathrm{pointwise~ averaging} ) = \sum_{k=1}^K w_k(\tilde x) p(\tilde y| \tilde x,  M_k), \mbox{ such that } \w(\cdot) \in \mathcal{S}_K^{\mathcal{X}}.
 \end{equation}
 If the input is discrete and has finite categories,  one na\"ive estimation of the pointwise optimal  weight is to run complete-pooling stacking \eqref{stacking_obj} separately on each category, which we will label \emph{no-pooling  stacking}. The no-pooling procedure generally has a larger variance and overfits the data.
 
 From a  Bayesian perspective, it is  natural to compromise between unpooled and completely pooled procedures by  a hierarchical model.  Given some hierarchical prior $p^{\mathrm{prior}} \left(\cdot  \right)$, we  define  the posterior distribution of the stacking weights $w\in \mathcal{S}_K^{\mathcal{X}}$ through the usual likelihood-prior protocol:
 \begin{equation}\label{stacking_obj_h}
 	\log p\left(\w(\cdot) | \mathcal{D}\right) =  \sum_{i=1}^n \log \left(\sum_{k=1}^K w_k(x_i) p_{k,-i} \right)+ \log p^{\mathrm{prior}} \left( \w \right) + \mathrm{constant},~~ \w(\cdot) \in \mathcal{S}_K^{\mathcal{X}}.
 \end{equation}
 The final estimate of the pointwise stacking weight  used in  \eqref{eq_point_stacking} is then the posterior mean from this joint density $\E (\w(\cdot) | \mathcal{D})$. We call this approach \emph{hierarchical stacking}.

 \subsection{Complete-pooling and no-pooling stacking}
 For notational consistency, we rewrite the input variables into two groups $(x, z)$, where $x$ are variables on which the model weight $w(x)$ depends   during model averaging \eqref{eq_point_stacking}, and $z$ are all remaining input variables. 
 
 To start, we consider $x$ to be discrete and has $J<\infty$  categories,  $x=1, \dots J$. We will extend to continuous and hybrid $x$ later. The input varying stacking weight function is parameterized by  a $J\! \times\! K$ matrix  $\{w_{jk}\} \in {\mathcal{S}_K^J}$: Each row of the matrix is an element of the length-$K$ simplex.   The $k$-th model in cell $j$ has the weight $w_k(x_i) = w_{jk}, \forall x_i=j$. We fit  each individual model $M_k$ to all observed data $\mathcal{D}=(x_i,z_i, y_i)_{i=1}^n$ and obtain pointwise leave-one-out cross-validated log predictive densities: 
 \begin{equation}\label{eq_pki}
 	p_{k,-i}  \coloneqq  \int_{\Theta_k} p(y_i | \theta_k, x_i, z_i, M_k ) p (\theta_k|  \{(x_{l},y_{l},z_{l}): l\neq i\}, M_k)d \theta_k.  
 \end{equation}
 Same as in complete-pooling stacking, here we avoid refitting each model $n$ times, and instead use the Pareto smoothed importance sampling \citep[PSIS,][]{vehtari2017practical, vehtari2015pareto} to approximate $\{p_{k,-i}\}_{i=1}^{n}$ from one-time-fit posterior draws $p(\theta_k|M_k, \mathcal{D})$. The cost of such approximate leave-one-out cross validation is often negligible compared with individual model fitting. 
 
 To optimize the expected predictive performance  of the pointwisely combined  model averaging, we can maximize the  leave-one-out predictive density
 \begin{equation}\label{eq_no-pooling}
 	\max_{\w(\cdot)} \sum_{i=1}^n \log \left(\sum_{k=1}^K w_{k}(x_i) p_{k,-i}\right).
 \end{equation}
 
 On one extreme, the complete-pooling 
 stacking \eqref{stacking_obj} solves optimization \eqref{eq_no-pooling} subject to a constant constraint 
 $w_{k}(x)=w_{k}(x^{\prime}), \forall k, x, x^{\prime}$.
 %which is equivalent to solving \eqref{stacking_obj} with all data $\mathcal{D}$. 
 On the other extreme,  no-pooling stacking maximizes this objective function \eqref{eq_no-pooling} without extra constraint other than the row-simplex-condition, which amounts to  separately solving complete-pooling stacking \eqref{stacking_obj} on each input cell $\mathcal{D}_j= \{(x_i,z_i, y_i): x_i=j\}$. 
 
 %\begin{equation}\label{eq_no-pooling}
 %w^{\mathrm{stacking}}_\mathrm{no-pooling}=\{ w_{jk}^{\mathrm{no-pooling}}\}_{j=1, k=1}^{J\quad  K}, \quad 
 %w_{j1,\dots,jK}^{\mathrm{no-pooling}} =\arg\max_{w \in \mathcal{S}_K}\sum_{i: x_i=j} \log \sum_{k=1}^K w_k p_{k,-i}.
 %\end{equation}

 If there are a large number of repeated measurements in each cell, 
 $n_j \coloneqq ||\{i: x_i=j\} ||\to \infty$, 
 then $\frac{1}{n_j} \sum_{i: x_i=j} \log \sum_{k=1}^K w_k(j) p_{k,-i}$ becomes a reasonable estimate of the conditional log predictive density 
 $\int_{\mathcal{Y}} p_t(\tilde y | \tilde x=j)  \log    \sum_{k=1}^K w_k(j) p(\tilde y|j,  M_k)  d \tilde y$, with convergence rate $\sqrt{n_j}$,
 and therefore, no-pooling stacking becomes asymptotically optimal among all cell-wise combination weights. 
 For finite sample size, because the cell size is smaller than total sample size, we would expect a larger variance in no-pooling stacking than in complete-pooling stacking. Moreover, the cell sizes are often not balanced, which entails a large noise of no-pooling stacking weight in  small cells.
 
 % the distribution of $x$ in the sample differs from these predictors' distribution in the population of interest. 
 
 \subsection{Bayesian inference for stacking weights}
 Vanilla (optimization-based) stacking \eqref{stacking_obj} is  justified by \emph{Bayesian decision theory}: the expected log predictive density of the  combined model  $\E_{\tilde y} \log \left(\sum_{k=1}^K w_k  p(\tilde y |M_k)\right)$ is estimated by leave-one-out $\frac{1}{n} \sum_{i=1}^n \log \left(\sum_{k=1}^K w_k p_{k,-i}\right)$.    
 The point optimum   asymptotically  maximizes the expected utility \citep{le2017bayes}, hence is an $M_*$-optimal decision in terms of \citet{vehtari2012survey}.

 %Apart from the input-varying yet partially-pooled weight, hierarchical stacking differs from vanilla stacking  for we treat the objective function  as a log likelihood function, as if the  pointwise fits $\{p_{k,-i}\}$ are observations. Is hierarchical stacking  a legitimate \emph{Bayesian inference}?

 To fold stacking into a \emph{Bayesian inference problem}, we want to treat the objective function in \eqref{eq_no-pooling} as a log likelihood with parameter $\w$.  
 %It looks like a multinomial log likelihood, except that $p_{k, j}$ are neither integers nor observations.   Rigorously, this log likelihood is and outcomes $y_{1, \dots, n}$.
 After integrating out individual-model-specific parameters $\theta_k$  such that $p(y| x, M_k)$ is given,   the   outcomes $y_i$ at input location  $x_i$ in the combined model  have  densities 
 $p(y_i| x_i,    w_k(x_i)) = \sum_{k=1}^K      w_k(x_i)  p(y_i| x_i,   M_k)$, which  implies a  joint log likelihood:  $\sum_{i=1}^n \log \left(\sum_{k=1}^K      w_k(x_i)  p(y_i| x_i,   M_k)\right)$. 
 But this procedure has used data twice---in other practices, data are often used twice to pick the prior, whereas here data are used twice to pick the likelihood. 
 
 We  use a two-stage estimation procedure to  avoid reusing data. 
 Assuming a hypothetically provided holdout dataset
 $  \mathcal{D}^{\prime}$ of the same size and identical distribution as observations $ \mathcal{D}= \{y_i,x_i\}_{i=1}^n$, we can use $ \mathcal{D}^{\prime}$ to fit the individual model first and compute  $\tilde p(y_{i} | x_i, M_k, \mathcal{D}^{\prime})= \int\! p(y_{i} | x_i, M_k, \theta_k) p(\theta_k |  M_k,   \mathcal{D}^{\prime}) d\theta_k$. 
 In the second stage we plug in the observed $y_i$, $x_i$, and obtain the pointwise full likelihood $p(y_i|\w, \mathcal{D}^\prime, x_i) = \sum_{k=1}^K     w_k(x_i)   \tilde   p(y_{i} | x_i, M_k, \mathcal{D}^{\prime})$.
 
 Now in lack of holdout data $\mathcal{D}^\prime$,   the  leave-$i$-th-observation-out predictive density $p_{k,-i}$ is a consistent estimate of the 
 pointwise out-of-sample predictive density $\E_{\mathcal{D}^\prime}  \left(\tilde p(y_{i} | x_i, M_k, \mathcal{D}^{\prime})\right)$. By plugging it into the two-stage log likelihood and integrating out the unobserved holdout data $\mathcal{D}^\prime$,  we get a profile likelihood 
 $$ p(y_i |\w, x_i)\coloneqq \E_{\mathcal{D}^\prime} \left(p(y_i|\w, x_i, \mathcal{D}^\prime)\right)
 = \sum_{k=1}^K     w_k(x_i) \E_{\mathcal{D}^\prime} \left( p(y_{i} | x_i, M_k, \mathcal{D}^{\prime})\right) \approx  \sum_{k=1}^K     w_k(x_i)    p_{k,-i}.$$
 Summing over $y_i$ arrives at $\log \left(p(\mathcal{D}|\w)\right)\approx \sum_{i=1}^n \log \left(\sum_{k=1}^K w_{k}(x_i) p_{k,-i}\right)$. This log likelihood coincides with the no-pooling  optimization objective function \eqref{eq_no-pooling}. 
 %Therefore, no-pooling stacking can also be interpreted as its maximum likelihood estimate.
 
 %Such coincidence only applies for evaluating the model predictive performance by logarithmic scoring rule.  
 Integrating out the hypothetical data $\mathcal{D}^\prime$ is related to the idea of marginal data augmentation \citep{meng1999seeking}. \citet{polson2011data} took a similar approach to convert the optimization-based support vector machine into a Bayesian inference.

 \subsection{Hierarchical stacking: discrete inputs}\label{sec_discrete_method}
 The log posterior density of hierarchical stacking model  \eqref{stacking_obj_h} contains the log likelihood defined above $\sum_{i=1}^n \log \left(\sum_{k=1}^K w_k(x_i) p_{k,-i}\right)$, and a prior distribution on the weight matrix $\w= \{w_{jk}\} \in \mathcal{S}_K^{J}$, which we specify in the following.
 
 %we will reach the earlier defined  posterior  density of stacking weights $\log p\left(w \mid \mathcal{D}\right) =  \sum_{i=1}^n \log \left(\sum_{k=1}^K w_k(x_i) p_{k,-i} \right)+ p^{\mathrm{prior}} \left( w \right) + \mathrm{Constant},~~ w \in \mathcal{S}_K^J.$
 %The normalization constant therein ensures a probability density, but is ignoble in Bayesian computation.

 We first take a softmax transformation that bijectively converts the simplex matrix space $\mathcal{S}_K^{J}$ to unconstrained space  $\R^{J(K-1) }$:
 \begin{equation}\label{eq_soft_max}
 	w_{jk}= \frac{\exp(\alpha_{jk})}{  \sum_{k=1}^K \exp( {\alpha_{jk} } )},  ~ 1\leq k\leq K-1,~ 1\leq j \leq J;  \qquad \alpha_{jK}=0,   ~1\leq j\leq J.
 \end{equation}
 $\alpha_{jk}\in \R$ is interpreted as the log odds ratio of model $k$ with reference to $M_K$ in cell $j$. %It is a unconstrained  parameter except for the last row.
 
 We propose a normal hierarchical prior on the unconstrained model weights $(\alpha_{jk})_{k=1 }^{K-1}$ conditional on hyperparameters $\mu\in\R^{K-1}$ and $\sigma\in\R_+^{K-1}$,
 \begin{equation}\label{eq_prior}
 	\mathrm{prior:} \quad \alpha_{jk} \mid \mu_k, \sigma_k \sim \n (\mu_k, \sigma_k),  ~ k=1, \dots, K-1, ~ j=1, \dots, J.
 \end{equation}
 The prior partially pools unconstrained weights  toward the shared  mean  $(\mu_1, \dots, \mu_{K-1})$. The shrinkage effect depends on both the cell sample size $n_j$ (how strong the likelihood is in cell $j$), and the model-specific $\sigma_{k}$ (how much across-cell discrepancy is allowed in model $k$). If $\mu$ and $\sigma$ are given constants, and if the posterior distribution is summarized by its mode, then hierarchical stacking contains two special cases: 
 \begin{itemize}[noitemsep,topsep=0pt,  itemsep=2pt]
 	\item no-pooling stacking by a flat prior  $\sigma_k \to \infty, ~k=1, \dots, K-1$.
 	\item complete-pooling stacking by a concentration prior  $\sigma_k \to 0, ~k=1, \dots, K-1$.
 \end{itemize}
 It is possible to derive other structured priors. For example, a sparse prior   \citep[e.g.,][]{heiner2019structured} on simplex $(w_{j1}, \dots, w_{jK})$ will enforce a cell-wise selection.  % We  place priors on  $\alpha$ as this allows more flexibility than the often-default Dirichlet prior on $w_{j\cdot}$.  
 
 %The parameter $\sigma_k$ can be interpreted as the level of regularization across cells. 
 Instead  of  choosing fixed values, we view  $\mu$ and $\sigma$ as hyperparameters and aim for a full Bayesian solution: to describe the uncertainty of all parameters by their joint posterior distribution $p(\alpha, \mu, \sigma|\mathcal{D})$, letting the data to tell how much regularization is desired. 
 
 To accomplish this Bayesian inference,  we assign a hyperprior to $(\mu, \sigma)$:
 \begin{equation}\label{eq_hprior}
 	\mathrm{hyperprior:} \quad    \mu_k\sim \n(\mu_0, \tau_\mu),  \quad \sigma_k \sim \n^+(0, \tau_\sigma), \quad  k=1, \dots, K-1,
 \end{equation}
 where $\n^+(0, \tau_\sigma)$ stands for the half-normal distribution supported on $[0,\infty)$ with scale parameter $\tau_\sigma$. 
 
 Putting the pieces \eqref{stacking_obj_h}, \eqref{eq_prior},  \eqref{eq_hprior} together, up to a normalization constant that has been omitted,  we attain a joint posterior density of all free parameters $\alpha\in \R^{J\times K}, \mu\in \R^{K-1}, \sigma\in \R_{+}^{K-1}$:
 \begin{multline}\label{stacking_obj_h2}
 	\log p(\alpha, \mu, \sigma | \mathcal{D}) = \sum_{i=1}^n \log  \left( \sum_{k=1}^K w_k(x_i) p_{k,-i}\right) +\\   \sum_{k=1}^{K-1}\sum_{j=1}^J \log p^{\mathrm{prior}} \left( \alpha_{jk}|   \mu_k, \sigma_k \right)  \sum_{k=1}^{K-1} \log p^ {\mycom{\text{hyper}}{\text{prior}}} \left( \mu_k, \sigma_k \right).
 \end{multline}
 Unlike complete and no-pooling stacking, which are typically solved by optimization, the maximum a posteriori (MAP) estimate of \eqref{stacking_obj_h2} is not meaningful: the  mode is attained at the complete-pooling subspace $\alpha_{jk}=\mu_{k}, \sigma_k=0, \forall j, k,$ on which the  joint density is positive infinity.  Instead, we sample $(\alpha, \mu, \sigma)$ from this joint density \eqref{stacking_obj_h2} using Markov chain Monte Carlo  (MCMC) methods
 %Afterward, we  transform back model weights $w_{jk}$  using softmax \eqref{eq_soft_max} and 
 and compute the Monte Carlo  mean of posterior draws $\overline w_{jk}$, which we will call \emph{hierarchical stacking} weights. 
 
 The final posterior predictive density of outcome $\tilde y$ at any input location $(\tilde x, \tilde z)$ is  
 \begin{equation}\label{eq_point_stacking_final}
 	\mathrm{final~predictions: } ~   p(\tilde y| \tilde x, \tilde z, \mathcal{D} ) = \sum_{k=1}^K     \overline w_k(\tilde x)  \int_{\Theta_k} p(\tilde y|\tilde  x,  \tilde z, \theta_k,  M_k) p (\theta_k| M_k, \mathcal{D} )d\theta_k.
 \end{equation}
 Using a  point estimate $\overline w_{jk}$  is not a waste  of the joint simulation draws. Because equation \eqref{eq_point_stacking_final} is a linear expression on $w_k$, and because of the linearity of expectation, using $\overline \w$ is as good as using all simulation draws.
 %---Bayesian methods can be compatible with point estimates too. 
 Nonetheless, for the purpose of post-processing, approximate cross validation, and extra model check and comparison,  we will use all posterior simulation draws; see discussion in Section \ref{sec_stacking_post}.

 \subsection{Hierarchical stacking: continuous and hybrid inputs} 
 The next step is to include more structure in the weights, which could correspond to regression for continuous predictors, nonexchangeable models for nested or crossed grouping factors, nonparametric prior,  or combinations of these.
 
 \subsubsection*{Additive model}\label{sec_additive}
 Hierarchical stacking is not limited to discrete cell-divider $x$. 
 When the input $x$ is continuous or hybrid,  
 one extension is to model the unconstrained weights additively:  
 \begin{align}\label{eq_continous_x}
 	w_{1:K}(x)&= \mathrm{softmax}(w^*_{1:K}(x)),\nonumber\\ w^*_k(x)&= \mu_k + \sum_{m=1}^M \alpha_{mk} f_{m}(x), ~k\leq K-1,  ~w^*_{K}(x)=0, 
 \end{align}
 where $\{f_{m}: \mathcal{X}\to \R\}$ are $M$ distinct features. Here we have already extracted  the prior mean $\mu_k$, representing the ``average'' weight of model $k$ in the unconstrained space.
 The discrete model  \eqref{eq_prior} is now equivalent to letting $f_{m}(x)= \mathbbm{1}(x=m)$ for $m=1, \dots, J$.  We may still use the basic prior \eqref{eq_prior} and hyperprior \eqref{eq_hprior}:
 \begin{equation}\label{eq_continous_x_prior}
 	\alpha_{mk} \mid   \sigma_k \sim \n (0, {\sigma}_k), \quad \mu_k\sim \n(\mu_0, \tau_\mu),  \quad \sigma_{k} \sim \n^+(0, \tau_\sigma).
 \end{equation}
 %to achieve partial pooling and hierarchical regularization, and sample $\alpha, \mu, \sigma| \mathcal{D}$  from the joint posterior density. 
 We provide \texttt{Stan} \citep{stan2020} code for this additive model and discuss practical hyperparameter choice in Appendix \ref{sec_stan}.
 
 Because the main motivation of our paper is to convert the one-fit-all model-averaging algorithm into open-ended Bayesian modeling,  the basic shrinkage prior above  should be viewed as a starting point for model building and improvement.  Without trying to exhaust all possible variants, we list a few useful prior structures:
 \begin{itemize}
 	\item \emph{Grouped hierarchical  prior}. 
 	The basic model  \eqref{eq_continous_x_prior} is limited to have a same regularization  $\sigma_k$ for all $\alpha_{mk}$.  When the features  $f_{m}(x)$ are grouped (e.g., $f_{m}$ are dummy variables from two discrete inputs; states are grouped in regions), we achieve group specific shrinkage by replacing \eqref{eq_continous_x_prior}  by 
 	$$\alpha_{mk}\mid \sigma_{gk} \sim \n(0,\sigma_{g[m]k}), ~ \mu_{k}\sim \n(\mu_0, \tau_\mu), ~ \sigma_{gk}\sim \n^+(0, \tau_\sigma),
 	$$ where $g[m]= 1, \dots G$ is the group index of feature $m$.
 	\item \emph{Feature-model   decomposition}. Alternatively we can learn  feature-dependent regularization by 
 	$$ \alpha_{mk} \mid \mu_k, \sigma_k,  \lambda_m\sim \n (0, \sigma_k \lambda_m),~\lambda_m\sim \mathrm{InvGamma}(a, b),  ~\sigma_{k}\sim \n^+(0, \tau_\sigma).$$
 	\item \emph{Prior correlation}. For discrete cells,  we would like to incorporate prior knowledge of the group-correlation. For example in  election forecast (Section \ref{sec_election}), we have a rough sense of some states being demographically close, and would expect a similar model weights therein. To this end,  we calculate a prior correlation matrix $\Omega_{J\times J}$ from  various sources of state level historical data, and  replace the independent prior \eqref{eq_prior} by a multivariate normal (MVN) distribution,
 	\begin{equation}\label{eq_corr}
 		(\alpha_{1k}, \dots,  \alpha_{jk}) \mid \sigma, \Omega, \mu  \sim \mathrm{MVN} \left((\mu_{k}, \dots, \mu_{k}), \mathrm{diag}(\sigma_{k}^2) \times \Omega~\right). 
 	\end{equation}
 	The prior correlation is especially useful to stabilize stacking weights in small cells.
 	\item \emph{Crude approximation of input  density.} When applying the basic model \eqref{eq_continous_x} to  continuous inputs $x=(x^1, \dots, x^D)\in \R^D$,  instead of a direct linear regression $f_d(x)=x^d$,   we recommend a  coordinate-wise ReLU-typed transformation: 
 	\begin{equation}\label{eq_relu}
 		\{f: f_{2d-1}(x)=(x^d- \mathrm{med}(x^d) )_{+},~~ f_{2d}(x)= (\mathrm{med}(x^d) -x^d )_{-}, ~d\leq D  \},
 	\end{equation}
 	where $\mathrm{med}(x^d)$ is the sample median of $x^d$. The pointwise model predictive performance typically relies on the  training density $P_{X}^{\mathrm{train}}(\tilde x)$: The more training data seen nearby, the better  predictions. %Different models have  different sensitivity on training density. 
 	The feature \eqref{eq_relu} is designed to be a crude approximation of log marginal input densities.
 \end{itemize}
 
 \subsubsection*{Choice of features and exploratory data analysis} 
 Choosing the division of $(x,z)$ in discrete inputs is now a more general problem on how to construct features $f_m(x)$, or a variable selection problem in a regression \eqref{eq_continous_x}.
 %Variable selection and  feature extraction are keys of the statistical workflow and can hardly be summarised by one paragraph.   
 In ordinary statistical modeling, we often start variable selection by  exploratory data analysis. Here we cannot directly associate  model weights $w_{ki}$ with observable quantities.  Nevertheless, we can use the paired pointwise log predictive density difference $\Delta_{ki}= (\log p_{k,-i} - \log p_{K,-i})$ as an exploratory approximation to the trend of $\alpha_{k}(x_i)$.  A scatter plot   of  $\Delta_{ki}$  against   $x$  may suggest which margin of $x$ is likely important. For example,  the dependence of   $\Delta_{ki}$ on whether $x_i$ is in the bulk or tail is an  evidence for our previous recommendation of the rectified features.
 
 As more variables $x$ are allowed to vary in the stacking model, model averaging is more prone to over-fitting.   Pointwise stacking typically has a large noise-to-signal ratio not only due to  model similarity, but also a high variance of pointwise model evaluation: the approximate leave-one-out cross validation possesses Monte Carlo errors; even if we run exact leave-one-out, or use an independent validation set in lieu of leave-one-out, we only observe one $y_i$ for one $x_i$ (if $x$ is continuous) such that $\log p_{k,-i} $ is at best an one-sample-estimate of $\E_{\tilde y| x_i} (\log p(\tilde y | x_i, M_k))$ with non-vanishing variance. 
 If  $f_m(\cdot)$ is flexible enough, then the sample optimum of  no-pooling stacking \eqref{eq_no-pooling}    always degenerates to pointwise model selection that pointwisely picks the model that ``best'' fits current realization of $y_i$: $w_{\arg\max_{k} p_{k,-i}} (x_i)=1$, which is purely over-fitting.  
 %Such pointwise uncertainty is particularly a concern if  $y$ is discrete:  Consider a Bernoulli model at a point with $p\approx 0.5$, and assuming model 1 always predicts 1 and model 2 always predicts 0, then the pointwise selection scheme above picks a random model by a coin flip.   
 %This uncertainty  highlights the necessity of   hierarchical regularization. 
 
 Even in companion with hierarchical priors,  we do not expect to include too many features on which stacking weights depend on. In our experiments,  an additive model with discrete variables and rectified continuous variables without interaction is often adequate. After standardizing all features such that Var$(f_m(x))=1$, we typically use a generic informative prior setting $\tau_\mu = \tau_\sigma=1$ in experiments. 
 With a moderate or large  number of features/cells, $M$, it is sensible to scale the  hyperprior  $\tau_\sigma = \mathcal{O}(\sqrt {1/M}),$  or adopt other feature-wise shrinkage priors such as  horseshoe for better regularization.
 
 \subsubsection*{Gaussian process prior}
 An alternative way to generalize both the discrete prior in Section \ref{sec_discrete_method} and the prior correlation \eqref{eq_corr} is  Gaussian process priors. To this end we need $K-1$ covariance kernels $\mathcal{K}_1, \dots, \mathcal{K}_{K-1}$, and place priors on the unconstrained weight $\alpha_k(x)$, viewed as an $\X \to \R$ function:  
 $\alpha_k(x) \sim \mathcal{GP}(\mu_k, \mathcal{K}_k(x))$. The discrete prior is a special case of a Gaussian process via a zero-one kernel $\mathcal{K}_k(x_i, x_j)= \sigma_{k} \mathbbm{1}(x_i=x_j).$  Due to the previously discussed measurement error and the preference on stronger regularization for continuous $x$,  we recommend simple exponentiated quadratic kernels $\mathcal{K}_k(x_i, x_j)=  a_k \exp(-  \left((x_i - x_j) / \rho_k \right)^2 )$ with an informative hyperprior that avoids too small or too big length-scale $\rho_k$, and too big $a_k$. We present an example in Section  \ref{sec_gp}.

 \subsection{Time series and  longitudinal  data} \label{sec:timeseries} 
 Hierarchical stacking can easily extend to  time series and longitudinal data. Consider a time series dataset where outcomes $y_i$ come sequentially in time $0 \leq t_i \leq T$.  
 The joint likelihood is not exchangeable,  but  still factorizable via $p(y_{1:n}|\theta)= \prod_{i=1}^n p(y_{i}| \theta,  y_{1:(i-1)}).$ Therefore, assuming some stationary condition,  we can approximate the  expected log predictive densities of the next-unit unseen outcome by historical average of one-unit-ahead log predictive densities, defined by 
 $$p_{k,-i} \coloneqq 
 %p(y_{i}|  x_{i}, \{y_{1:(i-1)}, x_{1:(i-1)}\}, M_k)=
 \int_{\Theta_k} p(y_{i}|  x_{i},y_{1:(i-1)}, x_{1:(i-1)}, \theta_k, M_k)p(\theta_k|y_{1:(n-1)}, x_{1:(n-1)}) d\theta_k.$$  
 In hierarchical stacking, we only need to replace the regular leave-one-out  predictive density \eqref{eq_pki} by this redefined $p_{k,-i}$,  and run hierarchical stacking
 \eqref{stacking_obj_h2} as usual.   Using importance sampling based approximation \citep{burkner2019approximate}, we also make efficient computation  without the need to fit each model $n$ times. 
 
 If we worry about time series being non-stationary, we can reweight  the likelihood  in \eqref{stacking_obj_h2}  by a non-decreasing sequence $\pi_i$:
 $n \sum_{i=1}^n \left( \pi_i \log  \left( \sum_{k=1}^K w_k(x_i) p_{k,-i}\right)\right) /\sum_{i=1}^n \pi_i$, so as to  emphasize more recent dates.
 For example,  $\pi_i = 1+ \gamma - (1-t_i/T)^2$, where a fixed parameter  $\gamma>0$ determines how much influence early 
 data has. 
 By appending $x\coloneqq(x, t)$,  the stacking weight can vary across the time variable,  too.

 In Section \ref{sec_election}, we present an election example with  longitudinal polling data  (40 weeks $\times$ 50 states).   For the $i$-th poll (already ordered by date), we encode state index into input $x_i= 1, \dots, 50$, all other poll-specific variables $z_i$, data $t_i$,  and poll outcome $y_i$. 
 We compute the one-week-ahead predictive density   $p_{k,-i} \coloneqq  \int\!p(y_{i}| x_{i},  z_{i},  
 \mathcal{D}_{-i}, M_k) p(\theta_k | \mathcal{D}_{-i}, M_k )d\theta_k $
 where  the dataset $\mathcal{D}_{-i} =  \{(y_{l}, x_{l}, z_{l}):  t_l \leq t_i-7\}$ contains polls from all states up to one week before date $t_i$.

 \section{Why model averaging works and why hierarchical stacking can work better}\label{sec_bound} 
 
 %proved that   under some regularity  conditions,  for any given set of weights $w_1 \dots w_K$, as sample size $n \to \infty$, the following asymptotic limit holds:
 %	$	\frac{1}{n}\sum_{i=1}^n \log \left( \sum_{k=1}^K  w_k \hat p_{k,  -i} (y_i) \right) -  \mathrm{E}_{\tilde y| y} \log \left(  \sum_{k=1}^K w_k p(\tilde y| y , M_k)  \right)          \xlongrightarrow{\text{$L_2$}}   0.	$

 The consistency of leave-one-out cross validation ensures that complete-pooling  stacking  \eqref{stacking_obj} is asymptotically no worse than model selection in  predictions  \citep{clarke2003comparing, le2017bayes}, hence justified by Bayesian decision theory.  
 The theorems we establish in Section \ref{sec_bound_2} go a step further, providing lower bounds on the utility gain of stacking and pointwise stacking. In short, model averaging is more pronounced when the model predictive performances are locally separable, but in the same situation, we can improve the linear mixture model by learning locally which model is better, so that the stacking is a step toward model improvement rather than an end to itself.   We illustrate with a theoretical example in Appendix~\ref{theory_example} and provide proofs in Appendix~\ref{sec_proof}.
 
 %\subsection{Rethinking  \texorpdfstring{$\mathcal{M}$}{M}-views:  The truer, the better?}
 \subsection{All models are wrong, but some are  somewhere useful}
 With an $\mathcal{M}$-closed view \citep{bernardo1994bayesian}, one of the candidate models is the true data generating process, whereas in the more realistic $\mathcal{M}$-open scenario, none of the candidate models is completely correct, hence  models are evaluated to the extent that they interpret the data.
 
 The expectation of a strictly proper scoring rule, such as the expected log predictive density (elpd), is maximized at the  correct data generating process. However, the extent to which a model is ``true'' is contingent on the input information we have collected. Consider an  input-outcome pair $(x,y)$ generated by
 $$ x \in  [0,1], ~y \in \{0, 1\}, ~~~ x\sim \mathrm{uniform}(0,1),     ~\Pr(y=1|x)= x.$$
 If the input $x$ is not observed or is omitted in the analysis,  then
 $M_1: y\sim \mathrm{Bernoulli} (0.5)$
 is the only correct model and is optimal among all probabilistic predictions of $y$ unconditioning on $x$. But this marginally {\em true} model is strictly worse than a {\em misspecified} conditional prediction,
 $M_2: \Pr(y=1|x)= \sqrt{x}, $
 since the expected log predictive densities are $\log(0.5)=-0.69$ and $-\frac{7}{12}=-0.58$ respectively after averaged over $x$ and $y$. The former model is true purely because it ignores some predictors. 
 
 This wronger-model-does-better example does not contradict the log score being strictly proper, as we are changing the decision space from measures on $y$ to conditional measures on $y|x$. 
 %\subsection{All models are wrong, yet some are  somewhere useful}
 But this example does underline two properties of model evaluation and averaging. First, we have little interest in a binary model check.  
 %The correct model $M_1$ passes the hypothesis-testing based model validation conditioning on $y$ but is  irrelevant to model fit or prediction accuracy. %In general,
 The hypothesis testing based model-being-true-or-false depends on what variables to condition on and is not necessarily related to model fit or prediction accuracy. 
 In a non-quantum scheme,  a really ``everywhere true" model that has exhausted all potentially unobserved inputs contains no aleatory uncertainty.
 %thereby making deterministic and exact predictions, and having elpd being infinite (continuous outcomes) or zero (discrete), which is never feasible  practice.   
 %But this example does reflect the tension between what  \citeauthor{breiman2001statistical} called  ``the two cultures of statistical modeling.'' 
 Second, the model fits typically vary across the input space.  In the Bernoulli example, despite its larger overall error, $M_1$  is more desired near $x\approx .5$, and is optimal at  $x=.5$.
 %Instead of using marginal likelihood or  scoring rules  as an one-number-summary of the overall  model fit, a finer grained model inspection checks the   $x-$conditional model performance, which we will elaborate in the present paper. 

 For theoretical interest, we define the conditional (on $\tilde x$) expected (on $\tilde y|\tilde x$) log predictive density in the $k$-th model,   
 $\mathrm{celpd}_{k} (\tilde x) \coloneqq \int_{\mathcal{Y}} p_t(\tilde y | \tilde x)  \log     p(\tilde y|\tilde x,  M_k)  d \tilde y.$
 If $\{\mathrm{celpd}_{k}\}_{k=1}^K$ are known,  we can divide the input space $\X$ into $K$ disjoint sets based on which model has the \emph{locally} best fit (When there is a tie,  the point is assigned the smallest index, and $\mathcal{I}$ stands for ``input''):
 \begin{equation}\label{eq_ik}
 	\mathcal{I}_k \coloneqq\{\tilde x \in \X:  \mathrm{celpd}_{k} (\tilde  x) >\mathrm{celpd}_{k^\prime} (\tilde  x), \forall k^{\prime}\neq k \}, ~ k=1, \dots, K.  
 \end{equation}
 In this Bernoulli example,  $\mathcal{I}_1= [0.25,0.67]$.
 
 \subsection{The gain from stacking,  and what can be gained more}\label{sec_bound_2} 
 In this subsection, we focus on the oracle expressiveness power of model selection and averaging,  and their input-dependent version.
 $\w^{\mathrm{stacking, cp}}$ refers to the complete-pooling stacking weight in the  population:
 \begin{align}\label{eq_stacking_population}
 	\w^{\mathrm{stacking, cp}}&\coloneqq \arg\max_{\w\in \mathcal{S_K}} \mathrm{elpd} (\w),\nonumber\\ 
 	\mathrm{elpd} (\w)&=\int_{\mathcal{X}  \times \mathcal{Y}} \log\left( \sum_{k=1}^K  w_k p(\tilde y| M_k, \tilde x)\right) p_t(\tilde y,  \tilde x) d\tilde y d\tilde x. 
 \end{align}
 
 Apart from the heuristic that  model averaging is likely to be more useful when candidate models are more ``dissimilar"  or ``distinct" \citep{breiman1996stacked, clarke2003comparing},  we are not aware of rigorous theories that characterize this ``diversity'' regarding the effectiveness of stacking.   
 It seems tempting  to use some divergence measure between posterior predictions from each model as a metric of how close these models are, but this is  irrelevant to the true data generating process. 
 %In an theoretical counterexample in Appendix \ref{theory_example},  stacking weights become more polarized as models get closer. 
 
 We define a more relevant metric on  how individual predictive distributions can be \emph{pointwisely} separated.
 The description of a forecast being good is probabilistic on both $\tilde x$ and $\tilde y$: an overall bad forecast may be lucky at an one-time  realization of  outcome $\tilde y$ and covariate $\tilde x$.  We consider the input-output product space $\mathcal{X}\times \mathcal{Y}$ and divide it into $K$ disjoints subsets ($\mathcal{J}$ stands for ``joint''):
 $$\mathcal{J}_k \coloneqq\{(\tilde x, \tilde y)\in \mathcal{X}\times \mathcal{Y}: p(\tilde y|M_k, \tilde x)> p(\tilde y|M_{k^\prime}, \tilde x),  \forall k^\prime\neq k\}, ~ k=1, \dots, K.$$  In this framework,  we call a family of predictive densities $\{p(\tilde y| M_k, \tilde x)\}_{k=1}^K$ to be locally separable with a constant pair  $L>0$ and  $0\leq \epsilon< 1$, if 
 \begin{equation}\label{eq_sep_y}
 	\sum_{k=1}^K \int_{(\tilde x, \tilde y)\in \mathcal{J}_k  } \mathbbm{1}\Big(  \log p(\tilde y|M_k, \tilde x)<   \log p(\tilde y|M_{k^{\prime}}, \tilde x)+ L,  \; \forall    k^{\prime} \neq k \Big)  p_t(\tilde y,  \tilde x) d\tilde y d\tilde x \leq \epsilon.
 \end{equation}
 
 Stacking is sometimes criticized for being a black box.  The next two theorems link stacking weight to a probabilistic explanation. Unlike Bayesian model averaging \citep{hoeting1999bayesian} that computes the probability of a model being ``true'',   stacking is more related to $\Pr(\mathcal{J}_k)$: the  probability of a model being the locally ``best'' fit, with respect to the true joint measure  $p_t(\tilde y,  \tilde x)$.   
 
 \begin{theorem}\label{thm_stacking_seperate_y} 
 	When the separation condition \eqref{eq_sep_y} holds, the complete pooling stacking weight  is  approximately  the probability of the model being the locally best fit:  
 	\begin{equation}\label{eq_opt_approx}
 		w^{\mathrm{stacking, cp}}_k  \approx  w^{\mathrm{approx}}_k \coloneqq \Pr(\mathcal{J}_k)= \int_{\mathcal{J}_k}    p_t(\tilde y,  \tilde x)d\tilde y d\tilde x,
 	\end{equation}
 	in the sense  that the objective function is nearly optimal: 
 	\begin{equation}\label{eq_opt_bound}
 		|\,\mathrm{elpd}(\w^{\mathrm{approx}}) - \mathrm{elpd}(\w^{\mathrm{stacking, cp}})\,| \leq   \mathcal{O}(\epsilon + \exp(-L)).
 	\end{equation} 
 \end{theorem}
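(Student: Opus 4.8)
The plan is to exploit the concavity of the population objective together with the optimality of $\w^{\mathrm{stacking,cp}}$, reducing the claim to a first-order (supergradient) comparison evaluated at $\w^{\mathrm{approx}}$. Throughout I write $p_k\equiv p(\tilde y\mid M_k,\tilde x)$ for brevity. Since $\mathrm{elpd}(\w)=\int\log(\sum_k w_k p_k)\,p_t\,d\tilde y\,d\tilde x$ is concave in $\w$ (a $p_t$-average of logs of affine functions) and $\w^{\mathrm{stacking,cp}}$ maximizes it over $\mathcal{S}_K$, the quantity in \eqref{eq_opt_bound} equals the nonnegative gap $\mathrm{elpd}(\w^{\mathrm{stacking,cp}})-\mathrm{elpd}(\w^{\mathrm{approx}})$, so only an upper bound is needed. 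First I would record the supergradient inequality
\begin{equation*}
\mathrm{elpd}(\w^{\mathrm{stacking,cp}})-\mathrm{elpd}(\w^{\mathrm{approx}})\ \le\ \sum_{k=1}^K g_k\,\bigl(w^{\mathrm{stacking,cp}}_k-w^{\mathrm{approx}}_k\bigr),\qquad g_k=\int \frac{p_k}{\sum_j w^{\mathrm{approx}}_j p_j}\,p_t\,d\tilde y\,d\tilde x .
\end{equation*}

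The next step is an Euler-type identity: since $\sum_k w^{\mathrm{approx}}_k g_k=\int \frac{\sum_k w^{\mathrm{approx}}_k p_k}{\sum_j w^{\mathrm{approx}}_j p_j}\,p_t\,d\tilde y\,d\tilde x=1$, I may subtract it (the weights match totals) to get
\begin{equation*}
\mathrm{elpd}(\w^{\mathrm{stacking,cp}})-\mathrm{elpd}(\w^{\mathrm{approx}})\ \le\ \sum_{k} g_k\, w^{\mathrm{stacking,cp}}_k-1\ =\ \int \frac{\sum_k w^{\mathrm{stacking,cp}}_k p_k}{\sum_j w^{\mathrm{approx}}_j p_j}\,p_t\,d\tilde y\,d\tilde x-1 .
\end{equation*}
Everything then reduces to showing this last integral is $1+\mathcal{O}(\epsilon+e^{-L})$.

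I would then split the integral using the separation structure. On the well-separated region $S$ — the complement of the set in \eqref{eq_sep_y}, which carries $p_t$-mass at least $1-\epsilon$ — the locally best model $k^{*}$ beats every competitor, $p_{k'}\le e^{-L}p_{k^{*}}$, so the denominator obeys $\sum_j w^{\mathrm{approx}}_j p_j\ge w^{\mathrm{approx}}_{k^{*}}p_{k^{*}}$ while the numerator is at most $p_{k^{*}}(w^{\mathrm{stacking,cp}}_{k^{*}}+e^{-L})$. Integrating cellwise over $\mathcal{J}_{k^{*}}\cap S$ and using $\Pr(\mathcal{J}_{k^{*}})=w^{\mathrm{approx}}_{k^{*}}$, each cell contributes at most $w^{\mathrm{stacking,cp}}_{k^{*}}+e^{-L}$, so the $S$-part is at most $1+Ke^{-L}$. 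On the ambiguous region $A$ (mass $\le\epsilon$) I would use the crude but uniform bound $\frac{\sum_k w^{\mathrm{stacking,cp}}_k p_k}{\sum_j w^{\mathrm{approx}}_j p_j}\le 1/w_{\min}$, where $w_{\min}=\min\{w^{\mathrm{approx}}_k:w^{\mathrm{approx}}_k>0\}$, bounding the $A$-part by $\epsilon/w_{\min}$. Adding the pieces and subtracting $1$ yields the gap $\le Ke^{-L}+\epsilon/w_{\min}=\mathcal{O}(\epsilon+e^{-L})$, with $K$ and $w_{\min}$ absorbed into the constant.

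I expect the main obstacle to be the precise use of \eqref{eq_sep_y}, and in particular its quantifier over competitors: the clean cellwise estimate above needs the locally best model to dominate \emph{every} other model by a factor $e^{L}$ on $S$ (equivalently, the best and second-best to lie within $L$ on the exceptional set $A$), and one must read \eqref{eq_sep_y} so that its complement delivers exactly this — otherwise a nearly-tied runner-up on $S$ destroys the approximation $\sum_j w^{\mathrm{approx}}_j p_j\approx w^{\mathrm{approx}}_{k^{*}}p_{k^{*}}$. A secondary, purely technical point is integrability and the treatment of models with $w^{\mathrm{approx}}_k=0$ (which occupy a $p_t$-null region and may be discarded), together with the $1+\mathcal{O}(e^{-L})$ denominator corrections; these only tighten the inequalities and so do not affect the stated order.
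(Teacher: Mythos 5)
Your proposal is correct, and it takes a genuinely different route from the paper's. The paper introduces a surrogate objective $\mathrm{elpd}^{\mathrm{surrogate}}(\w)=\sum_{k}\int_{\mathcal{J}_k}\log\left(w_k p_k\right)p_t$, shows via Jensen's inequality that it is maximized in closed form at $w_k=\Pr(\mathcal{J}_k)$, bounds $|\mathrm{elpd}(\w)-\mathrm{elpd}^{\mathrm{surrogate}}(\w)|$ at the two weight vectors by $\bigl(\sum_k (1-w_k)/w_k\bigr)(\epsilon+e^{-L})$ using the same separated/exceptional splitting you use, and finishes with a triangle inequality; zero entries of $\w^{\mathrm{stacking,cp}}$ then need a separate patch that invokes Theorem \ref{thm_winning_prob}. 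You instead linearize once at $\w^{\mathrm{approx}}$: your supergradient step plus the Euler identity is exactly the integrated form of the pointwise inequality $\log t\le t-1$ applied to $t=q/r$ with $q=\sum_k w_k^{\mathrm{stacking,cp}}p_k$ and $r=\sum_j w_j^{\mathrm{approx}}p_j$ (so differentiability of $\mathrm{elpd}$ is not even needed), and the whole problem collapses to the likelihood-ratio bound $\int (q/r)\,p_t \le 1+Ke^{-L}+\epsilon/w_{\min}$, which your cellwise estimate delivers. What your route buys: you never divide by a stacking weight, so the $w_k^{\mathrm{stacking,cp}}=0$ edge case that costs the paper an extra removal argument is vacuous for you; you get an explicit constant; and the argument is shorter. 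What the paper's route buys: the surrogate and its closed-form maximizer are reused as the key lemma in the proofs of Theorems \ref{theo_lower_stacking} and \ref{theo_lower_selection}, so the extra structure is amortized over the rest of the section. Two caveats are shared by both proofs: the $\mathcal{O}(\cdot)$ absorbs weight-dependent factors ($1/w_{\min}$ for you, $\sum_k(1-w_k)/w_k$ for the paper), and both need the reading of \eqref{eq_sep_y} that you flag as the main obstacle --- on the non-exceptional set the locally best model must dominate \emph{every} competitor by a factor $e^{L}$ --- which is precisely how the paper's own proof uses the condition in its bound on $\mathcal{J}_k^{\bullet}$, so that concern resolves in your favor.
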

 Further, a model is only ignored by stacking if its winning probability is low.
 \begin{theorem}\label{thm_winning_prob}
 	When the separation condition \eqref{eq_sep_y} holds,  and if the $k$-th model has zero weight in stacking, $w_k^\mathrm{stacking, cp}=0$,
 	then the probability of its winning region is bounded by:
 	\begin{equation}\label{eq_opt_bound_2}\Pr(\mathcal{J}_k) \leq  \left(1+ (\exp(L)-1) (1-\epsilon) +  \epsilon\right)^{-1}.
 	\end{equation}
 	The right-hand side can be further upper-bounded by  $\exp(-L)+  \epsilon$.
 \end{theorem}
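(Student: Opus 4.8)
The plan is to treat the population complete-pooling objective $\mathrm{elpd}(\w)$ from \eqref{eq_stacking_population} as a concave function on the simplex $\mathcal{S}_K$ (it is an integral of the $\log$ of a linear form in $\w$), so that first-order stationarity is both necessary and sufficient for optimality. First I would write the directional Karush--Kuhn--Tucker conditions at the optimum $\w^{\mathrm{stacking, cp}}$. Writing the partial derivative as $g_k \coloneqq \int \frac{p(\tilde y|M_k,\tilde x)}{\sum_{j} w^{\mathrm{stacking, cp}}_j\, p(\tilde y|M_j,\tilde x)}\, p_t(\tilde y,\tilde x)\, d\tilde y\, d\tilde x$, the simplex constraint supplies a single multiplier $\lambda$ with $g_k=\lambda$ whenever $w^{\mathrm{stacking, cp}}_k>0$ and $g_k\le\lambda$ whenever $w^{\mathrm{stacking, cp}}_k=0$. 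Since $\sum_k w^{\mathrm{stacking, cp}}_k g_k=\int 1\cdot p_t=1$, the multiplier is forced to be $\lambda=1$, so any model ignored by stacking obeys $g_k\le 1$.

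Next I would exploit the separation structure to lower-bound $g_k$. On the winning region $\mathcal{J}_k$ the density $p(\tilde y|M_k,\tilde x)$ dominates every competitor, and because $w^{\mathrm{stacking, cp}}_k=0$ the denominator is a convex combination of the \emph{losing} densities only. Splitting $\mathcal{J}_k$ into its well-separated part $\mathcal{J}_k^{\mathrm{sep}}$ (where $k$ beats the runner-up by at least $L$, so each loser is at most $e^{-L}p(\tilde y|M_k,\tilde x)$ and the integrand is $\ge e^{L}$) and its ambiguous part $\mathcal{J}_k^{\mathrm{amb}}$ (where the integrand is still $\ge 1$), and using that the separation condition \eqref{eq_sep_y} caps the ambiguous mass by $\epsilon$, gives $e^{L}\Pr(\mathcal{J}_k^{\mathrm{sep}})+\Pr(\mathcal{J}_k^{\mathrm{amb}})\le g_k\le 1$. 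Combining with $\Pr(\mathcal{J}_k)=\Pr(\mathcal{J}_k^{\mathrm{sep}})+\Pr(\mathcal{J}_k^{\mathrm{amb}})$ and $\Pr(\mathcal{J}_k^{\mathrm{amb}})\le\epsilon$ yields the clean relaxation $\Pr(\mathcal{J}_k)\le e^{-L}(1-\epsilon)+\epsilon\le \exp(-L)+\epsilon$, which already proves the final sentence of the theorem.

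The hard part will be sharpening this to the stated constant $\left(1+(e^L-1)(1-\epsilon)+\epsilon\right)^{-1}$, which is strictly smaller than the relaxation above. The single directional inequality $g_k\le 1$ is not enough on its own: a naive construction saturating $g_k=1$ appears to permit $\Pr(\mathcal{J}_k)$ as large as $e^{-L}(1-\epsilon)+\epsilon$, so the extra tightness must come from a constraint I have not yet used---namely that each $p(\tilde y|M_k,\tilde x)$ is a \emph{normalized} density in $\tilde y$. This normalization couples the winning regions $\{\mathcal{J}_{k'}\}$ across $\tilde y$ at every fixed $\tilde x$ and forbids model $k$ from being simultaneously dominant on a large $\mathcal{J}_k$ and negligible in the denominator everywhere else, which is precisely what the loose argument silently allowed by discarding the integrand outside $\mathcal{J}_k$.

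To encode this I would work slice-by-slice in $\tilde x$, integrating the normalization identity $\int_{\mathcal{Y}} p(\tilde y|M_k,\tilde x)\,d\tilde y=1$ against the optimality condition, and equivalently replace the infinitesimal perturbation by a finite step $\w(\delta)=(1-\delta)\w^{\mathrm{stacking, cp}}+\delta e_k$ toward the $k$-th vertex, using $\mathrm{elpd}(\w(\delta))\le \mathrm{elpd}(\w^{\mathrm{stacking, cp}})$, bounding the integrand region-by-region, and optimizing over $\delta\in(0,1]$ to surface the factor $(e^L-1)$. Carrying the normalization bookkeeping through cleanly---so that the mass outside $\mathcal{J}_k$ is charged against, rather than dropped from, the inequality---is where the real work lies; the looser $\exp(-L)+\epsilon$ is exactly what survives when that bookkeeping is omitted.
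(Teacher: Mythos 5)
Your completed argument is, in substance, the paper's own proof. The paper also starts from first-order optimality at the zero-weight vertex: writing $p_0=\sum_{k'\neq k} w^{\mathrm{stacking,cp}}_{k'}\,p(\cdot\,|\,\cdot,M_{k'})$ for the stacked density of the positively weighted models, it uses $\frac{d}{dw}\E\log\bigl(w\,p_k+(1-w)\,p_0\bigr)\big|_{w=0}\leq 0$, which is exactly your multiplier condition $g_k\leq\lambda=1$; it then splits $\mathcal{J}_k$ into the well-separated part (the paper's $\mathcal{J}_k^{\bullet}$, your $\mathcal{J}_k^{\mathrm{sep}}$, where $p_k/p_0\geq e^{L}$) and the ambiguous part ($\mathcal{J}_k^{\circ}$, your $\mathcal{J}_k^{\mathrm{amb}}$, where only $p_k/p_0\geq 1$ holds), discarding the nonnegative contribution outside $\mathcal{J}_k$. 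So the bound you rigorously establish, $\Pr(\mathcal{J}_k)\leq e^{-L}(1-\epsilon)+\epsilon\leq e^{-L}+\epsilon$, is obtained by the same route the paper takes.

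The divergence is exactly at the step you deferred as ``the hard part,'' and your instinct that the first-order condition plus the split cannot deliver the constant $\bigl(1+(e^{L}-1)(1-\epsilon)+\epsilon\bigr)^{-1}$ is correct---but the conclusion to draw is not that you are missing a trick. The paper's own passage from $1\geq\Pr(\mathcal{J}_k)\bigl(1+\E[(p_k-p_0)/p_0\mid\mathcal{J}_k]\bigr)$ to the stated constant implicitly requires two things that condition \eqref{eq_sep_y} does not provide: first, that the \emph{conditional} ambiguous fraction $\Pr(\mathcal{J}_k^{\circ}\mid\mathcal{J}_k)$ is at most $\epsilon$, whereas \eqref{eq_sep_y} only bounds the unconditional mass $\Pr(\mathcal{J}_k^{\circ})\leq\epsilon$, and the conditional fraction can be as large as $\epsilon/\Pr(\mathcal{J}_k)$; second, that $(p_k-p_0)/p_0\geq 1$ (i.e., $p_k\geq 2p_0$) on the ambiguous part, whereas only $p_k>p_0$ is available there. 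Executed rigorously, the paper's computation lands exactly on your bound $e^{-L}(1-\epsilon)+\epsilon$ and nothing sharper, so the ``bookkeeping'' you propose to add (density normalization, finite steps toward the vertex) is aimed at a constant the paper itself does not legitimately establish. Note also that the only place this theorem is used---inside the proof of Theorem \ref{thm_stacking_seperate_y}, to dispose of models with zero stacking weight---invokes just the relaxation $\Pr(\mathcal{J}_k)<e^{-L}+\epsilon$, which your argument already delivers. I would therefore stop where your second paragraph stops and, if anything, flag the constant in \eqref{eq_opt_bound_2} as needing either a corrected proof or a weakening to $e^{-L}(1-\epsilon)+\epsilon$.
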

 
 The  separation condition \eqref{eq_sep_y} trivially holds for $\epsilon=1$ and an arbitrary $L$,  or for $L = 0$ and an arbitrary $\epsilon$, though in those cases the bounds \eqref{eq_opt_bound} and \eqref{eq_opt_bound_2} are too loose. To be clear, we only use the closed form approximation \eqref{eq_opt_approx} for  theoretical assessment. %We use exact optimization and sampling in practical algorithms.
 
 The next theorem  bounds the  utility gain from shifting  model selection to stacking:
 \begin{theorem}\label{theo_lower_stacking}
 	Under the  separation condition \eqref{eq_sep_y}, let  $\rho=\sup_{k} \Pr(\mathcal{J}_k)$, and a deterministic function $g(L,K, \rho, \epsilon)= L(1-\rho)(1-\epsilon)- \log K$, then the utility gain of stacking is  lower-bounded by
 	$$ \mathrm{elpd}_{\mathrm{stacking,cp}} - \sup_k \mathrm {elpd}_k \geq   \max \left(g(L,K, \rho)+ \mathcal{O}(\exp(-L)+ \epsilon), 0\right).$$
 \end{theorem}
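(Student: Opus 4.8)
The plan is to exploit the single structural fact that $\w^{\mathrm{stacking,cp}}$ \emph{maximizes} $\mathrm{elpd}(\cdot)$ over the simplex $\mathcal{S}_K$, as in \eqref{eq_stacking_population}. Since every point mass $\mathbf{e}_k$ lies in $\mathcal{S}_K$ and $\mathrm{elpd}(\mathbf{e}_k)=\mathrm{elpd}_k$, we immediately get $\mathrm{elpd}_{\mathrm{stacking,cp}}\geq\sup_k\mathrm{elpd}_k$, which already yields the trivial branch $\max(\cdot,0)$. It therefore suffices to produce one convenient \emph{witness} weight $\w^\dagger\in\mathcal{S}_K$ and to show $\mathrm{elpd}(\w^\dagger)-\sup_k\mathrm{elpd}_k\geq g+\mathcal{O}(\exp(-L)+\epsilon)$; optimality of stacking then transfers this lower bound to $\mathrm{elpd}_{\mathrm{stacking,cp}}$. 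Throughout I fix $k_0:=\arg\max_k\mathrm{elpd}_k$ and record the elementary fact $\Pr(\mathcal{J}_{k_0})\leq\rho$.

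First I would take the uniform witness $w^\dagger_k\equiv 1/K$ and analyze, pointwise in $(\tilde x,\tilde y)$, the integrand $\Delta(\tilde x,\tilde y):=\log\bigl(\tfrac1K\sum_{k}p(\tilde y|M_k,\tilde x)\bigr)-\log p(\tilde y|M_{k_0},\tilde x)$. Because $p(\tilde y|M_{k_0},\tilde x)$ is one of the $K$ summands, $\tfrac1K\sum_k p(\tilde y|M_k,\tilde x)\geq\tfrac1K p(\tilde y|M_{k_0},\tilde x)$, so $\Delta\geq-\log K$ everywhere. On the well-separated part of a cell $\mathcal{J}_k$ with $k\neq k_0$, the separation condition \eqref{eq_sep_y} gives $\log p(\tilde y|M_k,\tilde x)\geq\log p(\tilde y|M_{k_0},\tilde x)+L$, so keeping only the winning summand upgrades the bound to $\Delta\geq L-\log K$ there.

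Next I would integrate $\Delta$ against $p_t$. Writing $W$ for the well-separated set, on which \eqref{eq_sep_y} guarantees the margin and hence $\Pr(W)\geq1-\epsilon$, the two pointwise bounds give $\mathrm{elpd}(\w^\dagger)-\mathrm{elpd}_{k_0}\geq L\,\Pr(W\cap\mathcal{J}_{k_0}^{c})-\log K$. The event $W\cap\mathcal{J}_{k_0}^{c}$ is where the credit $L$ is legitimately earned, and $\Pr(W\cap\mathcal{J}_{k_0}^{c})\geq\Pr(W)-\Pr(\mathcal{J}_{k_0})\geq(1-\epsilon)-\rho$. Combining with optimality of stacking yields $\mathrm{elpd}_{\mathrm{stacking,cp}}-\sup_k\mathrm{elpd}_k\geq L\bigl((1-\rho)-\epsilon\bigr)-\log K$, which is the asserted leading term $g(L,K,\rho,\epsilon)=L(1-\rho)(1-\epsilon)-\log K$ up to the slack $L\rho\epsilon$ that I would fold into the $\mathcal{O}(\epsilon)$ remainder.

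The hard part will be matching the stated remainder exactly. The uniform witness gives the clean factor $(1-\rho)-\epsilon$ rather than the product $(1-\rho)(1-\epsilon)$, and produces no $\exp(-L)$ term at all; to recover the precise form I would instead route the witness through $\w^{\mathrm{approx}}_k=\Pr(\mathcal{J}_k)$ and invoke Theorem \ref{thm_stacking_seperate_y}, whose approximation error is exactly $\mathcal{O}(\epsilon+\exp(-L))$, so that the residual discrepancy between the two probability factors and the replacement of $1/K$ by $\Pr(\mathcal{J}_k)$ are both absorbed into $\mathcal{O}(\exp(-L)+\epsilon)$. The delicate bookkeeping is confined to the two regions where no margin can be credited, namely the badly separated set $W^{c}$ and the best model's own cell $\mathcal{J}_{k_0}$, where I must ensure the integrand is floored by $-\log K$ and not by a more negative $\log\Pr(\mathcal{J}_k)$; this is precisely why the uniform weight, with its guaranteed $1/K$ mass on every component, is the safe vehicle for establishing the dominant term before the refinement.
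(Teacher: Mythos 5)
Your proposal is correct, and it takes a genuinely different route from the paper. The paper never uses a uniform witness: its comparison point is $\w^{\mathrm{approx}}_k=\Pr(\mathcal{J}_k)$, analyzed through the surrogate objective $\mathrm{elpd}^{\mathrm{surrogate}}(\w)=\sum_{k}\int_{\mathcal{J}_k}\log\left(w_k\, p(\tilde y|\tilde x, M_k)\right)p_t(\tilde y,\tilde x)\,d\tilde y\, d\tilde x$; it lower-bounds $\mathrm{elpd}^{\mathrm{surrogate}}(\w^{\mathrm{approx}})-\mathrm{elpd}_k$ cell by cell (margin at least $L$ on the well-separated part of each $\mathcal{J}_l$, $l\neq k$, nonnegative margin on the rest of $\mathcal{J}_l$, plus the negative-entropy term $\sum_l\Pr(\mathcal{J}_l)\log\Pr(\mathcal{J}_l)$), and then transfers the bound to $\mathrm{elpd}_{\mathrm{stacking,cp}}$ via the $\mathcal{O}(\exp(-L)+\epsilon)$ surrogate-approximation error established in the proof of Theorem \ref{thm_stacking_seperate_y}. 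That route yields the sharper constant $g^*(L,K,\rho,\epsilon)=L(1-\rho)(1-\epsilon)+\rho\log\rho+(1-\rho)\left(\log(1-\rho)-\log(K-1)\right)$, which the example of Appendix \ref{theory_example} shows is tight, and the stated $g$ then follows from $g^*\geq g$ by an elementary convexity argument. Your uniform witness buys the opposite trade: the argument is fully self-contained (no surrogate, no appeal to Theorem \ref{thm_stacking_seperate_y}, no $\exp(-L)$ bookkeeping), the $-\log K$ floor appears for free rather than through entropy minimization, and your pointwise bounds and the $\max(\cdot\,,0)$ branch are all valid; the cost is that this witness can never recover the tight constant $g^*$. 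As for the discrepancy you flag---your factor $(1-\rho)-\epsilon$ versus the stated $(1-\rho)(1-\epsilon)$, a deficit of $L\rho\epsilon$ which is indeed not literally $\mathcal{O}(\exp(-L)+\epsilon)$ uniformly in $L$---I would not count it as a gap against you: your proposed repair (rerouting through $\w^{\mathrm{approx}}$ and Theorem \ref{thm_stacking_seperate_y}) is exactly the paper's proof, and moreover the paper's own intermediate step asserts the margin bound $(1-\epsilon)(1-\rho)L-\epsilon$ where the computation it actually sets up (nonnegativity on $\mathcal{J}_l^{\circ}$, margin $L$ on $\mathcal{J}_l^{\bullet}$, and $\sum_{l\neq k}\Pr(\mathcal{J}_l^{\bullet})\geq 1-\rho-\epsilon$) delivers precisely your constant $L(1-\rho-\epsilon)$, so the product form in the theorem is itself obtained only modulo the same $L\rho\epsilon$-order slack. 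In short, your argument is no less rigorous than the paper's and is more elementary, at the price of sharpness in the constant.
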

 
 Evaluating $\mathcal{J}_k$ requires access to $\tilde y | \tilde x$ and $\tilde x$. Though both terms are unknown, %$\tilde y | \tilde x$  is  the ultimate target in  regressions, whereas $p(\tilde x)$ is often replaced by empirical distribution of training $x$. On the other hand, 
 the roles of $\tilde x$ and $\tilde y$ are not symmetric: we could bespoke the model in preparation for a future prediction at a given $\tilde x$, but cannot be tailored for a realization of $\tilde y$. To be more tractable, we consider the case when the variation on $\tilde x$ predominates the uncertainty of model comparison, such that 
 $\mathcal{J}_k \approx \mathcal{I}_k \times \mathcal{Y}$, where $\mathcal{I}_k$ is defined in \eqref{eq_ik}. 
 More precisely, we define a strong local separation condition with a distance-probability pair ($L$, $\epsilon$):
 \begin{equation}\label{eq_sep}
 	\sum_{k=1}^K \int_{\tilde x \in \mathcal{I}_k} \int_{\mathcal{Y}}  \mathbbm{1}\Big(\log p(\tilde y|M_{k}, \tilde x )< \log p(\tilde y|M_{k^{\prime}}, \tilde x)+ L,  \; \forall    k^{\prime} \neq k  \Big) p_t(\tilde y,  \tilde x) d\tilde y   d \tilde x \leq \epsilon. 
 \end{equation}
 We define $\rho_{\mathcal{X}}=\sup_{k} \Pr(\mathcal{I}_k)$.  Under condition \eqref{eq_sep}, $\rho_{\mathcal{X}}$ and $\rho$ will be close. 
 %$\mathcal{J}_k \approx \mathcal{I}_k \times \mathcal{Y}$ such that $\rho_{\mathcal{X}} \approx \rho$.
 %and  the optimal stacking weights can be further approximated by   $w^{\mathrm{approx}}_k \coloneqq \Pr(\mathcal{I}_k)\coloneqq \int_{\mathcal{I}_k}   p(\tilde x)d\tilde x$ with the same objective bound \eqref{eq_opt_bound}.
 If we know  the input space division $\{\mathcal{I}_k\}$, we can select model $M_k$ for and only for $x\in \mathcal{I}_k$, which we call  pointwise selection. The predictive density is  \begin{equation}\label{eq_point_selection}p(\tilde y| \tilde x, \mathcal{I}, \mathrm{pointwise~selection} ) = \sum_{k=1}^K\mathbbm{1} (\tilde x\in \mathcal{I}_k) p(\tilde y|\tilde x,  M_k). \end{equation}
 
 As per Theorem \ref{theo_lower_stacking}, for  a given pair of $L$ and $\epsilon$, the smaller is $\rho$, the higher improvement ($K(1-\epsilon)(1-\rho)$)  can
 stacking achieve against model selection: the situation in which no model always predominates. Thus, the effectiveness of stacking  can indicate  heterogeneity of model fitting.   
 Next, we show that the heterogeneity of model fitting provides an additional utility gain if we shift from stacking to  pointwise selection:
 
 \begin{theorem}\label{theo_lower_selection}
 	Under the strong separation condition \eqref{eq_sep}, and if the divisions $\{\mathcal{I}_k\}$ are known exactly, then the  extra utility gain of pointwise selection has a lower bound,
 	$$ \mathrm{elpd}_{\mathrm{pointwise~ selection}}    - \mathrm{elpd}_{\mathrm{stacking,cp}}    \geq  -\log \rho_\mathcal{X} + \mathcal{O}(\exp(-L)+ \epsilon).$$
 \end{theorem}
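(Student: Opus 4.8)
\section*{Proof proposal for Theorem \ref{theo_lower_selection}}

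The plan is to write both utilities as integrals of pointwise log scores against $p_t(\tilde y,\tilde x)$ and to lower bound their difference region by region over the partition $\{\mathcal I_k\}$. Since pointwise selection \eqref{eq_point_selection} uses exactly one model on each $\mathcal I_k$, its utility is $\mathrm{elpd}_{\mathrm{pointwise~selection}}=\sum_k\int_{\tilde x\in\mathcal I_k}\int_{\mathcal Y}\log p(\tilde y|M_k,\tilde x)\,p_t(\tilde y,\tilde x)\,d\tilde y\,d\tilde x$, so the quantity to bound is $\sum_k\int_{\tilde x\in\mathcal I_k}\int_{\mathcal Y}\big[\log p(\tilde y|M_k,\tilde x)-\log\big(\sum_{k'}w_{k'}^{\mathrm{stacking,cp}}p(\tilde y|M_{k'},\tilde x)\big)\big]p_t\,d\tilde y\,d\tilde x$.

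First I would remove the dependence on the actual optimizer $\w^{\mathrm{stacking,cp}}$ by invoking Theorem~\ref{thm_stacking_seperate_y}. The strong separation \eqref{eq_sep} implies the joint separation \eqref{eq_sep_y} with comparable constants: on the event in \eqref{eq_sep} the local celpd winner on $\mathcal I_k$ is also the pointwise density winner, so $\mathcal I_k\times\mathcal Y$ and $\mathcal J_k$ agree up to probability $\epsilon$, whence $|\Pr(\mathcal J_k)-\Pr(\mathcal I_k)|$ and $|\rho-\rho_{\mathcal X}|$ are $\mathcal O(\epsilon)$. Then \eqref{eq_opt_bound} lets me replace $\mathrm{elpd}(\w^{\mathrm{stacking,cp}})$ by $\mathrm{elpd}(\w^{\mathrm{approx}})$, with $w_k^{\mathrm{approx}}=\Pr(\mathcal J_k)$, at the cost of an additive $\mathcal O(\exp(-L)+\epsilon)$. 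This reduces the problem to bounding the same difference with explicit, interpretable weights satisfying $w_k^{\mathrm{approx}}=\Pr(\mathcal J_k)\le\rho=\rho_{\mathcal X}+\mathcal O(\epsilon)$.

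The main estimate is on the good event of \eqref{eq_sep}. There, for $\tilde x\in\mathcal I_k$ we have $p(\tilde y|M_k,\tilde x)\ge e^{L}p(\tilde y|M_{k'},\tilde x)$ for all $k'\ne k$, so the mixture is dominated by its $k$-th term: $\sum_{k'}w_{k'}^{\mathrm{approx}}p(\tilde y|M_{k'},\tilde x)\le p(\tilde y|M_k,\tilde x)\,(w_k^{\mathrm{approx}}+e^{-L}(1-w_k^{\mathrm{approx}}))$. Taking logs, the integrand is at least $-\log(w_k^{\mathrm{approx}}+e^{-L})\ge-\log(\rho_{\mathcal X}+e^{-L})=-\log\rho_{\mathcal X}+\mathcal O(\exp(-L)+\epsilon)$. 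Integrating over the good event (probability $\ge1-\epsilon$) and using that $-\log\rho_{\mathcal X}$ is $\mathcal O(1)$ (as $\rho_{\mathcal X}\ge 1/K$) yields the claimed main term $-\log\rho_{\mathcal X}$ up to $\mathcal O(\exp(-L)+\epsilon)$.

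The remaining, and hardest, step is the complementary bad event of \eqref{eq_sep}, which carries probability at most $\epsilon$ but on which the selected $M_k$ need not dominate: some $M_{k'}$ may have far larger density, so the mixture can beat pointwise selection and the integrand $\log p(\tilde y|M_k,\tilde x)-\log\sum_{k'}w_{k'}^{\mathrm{approx}}p(\tilde y|M_{k'},\tilde x)$ can be arbitrarily negative. The crude bound $\sum_{k'}w_{k'}^{\mathrm{approx}}p(\tilde y|M_{k'},\tilde x)\le\max_{k'}p(\tilde y|M_{k'},\tilde x)$ controls it by $-\max_{k'}\log\big(p(\tilde y|M_{k'},\tilde x)/p(\tilde y|M_k,\tilde x)\big)$, which integrates to $\mathcal O(\epsilon)$ only under a regularity/boundedness assumption on the pointwise log predictive density differences (the same tail control implicit in writing the error as $\mathcal O(\exp(-L)+\epsilon)$). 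I expect this bad-event contribution to be the crux: the good-event algebra is routine once Theorem~\ref{thm_stacking_seperate_y} is in hand, but showing that a small-probability region cannot erase the $-\log\rho_{\mathcal X}$ gain requires either an explicit bound on $|\log p(\tilde y|M_k,\tilde x)-\log p(\tilde y|M_{k'},\tilde x)|$ or a more delicate argument trading the separation margin $L$ against the measure $\epsilon$.
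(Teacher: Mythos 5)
Your reduction via Theorem \ref{thm_stacking_seperate_y} and your good-event domination argument are both sound, but the step you yourself flag as the crux---the bad slice of \eqref{eq_sep}---is a genuine gap, and it is worse than you suggest: it cannot be closed at all in your formulation without an extra hypothesis. The separation condition \eqref{eq_sep} controls nothing on the bad slice, and the definition of $\mathcal{I}_k$ only forces the \emph{average} of $\log p(\tilde y|M_k,\tilde x)-\log p(\tilde y|M_{k'},\tilde x)$ over $p_t(\tilde y|\tilde x)$ to be nonnegative there. Consequently a losing model's density may spike on a set of conditional probability $\epsilon$ by a factor as large as roughly $\exp\bigl((1-\epsilon)L/\epsilon\bigr)$ without disturbing either \eqref{eq_sep} or the partition $\{\mathcal{I}_k\}$; in such a configuration the bad-slice contribution to $\mathrm{elpd}_{\mathrm{pointwise~ selection}}-\mathrm{elpd}(\w^{\mathrm{approx}})$ is of order $-L$, not $\mathcal{O}(\epsilon)$, and it swallows the $-\log\rho_{\mathcal{X}}$ gain from the good event. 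So comparing selection pointwise against the actual mixture, as you do, cannot yield the stated bound from \eqref{eq_sep} alone; your closing remark that one needs ``an explicit bound on $|\log p(\tilde y|M_k,\tilde x)-\log p(\tilde y|M_{k'},\tilde x)|$'' is exactly right.

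The paper's proof avoids this comparison entirely. It compares $\mathrm{elpd}_{\mathrm{pointwise~ selection}}$ not with the mixture but with the \emph{surrogate} objective from the proof of Theorem \ref{thm_stacking_seperate_y}, $\mathrm{elpd}^{\mathrm{surrogate}}(\w^{\mathrm{approx}})=\sum_{l}\int_{\mathcal{I}_l}\log\bigl(\Pr(\mathcal{I}_l)\,p(\tilde y|M_l,\tilde x)\bigr)p_t(\tilde y,\tilde x)\,d\tilde y\,d\tilde x$. Since pointwise selection and the surrogate use the \emph{same} density $p(\cdot|M_l,\tilde x)$ on each cell $\mathcal{I}_l$, the log-density terms cancel identically and the difference is exactly the partition entropy $-\sum_l\Pr(\mathcal{I}_l)\log\Pr(\mathcal{I}_l)\ge-\log\rho_{\mathcal{X}}$: no good/bad split, no domination inequality, and no appearance of $L$ or $\epsilon$ in the main term. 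All separation constants are confined to the single inherited bound $|\mathrm{elpd}^{\mathrm{surrogate}}(\w^{\mathrm{approx}})-\mathrm{elpd}(\w^{\mathrm{stacking,cp}})|\le\mathcal{O}(\exp(-L)+\epsilon)$, whose proof in Theorem \ref{thm_stacking_seperate_y} is immune to your unboundedness problem because it integrates $\log\bigl(1+\sum_{l\ne k}w_lp_l/(w_kp_k)\bigr)$ over the joint winning regions $\mathcal{J}_k$, where $p_l/p_k\le1$ holds pointwise even on the low-margin set. To your credit, your objection does land on the one informal point in the paper: transplanting that bound from the $\mathcal{J}_k$ partition to the $\mathcal{I}_k\times\mathcal{Y}$ partition used here re-encounters the same unbounded bad-slice ratios, and the paper passes over this silently. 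The takeaway is that the surrogate cancellation is the missing idea your proof needs for the main term, while the residual bad-event control you isolated is a genuinely delicate point that \eqref{eq_sep} by itself does not settle.
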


 For a given input location $x_0 \in \X$, the pointwise no-pooling  optimum ${\w}(x_0)\in \mathcal{S}_{K}$ in the population is same as the complete-pooling solution  restricted to the slice  $\{x_0\} \times  \mathcal{Y}$. Hence, applying Theorem \ref{theo_lower_stacking} to each slice will bound the advantage of pointwise averaging  \eqref{eq_point_stacking} against  pointwise selection \eqref{eq_point_selection}.  
 \begin{figure}
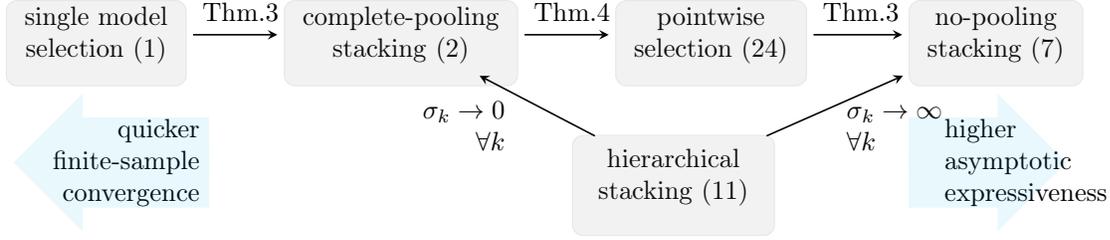

 	\centering
 	
 	\tikz{
 		\small
 		% nodes
 		\hspace{-0.9cm}
 		\node (p1) {\begin{tabular}{c} 	single model  \\ 	selection \eqref{eq_select}\\
 		\end{tabular} 	};
 		\node[ right =of p1 ,xshift= 0.12cm] (p2)  {\begin{tabular}{c}
 				complete-pooling  \\ 	stacking  \eqref{eq_linear_ave}\\ 	\end{tabular} 	};
 		\node[  right =of p2 ,xshift=  0.12cm] (p3)  {\begin{tabular}{c}
 				pointwise  \\ 	selection \eqref{eq_point_selection}\\ 
 		\end{tabular} 	};
 		\node[  right =of p3 ,xshift=  0.18cm] (p4)   {\begin{tabular}{c}
 				no-pooling  \\ 	stacking \eqref{eq_no-pooling}\\
 		\end{tabular} 	};
 		\node[  below =of p3 ,xshift=  -0.5 cm, yshift=0.2cm] (p5)   {\begin{tabular}{c} 	hierarchical  \\ stacking \eqref{stacking_obj_h2}\\
 		\end{tabular} 	};
 		\node  [below=of p1,yshift=0.6cm, xshift=0.4cm ] (m1){\begin{tabular}{r}  quicker \\	finite-sample  \\ convergence\\
 		\end{tabular} 	};
 		\node  [right=of m1,  xshift=0cm ] (m2){};
 		\node  [below=of p4,yshift=0.6cm, xshift=0.4cm ] (m5){\begin{tabular}{l} 	higher \\ asymptotic  \\ expressiveness\\
 		\end{tabular} 	};
 		\node  [left=of p5,yshift= 0.65cm, xshift=0.43cm] (m6){
 			\begin{tabular}{r} 
 				$\sigma_k\to 0$\\ $\forall k$	\end{tabular} 	};
 		\node  [right=of p5,yshift= 0.65cm,  xshift=-0.6cm] (m6){
 			\begin{tabular}{l} 
 				~~$ \sigma_k\to \infty$\\
 				$ ~~\forall k$	\end{tabular}	};
 		\node  [left=of m5,  xshift=0cm ] (m4){};
 		\plate [inner sep=-.1cm,  fill=gray,  opacity=0.1] {plate5}{(p1)} {} ; 
 		\plate [inner sep=-.1cm,  fill=gray,  opacity=0.1] {plate5}{(p2)} {} ; 
 		\plate [inner sep=-.1cm,  fill=gray,  opacity=0.1] {plate5}{(p3)} {} ; 
 		\plate [inner sep=-.1cm,  fill=gray,  opacity=0.1] {plate5}{(p4)} {} ; 
 		\plate [inner sep=0cm,  fill=gray,  opacity=0.1] {plate5}{(p5)} {} ; 
 		\edge  [-stealth,semithick] {p1} {p2};  
 		\node[  right =of p1 ,xshift= -1cm, yshift=0.3cm] {Thm.\ref{theo_lower_stacking}};
 		\edge[-stealth,semithick]  {p2} {p3}; 
 		\node[  right =of p2 ,xshift= -1cm, yshift=0.3cm] {Thm.\ref{theo_lower_selection}};
 		\node[  right =of p3 ,xshift= -1cm, yshift=0.3cm] {Thm.\ref{theo_lower_stacking} };
 		\edge[-stealth,semithick]  {p3} {p4}; 
 		\edge [-stealth,semithick]  {p5} {p2}; 
 		\edge [-stealth,semithick] {p5} {p4}; 
 		\begin{scope}[transparency group, opacity = 0.09]
 			\draw[
 			-triangle 90,
 			line width=3mm,
 			draw=cyan,
 			postaction={draw, line width=1.2cm,   shorten >=0.7cm, -}
 			] (10.8,-1.7) -- (12.9,-1.7);
 			\draw[
 			-triangle 90,
 			line width=3mm,
 			draw=cyan,
 			postaction={draw, line width=1.2cm,   shorten >=0.7cm, -}
 			] (1.5,-1.7) -- (-1.1,-1.7);
 		\end{scope}
 	}\caption{\em  Evolution of methods. First row from left to right: the methods have a higher degree of freedom to ensure a higher asymptotic predictive accuracy, the gain of which is bounded by the labeled theorems. Meanwhile, complex methods come with a slower convergence rate.  The hierarchical stacking is a generalization of all remaining methods by assigning various structured priors, and  adapts to the complexity-expressiveness tradeoff by  hierarchical modeling.}\label{fig:heuristic}
 \end{figure}
 
 The potential utility gain from Theorems~\ref{theo_lower_stacking} and \ref{theo_lower_selection}  is the motivation behind the input-varying model averaging.
 Despite this asymptotic expressiveness, the finite sample estimate remains challenging.  (a) We do not know $\mathcal{I}_k$ or  $\mathcal{J}_k$. We may use leave-one-out cross validation to estimate the overall model fit $\mathrm{elpd}_k$, but in the pointwise version, we want to assess conditional model performance. Further, the more data coming in, the more input locations need to assess.  (b) The asymptotic expressiveness comes with increasing complexity. The free parameters in single model selection, complete-pooling stacking, pointwise selection, and no-pooling stacking are a single model index, a length-$K$ simplex, a  vector of pointwise model selection index $\{1,2,\dots, K\}^\mathcal{X}$, and a matrix of pointwise weight  $(\mathcal{S}_K)^\mathcal{X}$. To handle this complexity-expressiveness tradeoff,  
 it is  natural  to apply the hierarchical shrinkage prior.

 %Although the approximation in Lemma \ref{thm_stacking_seperate} and the lower bound  in Theorem \ref{theo_lower_stacking} do not apply here for conditional models, this generalization is justified  by two motivations: 
 %\begin{itemize}
 %	\item    The  probabilistic weights $0\leq w_k(x) \leq 1$ can reflect the estimation uncertainty of the binary selection $\mathbbm{1} (x\in \mathcal{I}_k)$ from a finite-sample (or one-sample, in case of continuous $x$). 
 %	\item   Conditioning on one static $x_0$,  the lower bound derived from Theorem  \ref{theo_lower_stacking} becomes zero. Nonetheless,  stacking can still improve  predictions. Think about the extreme case where the pointwise true data generating process $p(y|x_0)$ is bimodal and the two modes are captured by two models:  here, stacking \eqref{eq_point_stacking} can recover the true data generating process while selection \eqref{eq_point_selection} cannot. \citet{yao2020stacking} discuss this perspective in details with further theory guarantees. 
 %  \end{itemize}

 \subsection{Immunity  to covariate shift}\label{sec_covariate_shift}
 So far we have adopted an IID view: the training and out-of-sample data are from the same distribution. Yet another appealing property of  hierarchical stacking is its immunity to \emph{covariate shift} \citep{shimodaira2000improving}, a ubiquitous problem in non-representative  sample survey, data-dependent collection, causal inference, and many other areas. 
 
 If the distribution of inputs $x$ in the training sample,   $p^{\mathrm{train}}_{X}(\cdot)$,  
 differs from these predictors' distribution in the population of interest, $p^{\mathrm{pop}}_{X}(\cdot)$ ($p^{\mathrm{pop}}_{X}$ is absolutely continuous with respect to $p^{\mathrm{train}}_{X}$), 
 and if $p(z|x)$ and $p(y|x, z)$ remain invariant, then we  do \emph{not} need to adjust weight estimate from \eqref{stacking_obj_h2}, because it has already aimed at pointwise fit. %though we also impose prior for regularization. 
 
 By contrast,  complete-pooling stacking targets the average risk. Under covariate shift, the sample mean of  leave-one-out score in the $k$-th model, $\frac{1}{n}\sum_{i=1}^{n}\log p(\tilde y | \tilde x, M_k)$, is no longer a consistent estimate of population elpd. To adjust, we can run importance sampling \citep{sugiyama2005input, sugiyama2007covariate, 
 	yao2018using} and reweight the $i$-th term in the objective \eqref{stacking_obj} proportional to the inverse probability ratio $p^{\mathrm{pop}}_{X}(x_i)/p^{\mathrm{train}}_{X}(x_i)$. Even  in the ideal situation when both $p^{\mathrm{pop}}_X$ and $p^{\mathrm{train}}_X$ are known, the importance weighted sum has in general larger or even infinite variance \citep{vehtari2015pareto},  thereby decreasing the effective sample size and  convergence rate in complete-pooling stacking (toward its optimum \eqref{eq_stacking_population}).  When  $p^{\mathrm{train}}_{X}$  is unknown, the covariate reweighting is more complex while hierarchical stacking  circumvents the need of explicit modeling of $p^{\mathrm{train}}_{X}$. 
 
 When we are interested only at one fixed input location $p^{\mathrm{pop}}_{X}(x)= \delta(x=x_0)$, hierarchical stacking  is ready for \emph{conditional} predictions, whereas no-pooling stacking and reweighted-complete-pooling stacking effectively discard all $x_i\neq x_0$ training data in their objectives, especially a drawback when $x_0$ is rarely observed in the sample.

 %That said, we can combine hierarchical stacking and poststratification to estimate the overall average risk. Suppose observations are from an online survey and we identify that the survey has selection bias on certain demographic variables $x$, such as age, gender and education, we can then fit multiple models and add hierarchical stacking to combine models. Finally we use poststratification weight $p^{\mathrm{pop}}_{x=j}$ to compute the average log predictive density in the population  $ \sum_{j=1}^J p^{\mathrm{pop}}_{X}(x=j) (\sum_{i: x_i=j}  \log \sum_{k} (\hat w(x_i) p(y_i|D_{-i}, M_k))/\sum_{i: x_i=j} 1)$. Although it is plausible to further reweigh the hierarchical-stacking likelihood in order to achieve double robustness, here as per the protocol of multilevel modeling \citep{gelman1997poststratification}, we recommend poststratification only  after model inference and  hierarchical  stacking. 
 
 % maybe we can say something about how to use propensity score as a feature  when dealing with stacking+ observational studies.
 
 \section{Related literature}\label{sec_related}
 Stacking \citep{wolpert1992stacked,  breiman1996stacked, leblanc1996combining}, or what we call \emph{complete-pooling stacking} in this paper  has long been a popular method to combine learning algorithms, and has been advocated for averaging Bayesian models \citep{clarke2003comparing,clyde2013bayesian,  le2017bayes, yao2018using}. Stacking is applied in various areas such as  recommendation systems, epidemiology \citep{bhatt2017improved}, 
 network modeling \citep{ghasemian2020stacking}, and post-processing in Monte Carlo computation \citep{tracey2016reducing, yao2020stacking}. Stacking  can be equipped with any scoring rules, while the present paper focuses on the logarithm score by default.

 Our  theory investigation in Section \ref{sec_bound_2} is inspired by the discussion of how to choose candidate models by  \citet{clarke2003comparing} and \citet{le2017bayes}. In $L^2$ loss  stacking, they recommended  ``independent''  models in terms of posterior point predictions $(\E (\tilde y| \tilde x, M_1), \dots, \E(\tilde y| \tilde x, M_K))$ being independent. When combining Bayesian predictive distributions, the correlations of  the posterior predictive mean is not enough to summarize the relation between predictive distributions \citep{pirs2019bayesian}, hence we consider the local separation condition instead.

 Allowing a heterogeneous stacking model weight that  changes with input $x$ is not a new idea.  Feature-weighted linear stacking \citep{sill2009feature} constructs  data-varying model weights of the $k$-th model  by $w_k(x)= \sum_{m=1}^{M}\alpha_{km}f_m(x)$, and $\alpha_{km}$ optimizes the $L^2$  loss of the point predictions of the  weighted model. This is similar to the likelihood term of our additive model specification in Section \ref{sec_additive}, except  we model the unconstrained weights. The direct least-squares optimization solution from feature-weighted linear stacking is what we label \emph{no-pooling stacking}.  
 %Examining pointwise results from cross validation relates to the work of \citet{kamary2014testing} and \citet{pirs2019bayesian} on the use of mixtures to compare multiple models where, unlike in the usual model averaging framework, different models can apply to different data points.   

 It is also not a new idea to  add regularization  and optimize the penalized loss function.  For $L^2$ loss stacking, \citet{breiman1996stacked} advocated non-negative constraints. In the context of combining Bayesian predictive densities, a simplex constraint is necessary.   
 \citet{reid2009regularized} investigated to add  $L^1$ or $L^2$ penalty, $-\lambda ||w||_1$ or $-\lambda ||w||_2$,  into complete-pooling stacking objective \eqref{stacking_obj}. \citet{yao2020stacking}  assigned a Dirichlet$(\lambda), \lambda>1$ prior distribution to  the  complete-pooling stacking  weight vector $w$ to  ensure strict concavity of the objective function.   
 \citet{sill2009feature} mentioned the use of $L^2$ penalization in feature-weighted linear stacking, which is equivalent to setting a  fixed prior for all free parameters $\alpha_{km}\sim \n (0, \tau), \forall k, m$,  whose solution path connects between uniform weighing and no-pooling stacking by tuning $\tau$. 
 All of these schemes are shown to reduce over-fitting with an appropriate amount of regularization,  while the tuning is computation intensive. In particular, each stacking run is built upon one layer of cross validation to compute the expected pointwise score in each model $p_{k,-i}$, and this extra tuning would require to fit each model $n(n-1)$ times for each tuning parameter value evaluation if both done in exact  leave-one-out way.
 \citet{fushiki2020selection} approximated this double cross validation for $L^2$ loss complete-pooling stacking with $L^2$ penalty on $\w$,  beyond which there was no general efficient approximation.  
 
 Hierarchical stacking treats $\{\mu_k\}$ and $\{\sigma_k\}$ as parameters and sample them from the joint density. 
 %We do need a hyperprior for them, but arguably the posterior is less sensitive to hyperprior than to priors.  Moreover,  hierarchical stacking is flexible in allowing $\sigma_k$ and $\mu_k$ to vary in model index $k$,  and easily extends to other structured priors. 
 Such hierarchy could be approximated by using $L^2$ penalized point estimate with a different tuning parameter in each model, and tune all  parameters ($\{\sigma_k\}_{k=1}^{K-1}$ for the basic model, or $\{\sigma_{mk}\}_{m=1, k=1}^{M, ~~K-1}$ for the product model). But then this intensive tuning is the same as finding the Type-II MAP of  hierarchical stacking  in an inefficient grid search (in contrast to gradient-based MCMC).  
 %Figure \ref{fig_ars_l2} compares hierarchical prior and the generic fixed-$L^2$ shrinkage in feature weighted stacking in an example. It is clear that  the whole trajectory path from fixed  $L^2$  prior with varying  $\tau$ is  inferior to hierarchical stacking. 
 
 %Our theory investigation in Section \ref{sec_bound} is inspired by the discussion of model list choice by \citet{clarke2003comparing}. 

 Another popular family of regularization in stacking enforces sparse weights \citep[e.g.,][]{zhang2011sparse, csen2013linear, yang2014minimax}, which include sparse and grouped sparse priors on the unconstrained weights, and sparse Dirichlet prior on simplex weights. The goal is that only a limited number of models are expressed.  From our discussion in Section \ref{sec_bound}, all models are somewhere useful, hence we are not aimed for model sparsity---The concavity of log scoring rules implicitly resists sparsity; The posterior mean of  hierarchical stacking  weights $w_{jk}$ is, in general, never sparse.  Nevertheless, when sparsity is of concern for memory saving or interpretability,  we can run  hierarchical stacking first and then apply projection predictive  variable selection \citep{piironen2017comparison} afterwards to the posterior draws from  the stacking model \eqref{stacking_obj_h2} and pick  a sparse (or cell-wise sparse) solution.

 In contrast to fitting individual models in parallel before model averaging, an alternative approach is to  fit all models jointly in a bigger mixture model. \citet{kamary2014testing} proposed  a  Bayesian hypothesis testing by fitting an encompassing model $p(y| \w,\theta) = \sum_{k=1}^K w_k p(y|\theta_k, M_k)$. The mixture model  requires to simultaneously fit  model parameters and model weights  $p(w_{1, \dots, K}, \theta_{1,\dots, K}|y)$, of which the computation burden is a concern when $K$ is big. \citet{yao2018using} illustrated that (complete-pooling) stacking is often more stable than full-mixture, especially when the sample size is small and some models are similar. Nevertheless, our formulation of hierarchical stacking agrees with \citet{kamary2014testing} in terms of sampling from   the  posterior marginal distribution of $p(\w|y)$ in a full-Bayesian  model.
 A jointly-inferred model $p(y|x,  \w(x),\theta) = \sum_{k=1}^K w_k(x) p(y|x, \theta_k, M_k)$  is related to the ``mixture of experts''  \citep{jacobs1991adaptive,waterhouse1996bayesian} and  ``hierarchical mixture of experts'' \citep{jordan1994hierarchical,  svenswn2003bayesian}, 
 where $w_k(\cdot)$ and $p(\cdot|x, M_k)$ are parameterized by neural networks and
 trained jointly in the bigger mixture model.
 Hierarchical stacking differs from mixture modeling in two aspects. First, its separate inference of individual models $p(\theta_1|y, M_1), \dots, p(\theta_K|y, M_k)$ and weights greatly reduces computation burden, making exact Bayes affordable. Second, the built-in  leave-one-out likelihood helps prevent overfitting. 
 Both the mixture model and stacking have limitations.  If the true data generating process is truly a mixture model, then fitting  individual  component  $p(\theta_k|y)$ separately is wrong and stacking cannot remedy it. On the other hand, stacking and hierarchical stacking are more suitable when each model has already been developed to fit the data on their own. 
 Put it in another way, rather than  to compete with a mixture-of-experts on combining weak learners,  hierarchical stacking is more recommended to combine a mixture-of-experts  with other sophisticated models. Lastly, our  full-Bayesian formulation  makes  hierarchical stacking directly applicable to complex priors and complex data structures, such as time series or panel data, while these extensions are not straightforward in the mixture of experts.

 \section{Examples}\label{sec_examles}
 We present three examples. The well-switching example demonstrates an automated hierarchical stacking implementation with both continuous and categorical inputs. The Gaussian process example highlights the benefit of hierarchical stacking when individual models are already highly expressive. The election forecast illustrates a real-world classification task with a complex data structure. We evaluate the proposed method on several metrics, including the mean log  predictive density on holdout data, conditional log  predictive densities, and the calibration error.
 
 \subsection{Well-switching in Bangladesh}\label{wells}
 We work with a dataset used by \cite{vehtari2017practical} to demonstrate cross validation. A survey with  a size of $n=3020$   was conducted on residents from a small area in Bangladesh that was affected by arsenic in drinking water. Households with elevated arsenic levels in their wells were asked whether or not they were interested in switching to a neighbor’s well, denoted by $y$. Well-switching behavior can be predicted by a set of household-level variables $x$, including the detected  arsenic concentration value in the well, the distance to  the closest known safe well, the education level of the head of household, and whether  any  household members are in community organizations.  The first two inputs are continuous and the remaining two are categorical variables. 
 
 We fit a series of logistic regressions, starting with an additive model including all covariates $x$ in model 1.  In model 2, we replace one  input---well arsenic level---by its  logarithm. In models 3 and 4, we add cubic spline basis functions with ten knots of  well arsenic level and distance,  respectively in input variables. In model 5 we replace the categorical education variable with a continuous measure of years of schooling.
 
 Using the additive model specification \eqref{eq_continous_x} and default prior \eqref{eq_continous_x_prior},  we model the unconstrained  weight  $\alpha_k(x)$ by a linear regression of 
 all categorical inputs and all rectified continuous inputs \eqref{eq_relu}.    In this example the categorical input has eight distinct levels  based on the product of education (four  levels) and community participation (binary).  
 
 %z_k[j]\sim \n(\mu_k, \sigma_k)$, and a standard normal prior on all $\beta_k,  \mu_k, \sigma_k$. 

 %\begin{equation}\label{eq_hybird}
 %\alpha_k(x) = \sum_j \beta_{jk} x_{\mathrm{con}, j} + z_k[x_{\mathrm{cat}}], ~k=1, \dots, 4;  \quad \alpha_5(x)=0.  
 %\end{equation}
 %As per previous discussion \eqref{eq_relu},  we convert all continuous  inputs $x_{\mathrm{con}}$ into two parts  $x_{\mathrm{con}, j}\coloneqq (x_{\mathrm{con},j}-  \mathrm{median}( x_{\mathrm{con}, j}))_{+}$  and $ (x_{\mathrm{con}, j}- \mathrm{median}( x_{\mathrm{con}, j}))_{-}$. 
 
 For comparison, we  consider three alternative approaches: (a) complete-pooling stacking (b) no-pooling stacking: the maximum likelihood estimate of \eqref{eq_continous_x}, and (c)  model selection that picks  model with the highest leave-one-out log predictive densities. 
 We split  data into a training set $(n_{\mathrm{train}}=2000)$ and an independent holdout test set. %We use the training data to fit individual models, obtain point wise leave-one-out log predictive densities, and train the hierarchical stacking model.

 \begin{figure}
 	\centering
 	\includegraphics[width=\textwidth]{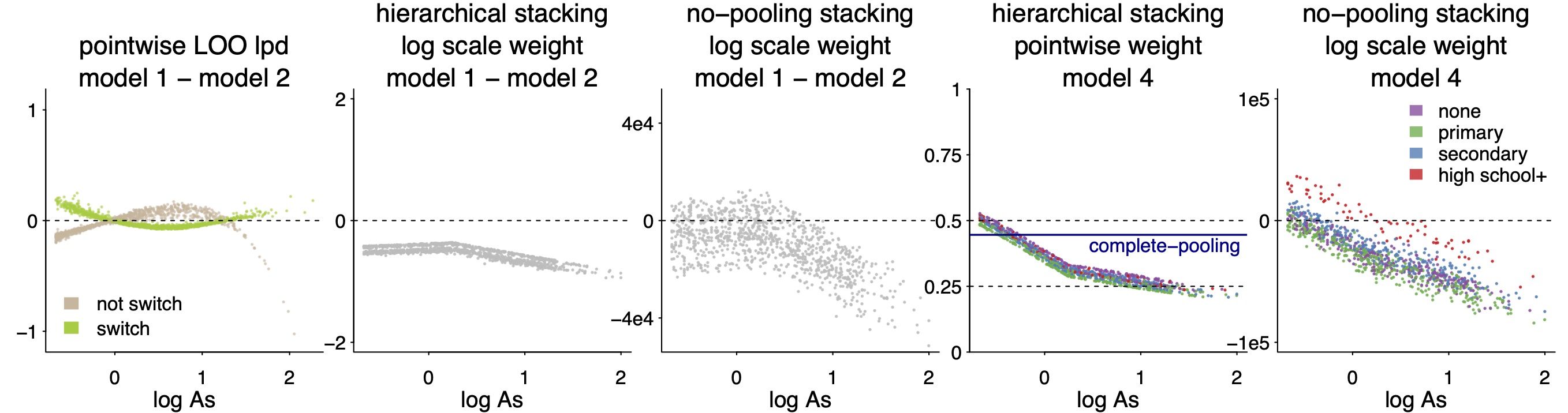}
 	\caption{\em   (1) Pointwise difference of leave-one-out log scores between models 1 and 2, plotted against log arsenic. Model 1 poorly fits  points with high arsenic. (2) Posterior mean of  pointwise unconstrained weight difference between models 1 and 2, $\alpha_{2}(x)-  \alpha_{1}(x)$ in  hierarchical stacking. (3) Pointwise  log weight difference between models 1 and 2 in no-pooling stacking. 
 		(4)  Posterior mean of $w_{4}(x)$, the  weight assigned to model 4,  in  hierarchical stacking, displayed  against  log arsenic  and education levels.  There are few samples with high school education and above, whose effect on model weights is pooled toward the shared mean.   The blue line is the complete-pooling stacking. (5) The unconstrained weight of model 4, $\alpha_4(x)$, in no-pooling stacking.  The ``high school'' effect stands out and the resulting  model weights $\w$ are nearly all zeroes and  ones. } \label{fig:well_pattern}
 \end{figure}   
 The leftmost panel in   Figure \ref{fig:well_pattern} displays the  pointwise difference of leave-one-out log scores for models 1 and 2 against log arsenic values in training data. Intuitively, model 1 fits poorly for data with high arsenic. In line with this evidence,  hierarchical stacking assigns model 1 an overall low weight, and especially low for the right end of the arsenic levels. The second panel shows the  pointwise posterior mean of  unconstrained weight difference between model 1 and 2, $\alpha_{2}(x)-  \alpha_{1}(x)$, against  the  arsenic values in training data.  The no-pooling stacking reveals a similar direction that model 1's weight should be lower with a higher arsenic value, but for lack of hierarchical prior regularization,   the fitted 
 $\alpha_{2}(x)-  \alpha_{1}(x)$ is orders of magnitude larger (the third panel). As a result, the realized pointwise   weights $\w$ are nearly  either zero or one.
 
 The rightmost two columns in  Figure \ref{fig:well_pattern} display the  fitted pointwise weights of model 4 against  log arsenic values and education level in test data.  Because only a small proportion  (7\%) of respondents had high school education and above, the no-pooling stacking weight for this category is largely determined by small sample variation. 
 Hierarchical stacking partially pools this ``high school'' effect toward the shared posterior mean of all educational levels, and the realized  hierarchical  stacking model  weights do not clearly depend on education levels.   
 
 \begin{figure}
 	\centering
 	\includegraphics[width=\textwidth]{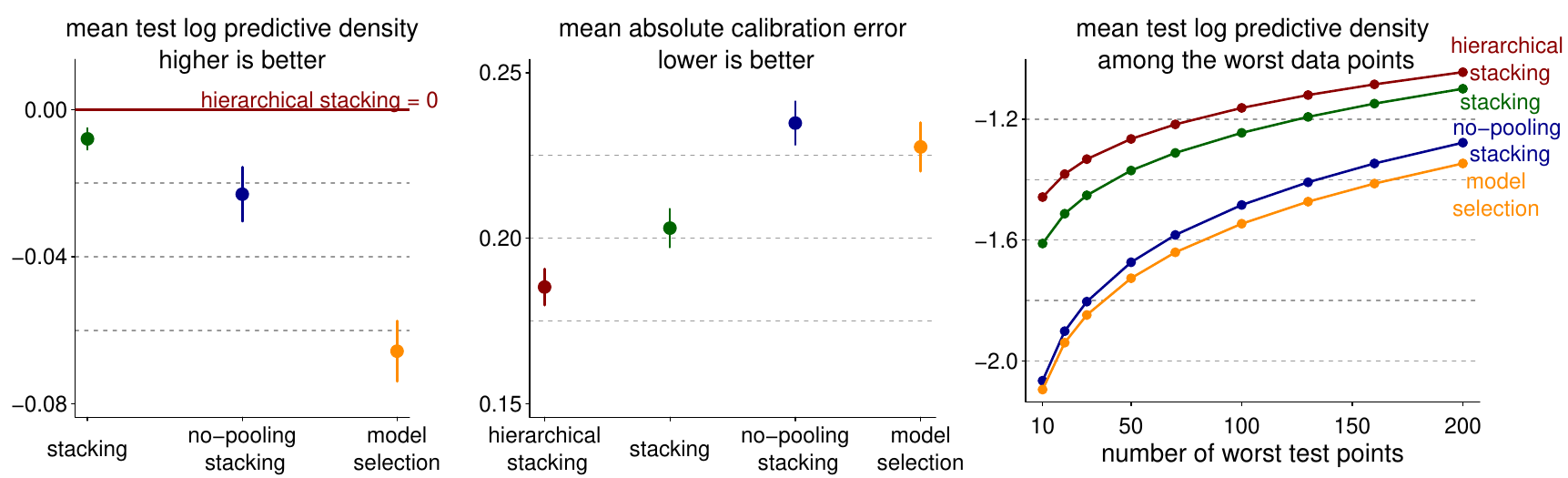}
 	\caption{\em  We evaluate hierarchical, complete-pooling and no-pooling stacking, and model selection on three metrics:   (a)  average log predictive densities  on test data, where we set the hierarchical stacking as benchmark 0,  (b)  calibration error:    discrepancy between the  predicted positive probability and  realized proportion of positives in test data, averaged over 20 equally spaced bins,  and (c)  average log predictive densities among the   $10\leq n_0 \leq 200$ worst test data points. We  repeat 50 random training-test splits with training size 2000 and test  size 1020.	} \label{fig:well}
 \end{figure}
 
 We evaluate  model fit on the following three   metrics. To reduce  randomness,   we evaluate all these metrics averaging over  50 random training-test splits.
 \begin{enumerate}[label=(\alph*), topsep=0pt,itemsep=0ex,partopsep=0ex,parsep=1ex]
 	\item The log predictive densities averaged over test data.  In the first panel of Figure \ref{fig:well}, we set hierarchical stacking as a baseline and all other methods attain lower predictive densities. 
 	\item The $L_1$ calibration error. We set 20 equally spaced bins between 0 and 1. For each bin and each learning algorithm, we collect test data points whose model-predicted positive probability falling in that bin, and  compute the absolute discrepancy between the realized proportion of positives in test data and the model-predicted probabilities. The middle panel in Figure \ref{fig:well} displays the resulted calibration error averaged over 20 bins. The proposed hierarchical stacking has the lowest error. No-pooling stacking has the highest calibration error despite its higher overall log predictive densities than model selection, suggesting prediction overconfidence.
 	\item We compute the average log predictive densities of four methods among the   $n_{\mathrm{worst}}$  most shocking test data points (the ones with  lowest predictive densities conditioning on a given method) for    $n_{\mathrm{worst}}$ varying from 10 to 200 and the total test data has size 1020.   As exhibited in the  last panel in Figure \ref{fig:well}, the proposed  hierarchical stacking consistently outperforms all other approaches for all $n_{\mathrm{worst}}$: a robust performance in the worst-case scenario. 
 \end{enumerate}

 \begin{figure}[!ht]
 	\centering
 	\includegraphics[width=\textwidth]{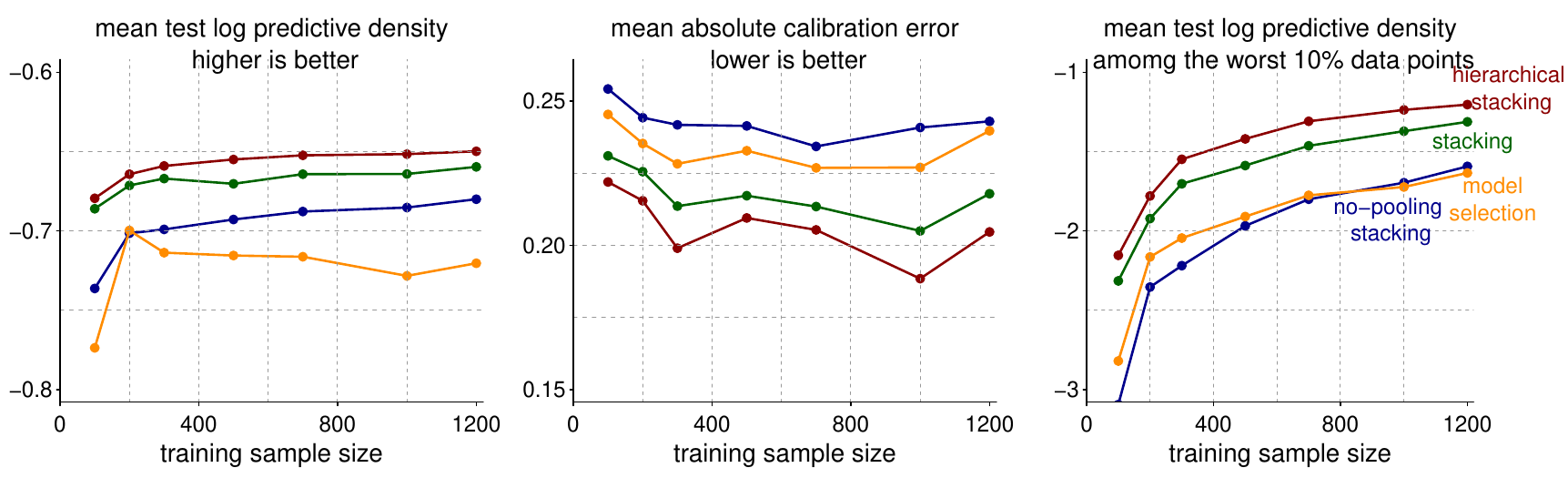}
 	\caption{\em Same comparisons as  Figure \ref{fig:well},  with training sample size varying from 100 to 1200. } \label{fig:well_vary_n}
 \end{figure}
 
 Figure \ref{fig:well_vary_n} presents the same comparisons of four methods while the training sample size $n_{\mathrm{train}}$ varies from 100 to 1200 (averaged over 50 random training-test  splits). In agreement with the heuristic in Figure \ref{fig:heuristic},  the most complex method---no-pooling stacking---performs especially poorly with a small sample size. By contrast, the simplest  method, model selection, reaches its peak elpd quickly with a moderate sample size  but cannot keep improving as training data size grows. The proposed hierarchical stacking performs the best in this setting  under all metrics.

 \subsection{Gaussian process regression weighted by another Gaussian process}\label{sec_gp}
 The local model averaging \eqref{eq_point_stacking_final} tangles a $x$-dependent weight $\w(x)$ and $x$-dependent individual prediction $p(y|x, M_k)$. If  the individual model $y|x,  M_k$ is already big enough to have exhausted ``all'' variability in input $x$, is there still a room for improvement by modeling local model weights $\w(x)$? The next  example  suggests a positive answer.

 Consider a regression problem with observations $\{y_i\}_{i=1}^n$ at one-dimensional  input locations $\{x_i\}_{i=1}^n$. To the data we fit a Gaussian process regression on the latent function $f$  with zero mean and squared exponential covariance, and independent noise $\epsilon$:
 \begin{equation}\label{eq_gp_regression}
 	y_i = f(x_i)+\epsilon _i, ~  \epsilon _i \sim  \mbox{normal}(0, \sigma),  ~ f(x) \sim  \mathcal{GP} \left( 0, a^2 \exp\left( -\frac{(x-x')^2}{\rho^2} \right) \right).
 \end{equation}
 \begin{figure}
 	\centering
 	\includegraphics[width=\textwidth]{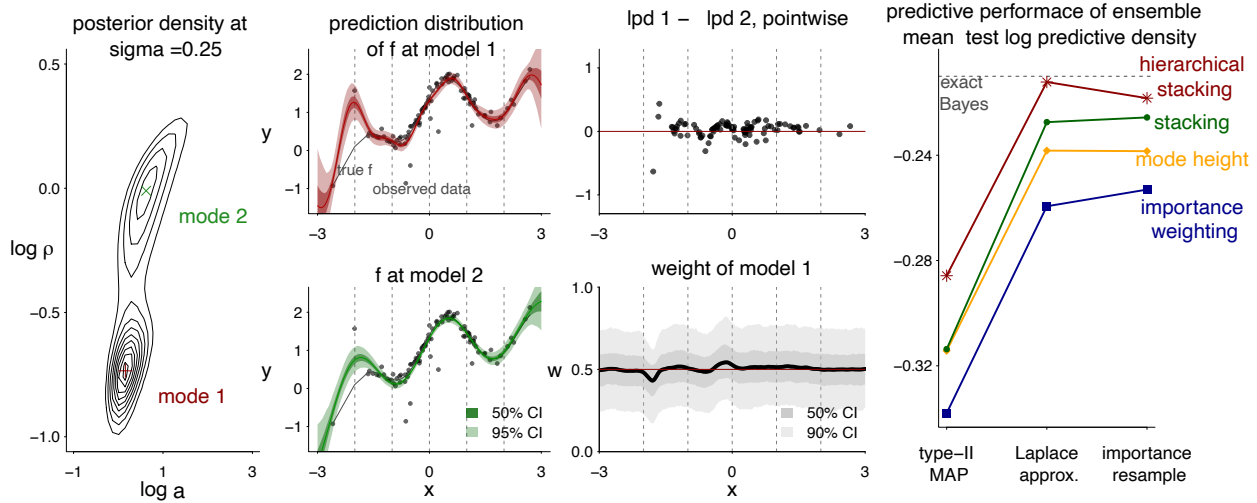}
 	\caption{\em From left to right, Column 1: posterior density at $\sigma=0.25$. At least two modes exist.  Column 2:  predictive distribution of $y$ from two modes. Column 3: the pointwise companion of log predictive density of the Laplace approximations at two modes, and the hierarchical stacking weight of mode 1. Column 4: the test data mean predictive densities of the weighted model, where individual components in the final model consists of either the  MAP, Laplace approximation, or importance sampling around the two modes, and the weighting methods include hierarchical stacking, complete-pooling stacking, mode heights and importance weighing. } \label{gp_point_hier_stacking}
 \end{figure}
 %  $K(x, x' | \rho, \alpha)  = \alpha^2 \exp\left( -\frac{(x-x')^2}{\rho^2}\right)$. $\alpha$ and $\rho$ are the signal-scale and length-scale.
 
 We adopt  training data  from \citet{neal1998regression}. They were generated such that the posterior distribution of  hyperparameters $ \theta=(a, \rho, \sigma)$ contains at least two  isolated modes (see the first panel in Figure \ref{gp_point_hier_stacking}). We  consider three mode-based  approximate inference of  $\theta | y$:   (a) Type-II MAP, where we   pick the local modes of  hyperparameters that maximizes the marginal density $\hat \theta = \arg\max p(\theta|y)$, and further draw local variables  $f| \hat \theta, y$,  (b) Laplace approximation of  $\theta|y$ around the mode, and (c) importance resampling  where we draw uniform samples near the mode and keep sample with probability  proportional to  $p\left(\theta | y\right)$. In the existence of two local modes $\hat \theta_1, \hat \theta_2$, we either obtain two MAPs or two  nearly-nonoverlapped draws, further leading to two predictive distributions.  \citet{yao2020stacking} suggests using complete-pooling stacking to combine two predictions, which shows advantages over other ad-hoc weighting strategies such as mode heights or importance weighting. 
 
 Visually,  mode 1 has smaller length scale,  more wiggling and attracted by training data.  Because of a better overall fit, it receives higher complete-stacking weights.   However, the wiggling tail makes its extrapolation less robust. We now run hierarchical stacking with $x$-dependent  weight $w_k(x)$ for mode $k=1, 2$ by placing another Gaussian process prior on unconstrained weight logit$(w_1(x))$ with squared exponential covariance, $$w_1(x) = \text{invlogit}\left(\alpha(x)\right),   ~~  \alpha (x)\sim \mathcal{GP}(0, K(x)).$$ 
 Despite using the same GP prior, this is not related to the training regression model \eqref{eq_gp_regression}. To evaluate how good the weighted ensemble is, we generate independent holdout test data $(\tilde x_i, \tilde y_i)$.  
 Both training and test inputs,  $x$ and  $\tilde x$,  are distributed from normal$(0, 1)$. 
 As presented in the rightmost panel in Figure \ref{gp_point_hier_stacking}, for all three approximate inferences,   hierarchical stacking  always has a higher  mean test log predictive density than complete pooling stacking and other weighting schemes.

 \begin{figure}
 	\centering
 	\includegraphics[width=0.55\linewidth]{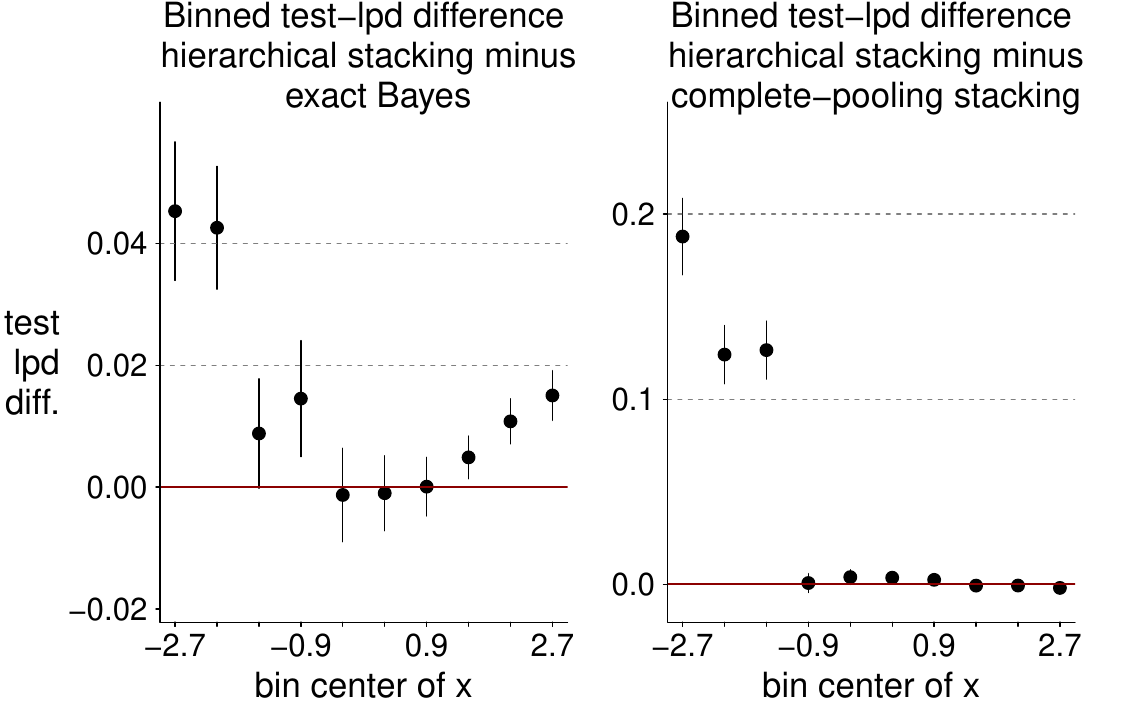}
 	\caption{\em  Compare hierarchical stacking  with  (left) stacking  of two Laplace approximations or (right) a long-chain exact Bayes from the true model. We compare the binned test log predictive densities over 10 equally spaced bins on $(-3,3)$.  A  positive value means  hierarchical stacking has a  better fit  than the counterpart.} \label{gp_covariate_shift}
 \end{figure} 
 
 In this dataset, exact MCMC is able to explore both posterior  modes in model  \eqref{eq_gp_regression} after a long enough sampling. Gaussian process regression equipped with  exact Bayesian inference can be regarded as the ``always true" model here.  Hierarchical stacking  achieves a similar average test data fit by combining two Laplace approximations. Furthermore,  hierarchical stacking has better predictive performance under covariate shift. To examine local model fit, we  generate another independent holdout test data, with results shown in Figure \ref{gp_covariate_shift}. This time the test inputs $\tilde x$ are from uniform$(-3,3)$. We divide the test data into 10 equally spaced bins and compute the mean test data log predictive density inside each bin. Compared with  exact inference,  hierarchical stacking has comparable performance in the bulk region of $x$, while it yields higher predictive densities in the tail,  suggesting a more reliable extrapolation.

 \subsection{U.S. presidential election forecast}\label{sec_election}
 We explore the use of hierarchical stacking on a practical example of forecasting  polls for the 2016 United States presidential election. Since the polling data are naturally divided into states, it  provides a suitable platform for hierarchical stacking in which model weights vary on states.   
 
 To create a pool of candidate models, we first  concisely describe
 the model of \citet{heidemanns2020updated},   an updated dynamic Bayesian forecasting model  \citep{linzer2013dynamic} for the  presidential election, and then follow up with different variations of it.
 Let $i$ be the index of an individual poll, $y_i$ the number of respondents that support the Democratic candidate, and $n_i$ the number of  respondents who support either the Democratic or the Republican candidate in the poll. Let $s[i]$  and $t[i]$ denote the state and time of poll $i$ respectively.  The model is expressed by
 \begin{align}
 	y_i &\sim \text{Binomial}(\theta_i, n_i), \nonumber \\
 	\theta_i &= 
 	\begin{cases}
 		\text{logit}^{-1}(\mu_{s[i],t[i]}^b + \alpha_i + \zeta_i^{\text{state}} + \xi_{s[i]}), & i \text{ is a state poll,} \\
 		\text{logit}^{-1}(\sum_{s=1}^S u_s\mu_{s,t[i]}^b + \alpha_i + \zeta_i^{\text{national}} + \sum_{s=1}^S u_s \xi_{s}), & i \text{ is a national poll,}
 	\end{cases} \label{eq:state_prob}
 \end{align}
 where superscripts denote parameter names, and subscripts their indexes. The term $\mu^b$ is the underlying support for the Democratic candidate, and $\alpha_i$, $\zeta$, and $\xi$ represent different bias terms. $\alpha_i$ is  further decomposed into
 \begin{equation}\label{eq:alpha}
 	\alpha_i = \mu_{p[i]}^c + \mu_{r[i]}^r + \mu_{m[i]}^m + z \epsilon_{t[i]},
 \end{equation}
 where $\mu^c$ is the house effect, $\mu^r$ polling population effect, $\mu^m$ polling mode effect, and $\epsilon$ an adjustment term for non-response bias. Furthermore, an autoregressive (AR(1)) prior is given to the $\mu^b$:
 $\mu_t^b | \mu_{t-1}^b \sim \text{MVN}(\mu_{t-1}^b, \Sigma^b),
 $
 where $\Sigma^b$ is the estimated state-covariance matrix and $\mu_T^b$ is the estimate from the fundamentals.
 
 Although we believe this model reasonably fits data, there is always room for improvement. 
 Our pool of candidates consists of eight models.
 \textbf{$M_1$}:  The  fundamentals-based model  of \citet{Abramowitz2008}.
 \textbf{$M_2$}: The  model of \citet{heidemanns2020updated}.
 \textbf{$M_3$}: $M_2$ without the fundamentals prior, $\mu_T^b = 0$.
 \textbf{$M_4$}: $M_2$ with an AR(2) structure, $\mu_t^b | \mu_{t-1}^b, \mu_{t-2}^b \sim \text{MVN}(0.5\mu_{t-1}^b \mu_{t-2}^b, \Sigma^b)$.
 \textbf{$M_5$}: simplify $M_2$ without polling population effect, polling mode effect, and the adjustment trend for non-response bias,  $\alpha_i = \mu_{p[i]}^c$.
 \textbf{$M_6$}: $M_2$ where we added  an extra regression term $\beta_{\mathrm{stock}} \mathrm{stock}_{t[i]}$ into model \eqref{eq:state_prob}   using the S\&P 500  index at the time of poll $i$.
 \textbf{$M_7$}: $M_2$ without the entire shared bias term, $\alpha_i = 0$.
 \textbf{$M_8$}: $M_2$ without hierarchical structure on states.

 We equip hierarchical stacking with either the  basic independent prior \eqref{eq_prior} or the state-correlated prior \eqref{eq_corr}. The prior correlation  $\Omega$ is estimated using a pool of state-level macro variables (election results in the past,  racial makeup,  educational attainment,  etc.), and has already been used in some of the individual models to partially pool state-level polling. We plug this pre-estimated prior correlation in the correlated stacking prior \eqref{eq_corr} and refer to it as ``hierarchical stacking with correlation" in later comparisons.

 Since the data are longitudinal, we evaluate different pooling approaches using a one-week-ahead forecast with an expanding window for each conducted poll. 
 We extract the fitted one-week-ahead predictions from each individual model, and train hierarchical stacking,  complete-pooling, and no-pooling stacking, and evaluate the combined models by computing their mean log predictive densities on the unseen data next week. To account for the non-stationarity discussed in Section \ref{sec:timeseries}, we only use the last four weeks prior to prediction day for training model averaging. In the end  we obtain a trajectory of  this back-testing performance of hierarchical stacking,  complete-pooling stacking, no-pooling stacking, and single model selection.

 \begin{figure}
 	\centering
 	\includegraphics[width=\textwidth]{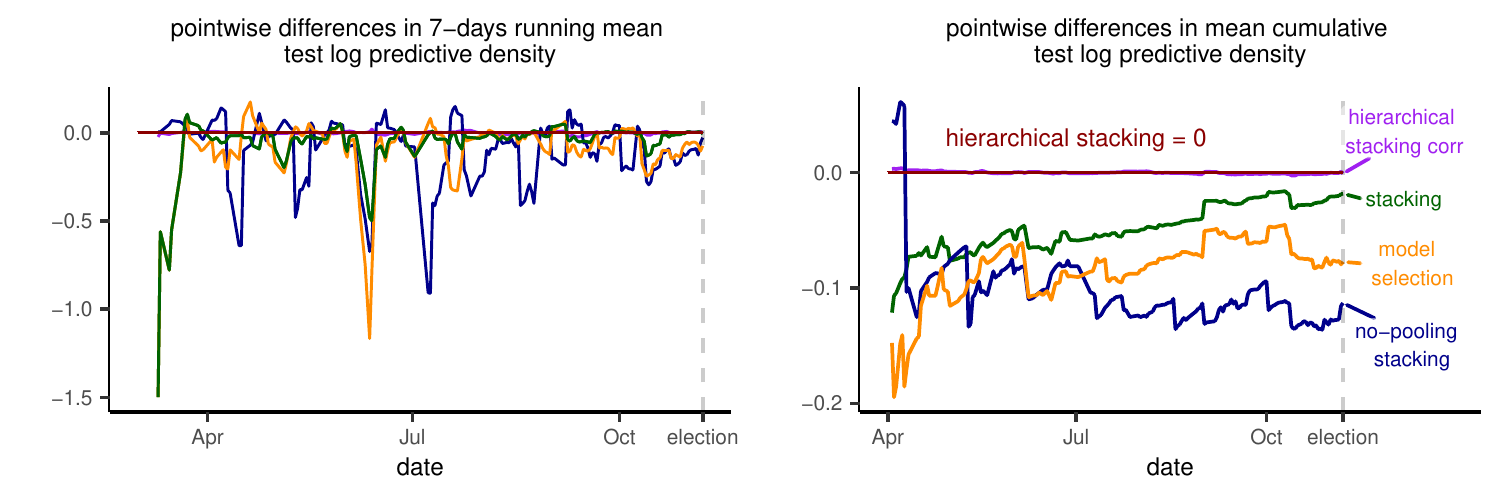}
 	\vspace{-2em}
 	\caption{\em Left: pointwise differences in 7-day running mean  log predictive  densities on one-week-ahead  test data, where we set the hierarchical stacking as benchmark 0. Right: pointwise differences in cumulative  average predictive log density by date. The advantage of hierarchical stacking is most noticeable toward the beginning, where there are fewer polls available.} \label{fig:polling_results}
 \end{figure}
 
 % Detailed evaluation in appendix
 
 \begin{figure}
 	\centering
 	\includegraphics[width=\textwidth]{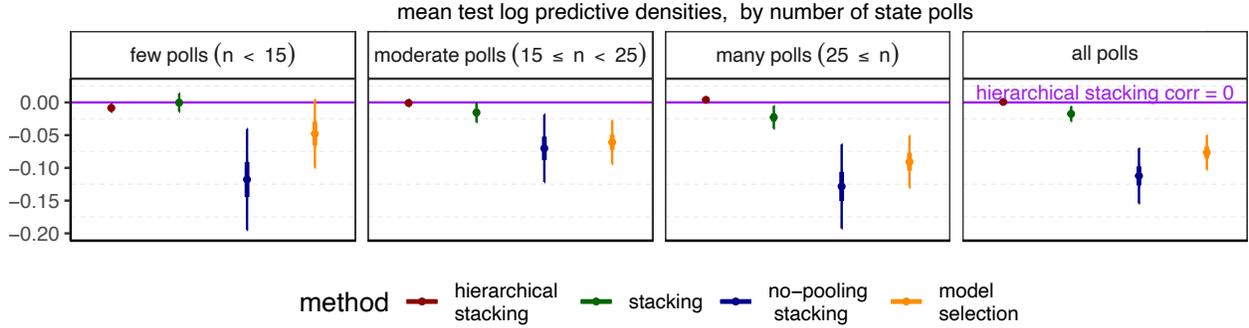} 
 	\caption{\em Mean test log predictive densities with 50\% and 95\% confidence intervals, among subsets of states with few, moderate, and many number of state polls, and among all states. 
 		Correlated hierarchical stacking is set as reference 0. It is better than independent hierarchical stacking when data are scarce. Complete-pooling stacking is close to hierarchical stacking in small states but worse in bigger states. 
 	} \label{fig:polling_mean_by_npolls}
 \end{figure}
 
 The left-hand side of Figure \ref{fig:polling_results} shows the seven-day running average of the one-week-ahead back-test log predictive density from models combined with various approaches. 
 The right-hand side of Figure \ref{fig:polling_results} shows the overall cumulative one-week-ahead back-test log predictive density.   We set the uncorrelated hierarchical stacking to be a constant zero for reference.
 Hierarchical stacking performs the best, followed by stacking, no-pooling stacking, and model selection respectively. 
 The advantage of hierarchical stacking is highest at the beginning and slowly decreases the closer we get to election day. As we move closer to the election, more polls become available, so the candidate models become better and also more similar since some models only differ in priors. As a result,  all combination methods eventually become more similar. No-pooling stacking has high variance and hence performs the worst out of all combination methods. Hierarchical stacking with correlated prior performs similarly to the independent approach, with a minor advantage at the beginning of the year, where the prior correlation stabilizes the state weights where the data are scarce, and later we see this advantage  more discernible in individual states.

 \begin{figure}
 	\centering
 	\includegraphics[width=\textwidth]{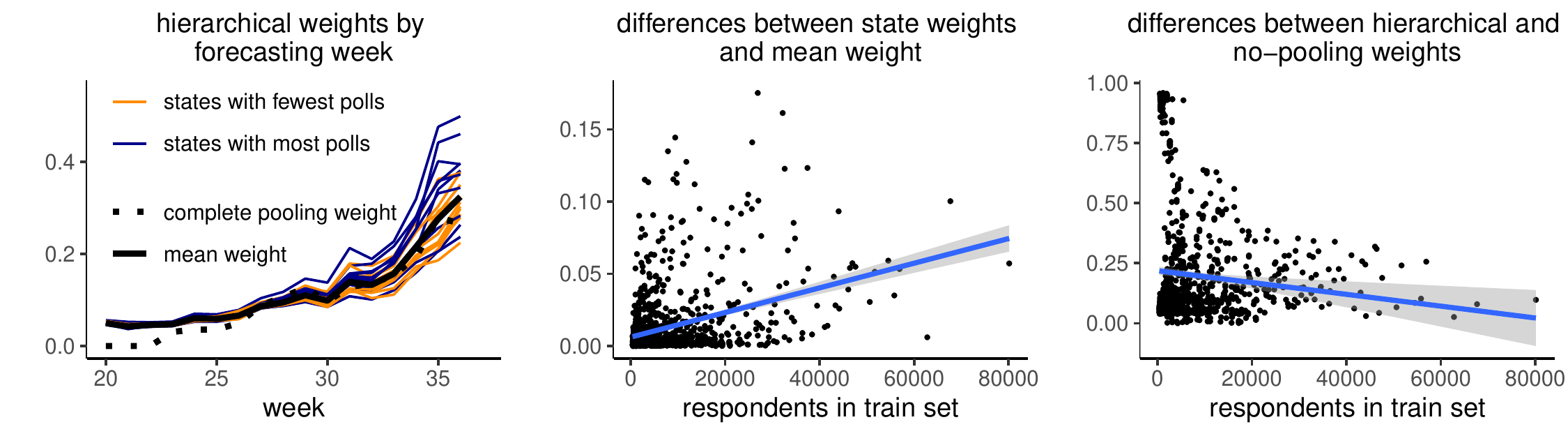}
 	\vspace{-1em}
 	\caption{\em Hierarchical stacking weights for $M_1$ in the polling example. Left: weights for $M_1$ of the 10 states with fewest polls and with most polls over time. Dotted line shows the complete-pooling stacking weight and the solid black line is  the nationwide mean weight.  States with fewer polls are shrunken more toward the mean. Middle: absolute differences between state-wise hierarchical stacking weights and the nationwide mean, against number of respondents. The blue line is the linear trend reference. States with smaller sample sizes are more pooled to the mean. Right: absolute differences between hierarchical stacking and no-pooling stacking weights, generally decreasing with bigger sample sizes.} \label{fig:polling_weights}
 \end{figure}
 
 To examine small area estimates,   we divided   states into three categories based on how many state polls were conducted. Figure \ref{fig:polling_mean_by_npolls} shows the overall mean pointwise differences in test log predictive densities divided by these categories, along with a fourth panel over all states. No-pooling stacking performs the worst in all panels. An explanation for that could be that we are using a four-week moving window to tackle non-stationarity, which might not contain enough data for the no-pooling method. The variance of the no-pooling is amended by the hierarchical approach, which performs on par with stacking with scarcer data and outperforms it otherwise. 
 %We can also observe the advantage of using prior correlations in hierarchical stacking in the first panel.  
 Figure \ref{fig:polling_cumulative_by_state}  in Appendix \ref{app:polling}   shows the state-level cumulative log predictive  density  by time.
 With a large number of state polls available, for example, close to election day in Florida and North Carolina, no-pooling stacking performs well.   In states with fewer polls,  no-pooling stacking is unstable.  Hierarchical stacking  alleviates this instability while retaining enough flexibility for a good performance when large data come in.

 Figure \ref{fig:polling_weights} illustrates how cell size affects  the  pooling effect. The first panel shows the  hierarchical stacking state-wise weights for the first candidate model $w_{1j}$ as a function of date. For either early-date forecasts or states with  few polls,    hierarchical
 stacking weights are more pooled toward the shared nationwide mean.  The middle and right panels compare the difference between state-wise hierarchical stacking weights and the nationwide mean, or with no-pooling weights, against the total number of respondents for each state and prediction date. The cells with more observed data are less pooled and closer to their no-pooling optimums, and vice versa.

 \section{Discussion}\label{sec_discuss}
 \subsection{Robustness in small areas}\label{dis_shrink}
 The input-varying model averaging is designed to improve both the overall averaged prediction $\E_{\tilde y, \tilde x} (\log p_{\mathrm{}}(\tilde y| \tilde x))$ and conditional prediction $\E_{\tilde y | \tilde x= x_0} (\log p_{\mathrm{}}(\tilde y| \tilde x))$, whereas these two  tasks are subject to a trade-off in complete-pooling stacking. 
 
 In addition,  the partial pooling prior \eqref{eq_prior}  borrows information from  other cells, which stabilizes model weights in small cells where there are not enough data for no-pooling stacking.   For a crude mean-field approximation,   the likelihood in the discrete model \eqref{stacking_obj_h2} is approximately $\prod_{j, k} \n(\alpha_{jk}^{\mathrm{mode}}, \lambda_{jk})
 $, where $\alpha^{\mathrm{mode}}= \arg\max_\alpha \sum_{i=1}^n \log  \left( \sum_{k=1}^K w_k(x_i) p_{k,-i}\right)$ is the unconstrained no-pooling stacking weight, and $- \lambda_{jk}^{-2} =  \frac{\partial^2}{\partial \alpha^2_{jk}}|_{\mathrm{mode}} \sum_{i=1}^n \log  \left( \sum_{k=1}^K w_k(x_i) p_{k,-i}\right)$ is the diagonal element of the Hessian.  Because $\alpha_{jk}$  appears in $n_j$ terms of the summation, $\lambda_{jk} = \mathcal{O}(n_j^{-{1}/{2}})$ for a given $k$. Combined with the prior  $\alpha_{jk} \sim  \n(\mu_k, \sigma_k)$, the conditional posterior mean of the  $k$-th model weight in the $j$-th cell is the usual precision-weighed average of the no-pooling optimum and the shared mean: $\alpha_{jk}^{\mathrm{post}} \coloneqq \E( \alpha_{jk} | \lambda_{jk}, \sigma_k, \mu_k, \mathcal{D}) \approx
 ({{\lambda^{-2}_{jk}} \alpha_{jk}^{\mathrm{mode}} +  {\sigma_{k}^{-2}} \mu_k  } )({\lambda^{-2}_{jk}}+ {\sigma_{k}^{-2} })^{-1}.$
 Hence for a given model $k$,   $|\alpha_{jk}^{\mathrm{mode}}- \alpha_{jk}^{\mathrm{post}} |  = \mathcal{O}({n_j^{-1}})$. Larger pooling usually occurs in  smaller cells. This pooling factor is in line with Figure \ref{fig:polling_weights} and general ideas in hierarchical modeling \citep{gelman2006bayesian}.
 Our full-Bayesian solution  also integrates out $\mu_k$ and $\sigma_k$, which further partially pools across models.
 
 The possibility of partial pooling across cells encourages  open-ended data gathering. In the election polling example, even if a pollster is only interested in the forecast of one state, they could gather polling  data from everywhere else, fit multiple models, evaluate models on each state, and use hierarchical stacking to construct model averaging, which is especially applicable when the  state of interest does not  have enough polls to conduct a meaningful  model evaluation individually. In this context  swing states naturally have more state polls, so that  the small-area estimation may not be crucial, but in general, we conjecture that the hierarchical techniques can be useful for model evaluation and averaging in a more general domain adaptation setting. Without going into extra details,   hierarchical models are as useful for making inferences from a subset of data (small-area estimation) as to generalize data to a new area (extrapolation). When the latter task is the focus,  hierarchical stacking only needs to redefine the leave-one-data-out predictive density \eqref{eq_pki} by leave-one-cell-out  
 $p_{k, -i} \coloneqq  \int_{\Theta_k}  p (y_i|\theta_k, x_i, z_i,  M_k) p\left(\theta_k| M_k, \{(x_{i^\prime}, z_{i^\prime}, y_{i^\prime}) : {x_{i^\prime}} \neq x_{i} \}  \right) d\theta_k.$

 \subsection{Using hierarchical stacking to understand local model fit}
 We use hierarchical stacking  not only as a tool for optimizing predictions but also as a way to understand problems with fitted models.
 The fact that hierarchical stacking is being used is already an implicit recognition that we have different models that perform better or worse in different subsets of data, and it can  valuable to explore the conditions under which different models are fitting poorly, reveal potential problems in the data or data processing, and point to directions for individual-model improvement. 
 
 \citet{vehtari2017practical} and \citet{gelman2020bayesian} suggested to examine the pointwise cross-validated log score $\log p_{k,-i}$
 %\int_{\Theta_k}  p (y_i|\theta_k, x_i, M_k) p\left(\theta_k| M_k, \{(x_{i^\prime}, y_{i^\prime}) : {i^{\prime}\neq i}\}  \right) d\theta_k$ 
 as a function of $x_i$, and  see if there is a pattern or explanation for why some observations are harder to fit than others. 
 For example, the first panel of Figure \ref{fig:well_pattern} seems to indicate that Model 1 is incapable of fitting the rightmost 10--15  non-switchers. However, $\log p_{k,-i}$ contains a non-vanishing variance since $y_i$ is a single realization from $p_t(y|x_i)$. Despite its merit in exploratory data analysis, it is hard to  tell from  the raw cross validation scores whether Model 1 is incapable of fitting high arsenic or is merely unlucky for these few points. The hierarchical stacking weight $\w(x)$ provides a smoothed summary of how each model fits locally in $x$ and comes with built-in Bayesian uncertainty estimation. For example, in Figure \ref{gp_point_hier_stacking}, $\log p_{1,-i}- \log p_{2,-i}$ has a slightly inflated right tail, but this small bump is smoothed by stacking,  and the local weight therein is close to $(0.5, 0.5)$.

 \subsection{Retrieving a formal likelihood from an optimization objective}\label{sec_stacking_post}
 The implication of hierarchical stacking \eqref{stacking_obj_h2} being a formal Bayesian model  is that we can evaluate  its posterior distribution as with a regular Bayesian model. For example, we can run (approximate) leave-one-out cross validation of the the stacking posterior $p(\w| \mathcal D_{-i}) \propto p(\w| \mathcal D) /  p(y_i| x_i, \w)= p(\w| \mathcal D) /   \left(\sum_{k=1}^K w_k(x_i)  p_{k,-i}\right)$. In practice, we only need to fit the stacking model \eqref{stacking_obj_h2} once, collect a size-$S$  MCMC sample of stacking parameters from the full posterior $p(\w| \mathcal D)$, denoted by  $\{(w_{k1}(x_i), \dots, w_{kS}(x_i))\}_{i,k}$, compute the PSIS-stabilized  importance ratio of each draw $r_{is} \approx \left(\sum_{k=1}^K w_{ks}(x_i)  p_{k,-i}\right)^{-1}$,  
 and then  compute the mean  leave-one-out cross validated log predictive density  to evaluate the overall out-of-sample fit of  the final stacked model:
 \begin{align} \label{eq_st_loo}
 	\mathrm{elpd}^\mathrm{loo}_\mathrm{stacking}&=
 	\sum_{i=1}^n \log \int_{\mathcal{S}_K} p(y_i|x_i, \w(x_i)) p(\w(x_i)| \mathcal D_{-i}) d (\w(x_i) ) \nonumber\\ &\approx \sum_{i=1}^n \log  \frac{\sum_{s=1}^S   \left(r_{is}  \sum_{k=1}^K w_{ks}(x_i)  p_{k,-i}\right)
 	} {\sum_{s=1}^S r_{is} }. 
 \end{align}
 As discussed in Section \ref{sec_related}, the same task of  out-of-sample prediction evaluation in an  optimization-based stacking requires double cross validation (refit the model $n(n-1)$ times if using leave-one-out), but now becomes almost computationally free by   post-processing posterior draws of stacking.
 
 The Bayesian justification above  applies to log-score stacking.
 In general, we cannot convert an arbitrary objective function into a log density---its exponential is not necessarily integrable, and, even if it is, the resulted density does not necessarily correspond to a relevant model. 
 Take  linear regression for example,  the ordinary least square estimate  $\arg\min_{\beta} \sum_{i=1}^n (y_i- x_i^T \beta)^2$ is identical to  the  maximum likelihood estimate of $\beta$ from  a probabilistic model $y_i|x_i, \beta, \sigma \sim \n( x_i^T \beta, \sigma)$ with flat priors. But the directly adapted ``log posterior density" from the  negative $L^2$ loss, $ \log p( \beta | y) =   -\sum_{i=1}^n (y_i- x_i^T \beta)^2+ C$,  differs from the Bayesian inference of the latter probabilistic model unless $\sigma \equiv 1$. The hierarchical stacking framework may still apply to other scoring rules, while we leave their Bayesian 
 calibration for future research.

 %\subsection{Stacking as an inference paradigm}
 %At a higher level, stacking is not only a model averaging tool, but also an inference paradigm, sitting in parallel to Bayes and empirical Bayes.  Consider one \emph{single} model with data $y$ and parameter $\theta$. There are three related paradigms to draw inference:
 %\begin{itemize}
 %   \item In a full-Bayes view, the posterior inference has to be $$\log p(\theta|y) = \sum_{i}  \log p(y_{i}| \theta) + \log p_{\mathrm{prior}} (\theta) + C.  $$
 %  \item  From the (optimization-based) stacking point of view, the goal of inference is better fit the data with respect to some predictive metric.  Equipped with leave-one-out log score, stacking  estimates the best parameter by 
 %    $$\hat  \theta=\arg\max_{\theta}  \sum_{i}  \log p(y_{i}| y_{-i}, \theta)+ \log p_{\mathrm{prior}} (\theta). $$
 %  \citet{yao2020stacking} discussed this stacking-as-single-model-training idea in the context of multimodal posterior, in which full-Bayes is often overconfident.
 %  This  approach is also  related to empirical Bayes.   
 %\item  This present paper bridges these two ends. 
 %The Bayesian version of stacking can be viewed as an single-model-inference defined by the modified log posterior density:  
 % $$\log p(\theta|y)  = \sum_{i}  \log p(y_{i}| y_{-i}, \theta) + \log p_{\mathrm{prior}} (\theta) + C. $$
 %\end{itemize}

 \subsection{Statistical workflow for  black box algorithms}\label{sec_two_culture} 
 Unlike our previous work \citep{yao2018using} that merely applied stacking to Bayesian models,  the present paper converts  optimization-based  stacking itself  into a formal Bayesian model, analogous to reformulating a least-squares estimate into a  normal-error regression.  \citet{breiman2001statistical} distinguished between two cultures in statistics: the \emph{generative modeling}  culture assumes that  data come from a given stochastic  model, whereas the \emph{algorithmic modeling} treats the data mechanism unknown and advocates black box learning  for the goal of predictive accuracy.  As a method that Breiman himself introduced \citep[along with][]{wolpert1992stacked},  stacking is arguably closer to the  algorithmic end of the spectrum, while our  hierarchical Bayesian  formulation pulls it toward the generative modeling end. 
 
 Such a full-Bayesian formulation  is appealing for two reasons.  First, the  generative modeling language facilitates flexible data inclusion during model averaging. For example, the election forecast model contains various outcomes on state polls and national polls from several pollsters,  and pollster-, state- and national-level fundamental predictors, and prior state-level correlations.  It is not  clear how methods like bagging or boosting can include all of them. Data do not have to conveniently arrive in independent $(x_i,y_i)$ pairs and compliantly await an algorithm to train upon. 
 Second,  instead of a static algorithm, hierarchical stacking is now part of a statistical workflow \citep{gelman2020bayesian}. It then enjoys all the flexibility of Bayesian model building, fitting, and checking---we can incorporate other Bayesian shrinkage priors as add-on components without reinventing them; 
 we can run a posterior predictive check or  approximate leave-one-out cross validation \eqref{eq_st_loo} to assess the out-of-sample performance of the final stacking model; 
 we may even further select, stack, or hierarchically stack  a sequence of hierarchical stacking model with various priors and parametric forms. 
 Looking ahead,  the success of this work encourages more use of generative Bayesian modeling to improve
 other black box prediction algorithms.

 \vspace{0.2cm}
\subsection*{Acknowledgements}
 	The authors would like to thank the National Science Foundation, Institute of Education Sciences, Office of Naval Research, National Institutes of Health, Sloan Foundation, and Schmidt Futures for partial financial support. Gregor Pirš is supported by the Slovenian Research Agency Young researcher grant.

 \bibliographystyle{apalike}
 \bibliography{hstack}

\begin{thebibliography}{}

\bibitem[Abramowitz, 2008]{Abramowitz2008}
Abramowitz, A.~I. (2008).
\newblock Forecasting the 2008 presidential election with the time-for-change
  model.
\newblock {\em Political Science and Politics}, 41:691--695.

\bibitem[Bernardo and Smith, 1994]{bernardo1994bayesian}
Bernardo, J.~M. and Smith, A.~F. (1994).
\newblock {\em {B}ayesian Theory}.
\newblock Wiley, Chichester.

\bibitem[Bhatt et~al., 2017]{bhatt2017improved}
Bhatt, S., Cameron, E., Flaxman, S.~R., Weiss, D.~J., Smith, D.~L., and
  Gething, P.~W. (2017).
\newblock Improved prediction accuracy for disease risk mapping using
  {G}aussian process stacked generalization.
\newblock {\em Journal of The Royal Society Interface}, 14.

\bibitem[Breiman, 1996]{breiman1996stacked}
Breiman, L. (1996).
\newblock Stacked regressions.
\newblock {\em Machine Learning}, 24:49--64.

\bibitem[Breiman, 2001]{breiman2001statistical}
Breiman, L. (2001).
\newblock Statistical modeling: the two cultures.
\newblock {\em Statistical Science}, 16:199--231.

\bibitem[B{\"u}rkner et~al., 2020]{burkner2019approximate}
B{\"u}rkner, P.-C., Gabry, J., and Vehtari, A. (2020).
\newblock Approximate leave-future-out cross-validation for {Bayesian} time
  series models.
\newblock {\em Journal of Statistical Computation and Simulation},
  90:2499--2523.

\bibitem[Clarke, 2003]{clarke2003comparing}
Clarke, B. (2003).
\newblock Comparing {B}ayes model averaging and stacking when model
  approximation error cannot be ignored.
\newblock {\em Journal of Machine Learning Research}, 4:683--712.

\bibitem[Clyde and Iversen, 2013]{clyde2013bayesian}
Clyde, M. and Iversen, E.~S. (2013).
\newblock {B}ayesian model averaging in the {M}-open framework.
\newblock In {\em {B}ayesian Theory and Applications}, pages 483--498. Oxford
  University Press.

\bibitem[Fushiki, 2020]{fushiki2020selection}
Fushiki, T. (2020).
\newblock On the selection of the regularization parameter in stacking.
\newblock {\em Neural Processing Letters}, pages 1--12.

\bibitem[Gelman and Pardoe, 2006]{gelman2006bayesian}
Gelman, A. and Pardoe, I. (2006).
\newblock Bayesian measures of explained variance and pooling in multilevel
  (hierarchical) models.
\newblock {\em Technometrics}, 48:241--251.

\bibitem[Gelman et~al., 2020]{gelman2020bayesian}
Gelman, A., Vehtari, A., Simpson, D., Margossian, C.~C., Carpenter, B., Yao,
  Y., Kennedy, L., Gabry, J., B{\"u}rkner, P.-C., and Modr{\'a}k, M. (2020).
\newblock Bayesian workflow.
\newblock {\em arXiv:2011.01808}.

\bibitem[Ghasemian et~al., 2020]{ghasemian2020stacking}
Ghasemian, A., Hosseinmardi, H., Galstyan, A., Airoldi, E.~M., and Clauset, A.
  (2020).
\newblock Stacking models for nearly optimal link prediction in complex
  networks.
\newblock {\em Proceedings of the National Academy of Sciences},
  117:23393--23400.

\bibitem[Heidemanns et~al., 2020]{heidemanns2020updated}
Heidemanns, M., Gelman, A., and Morris, G.~E. (2020).
\newblock An updated dynamic {B}ayesian forecasting model for the {US}
  presidential election.
\newblock {\em Harvard Data Science Review}, 2.

\bibitem[Heiner et~al., 2019]{heiner2019structured}
Heiner, M., Kottas, A., and Munch, S. (2019).
\newblock Structured priors for sparse probability vectors with application to
  model selection in {M}arkov chains.
\newblock {\em Statistics and Computing}, 29:1077--1093.

\bibitem[Hoeting et~al., 1999]{hoeting1999bayesian}
Hoeting, J.~A., Madigan, D., Raftery, A.~E., and Volinsky, C.~T. (1999).
\newblock Bayesian model averaging: a tutorial.
\newblock {\em Statistical Science}, pages 382--401.

\bibitem[Jacobs et~al., 1991]{jacobs1991adaptive}
Jacobs, R.~A., Jordan, M.~I., Nowlan, S.~J., and Hinton, G.~E. (1991).
\newblock Adaptive mixtures of local experts.
\newblock {\em Neural Computation}, 3:79--87.

\bibitem[Jordan and Jacobs, 1994]{jordan1994hierarchical}
Jordan, M.~I. and Jacobs, R.~A. (1994).
\newblock Hierarchical mixtures of experts and the {EM} algorithm.
\newblock {\em Neural Computation}, 6:181--214.

\bibitem[Kamary et~al., 2019]{kamary2014testing}
Kamary, K., Mengersen, K., Robert, C.~P., and Rousseau, J. (2019).
\newblock Testing hypotheses via a mixture estimation model.
\newblock {\em arXiv:1412.2044}.

\bibitem[Le and Clarke, 2017]{le2017bayes}
Le, T. and Clarke, B. (2017).
\newblock A {B}ayes interpretation of stacking for $\mathcal{M}$-complete and
  $\mathcal{M}$-open settings.
\newblock {\em Bayesian Analysis}, 12:807--829.

\bibitem[LeBlanc and Tibshirani, 1996]{leblanc1996combining}
LeBlanc, M. and Tibshirani, R. (1996).
\newblock Combining estimates in regression and classification.
\newblock {\em Journal of the American Statistical Association}, 91:1641--1650.

\bibitem[Linzer, 2013]{linzer2013dynamic}
Linzer, D.~A. (2013).
\newblock Dynamic {B}ayesian forecasting of presidential elections in the
  states.
\newblock {\em Journal of the American Statistical Association}, 108:124--134.

\bibitem[Meng and van Dyk, 1999]{meng1999seeking}
Meng, X.-L. and van Dyk, D.~A. (1999).
\newblock Seeking efficient data augmentation schemes via conditional and
  marginal augmentation.
\newblock {\em Biometrika}, 86:301--320.

\bibitem[Neal, 1998]{neal1998regression}
Neal, R.~M. (1998).
\newblock Regression and classification using {G}aussian process priors.
\newblock In Bernardo, J., Berger, J.~O., Dawid, A.~P., and Smith, A. F.~M.,
  editors, {\em Bayesian Statistics}, volume~6, pages 475--501. Oxford
  University Press.

\bibitem[Piironen and Vehtari, 2017]{piironen2017comparison}
Piironen, J. and Vehtari, A. (2017).
\newblock Comparison of {B}ayesian predictive methods for model selection.
\newblock {\em Statistics and Computing}, 27(3):711--735.

\bibitem[Pirš and Štrumbelj, 2019]{pirs2019bayesian}
Pirš, G. and Štrumbelj, E. (2019).
\newblock {B}ayesian combination of probabilistic classifiers using
  multivariate normal mixtures.
\newblock {\em Journal of Machine Learning Reserach}, 20:1--18.

\bibitem[Polson and Scott, 2011]{polson2011data}
Polson, N.~G. and Scott, S.~L. (2011).
\newblock Data augmentation for support vector machines.
\newblock {\em Bayesian Analysis}, 6:1--23.

\bibitem[Reid and Grudic, 2009]{reid2009regularized}
Reid, S. and Grudic, G. (2009).
\newblock Regularized linear models in stacked generalization.
\newblock In {\em International Workshop on Multiple Classifier Systems}, pages
  112--121.

\bibitem[{\c{S}}en and Erdogan, 2013]{csen2013linear}
{\c{S}}en, M.~U. and Erdogan, H. (2013).
\newblock Linear classifier combination and selection using group sparse
  regularization and hinge loss.
\newblock {\em Pattern Recognition Letters}, 34:265--274.

\bibitem[Shimodaira, 2000]{shimodaira2000improving}
Shimodaira, H. (2000).
\newblock Improving predictive inference under covariate shift by weighting the
  log-likelihood function.
\newblock {\em Journal of Statistical Planning and Inference}, 90:227--244.

\bibitem[Sill et~al., 2009]{sill2009feature}
Sill, J., Tak{\'a}cs, G., Mackey, L., and Lin, D. (2009).
\newblock Feature-weighted linear stacking.
\newblock {\em arXiv:0911.0460}.

\bibitem[{Stan Development Team}, 2020]{stan2020}
{Stan Development Team} (2020).
\newblock {\em Stan Modeling Language Users Guide and Reference Manual}.
\newblock Version 2.25.0, \url{http://mc-stan.org}.

\bibitem[Sugiyama et~al., 2007]{sugiyama2007covariate}
Sugiyama, M., Krauledat, M., and M\"{u}ller, K.-R. (2007).
\newblock Covariate shift adaptation by importance weighted cross validation.
\newblock {\em Journal of Machine Learning Research}, 8:985--1005.

\bibitem[Sugiyama and M{\"u}ller, 2005]{sugiyama2005input}
Sugiyama, M. and M{\"u}ller, K.-R. (2005).
\newblock Input-dependent estimation of generalization error under covariate
  shift.
\newblock {\em Statistics and Decisions}, 23:249--280.

\bibitem[Svens\"{e}n and Bishop, 2003]{svenswn2003bayesian}
Svens\"{e}n, M. and Bishop, C.~M. (2003).
\newblock Bayesian hierarchical mixtures of experts.
\newblock In {\em Uncertainty in Artificial Intelligence}.

\bibitem[Tracey and Wolpert, 2016]{tracey2016reducing}
Tracey, B.~D. and Wolpert, D.~H. (2016).
\newblock Reducing the error of {M}onte {C}arlo algorithms by learning control
  variates.
\newblock In {\em Conference on Neural Information Processing Systems}.

\bibitem[Vehtari et~al., 2020]{vehtari2018loo}
Vehtari, A., Gabry, J., Magnusson, M., Yao, Y., Bürkner, P.-C., Paananen, T.,
  and Gelman, A. (2020).
\newblock loo: {E}fficient leave-one-out cross-validation and {WAIC} for
  bayesian models.
\newblock {R} package version 2.4.1, \url{{https://mc-stan.org/loo/}}.

\bibitem[Vehtari et~al., 2017]{vehtari2017practical}
Vehtari, A., Gelman, A., and Gabry, J. (2017).
\newblock Practical {B}ayesian model evaluation using leave-one-out
  cross-validation and {WAIC}.
\newblock {\em Statistics and Computing}, 27:1413--1432.

\bibitem[Vehtari and Ojanen, 2012]{vehtari2012survey}
Vehtari, A. and Ojanen, J. (2012).
\newblock A survey of {B}ayesian predictive methods for model assessment,
  selection and comparison.
\newblock {\em Statistics Surveys}, 6:142--228.

\bibitem[Vehtari et~al., 2019]{vehtari2015pareto}
Vehtari, A., Simpson, D., Gelman, A., Yao, Y., and Gabry, J. (2019).
\newblock {P}areto smoothed importance sampling.
\newblock {\em arXiv:1507.02646}.

\bibitem[Waterhouse et~al., 1996]{waterhouse1996bayesian}
Waterhouse, S., MacKay, D., and Robinson, T. (1996).
\newblock Bayesian methods for mixtures of experts.
\newblock In {\em Advances in Neural Information Processing Systems}.

\bibitem[Wolpert, 1992]{wolpert1992stacked}
Wolpert, D.~H. (1992).
\newblock Stacked generalization.
\newblock {\em Neural Networks}, 5:241--259.

\bibitem[Yang and Dunson, 2014]{yang2014minimax}
Yang, Y. and Dunson, D.~B. (2014).
\newblock Minimax optimal {B}ayesian aggregation.
\newblock {\em arXiv:1403.1345}.

\bibitem[Yao, 2019]{yao2019bayesian}
Yao, Y. (2019).
\newblock {B}ayesian aggregation.
\newblock {\em arXiv:1912.11218}.

\bibitem[Yao et~al., 2020]{yao2020stacking}
Yao, Y., Vehtari, A., and Gelman, A. (2020).
\newblock Stacking for non-mixing {B}ayesian computations: The curse and
  blessing of multimodal posteriors.
\newblock {\em arXiv:2006.12335}.

\bibitem[Yao et~al., 2018]{yao2018using}
Yao, Y., Vehtari, A., Simpson, D., and Gelman, A. (2018).
\newblock Using stacking to average {B}ayesian predictive distributions (with
  discussion).
\newblock {\em Bayesian Analysis}, 13:917--1007.

\bibitem[Zhang and Zhou, 2011]{zhang2011sparse}
Zhang, L. and Zhou, W.-D. (2011).
\newblock Sparse ensembles using weighted combination methods based on linear
  programming.
\newblock {\em Pattern Recognition}, 44:97--106.

\end{thebibliography}

\vspace{1cm}
 \appendix
 \section*{Appendix} 
 \section{A theoretical example}\label{theory_example}
 Before theorem proofs, we first consider a toy example. It can be solved with a closed form solution and illustrates  how  Theorems 1--4 apply. 
 
 As shown in Figure \ref{fig:toy},  the true data generating process (DG) of the outcome is  $y \sim \mathrm{uniform}(-3,1)$, and there are two given (pre-trained) models with  spike-and-slab predictive distributions 
 \begin{align*}
 	&M_1: y \sim .99 ~\mathrm{uniform}(-4,0) + .01 ~\mathrm{uniform}(0,2),\\
 	&M_2: y \sim .99 ~\mathrm{uniform}(0,2) + .01 ~\mathrm{uniform}(-4,0),   
 \end{align*}
 which yield piece-wise constant  predictive densities 
 \begin{align*}
 	&p_1(y)=0.99/4 \mathbbm{1}(y\in[-4,0])+ 0.01/2  \mathbbm{1}(y\in[0,2]),\\
 	&p_2(y)=0.99/2 \mathbbm{1}(y\in[0,2]) + 0.01 / 4 \mathbbm{1}(y\in[-4,0]).   
 \end{align*}
 Using our notation in Section \ref{sec_bound_2}, the region in which $M_1$  predominates is $\mathcal{J}_1=[-4,0]$, and $M_2$ outperforms on  $\mathcal{J}_2=(0,2]$ (the conventions send the tie $\{0\}$ to $M_1$). We count their masses with respect to the true DG: $\Pr(\mathcal{J}_1)=3/4$ and  $\Pr(\mathcal{J}_2)=1/4$. %Now consider the weighted density $   (w_1 0.99/4 +  w_2 0.01/4)  \mathbbm{1}(y\in[-4,0])+  (w_2  0.99/2+ w_1 0.01/2) \mathbbm{1}(y\in[0,2])$, 
 
 Complete-pooling stacking solves 
 \begin{align*}
 	\max_{\w\in \mathcal{S}_2} \int_{-3}^1 1/4  \log \Bigl(  &(w_1 0.99/4 +  w_2 0.01/4)  \mathbbm{1}(y\in[-4,0]) +\\
 	& (w_2  0.99/2+ w_1 0.01/2) \mathbbm{1}(y\in[0,2]) \Bigr) dy.
 \end{align*}
 The exact optimal weight is $w_1=0.755$, close to the mass $\Pr(\mathcal{J}_1)=0.75$ and is irrelevant to the height of each regions. For instance, if the right bump in $M_2$ shrinks to the interval $(0,1)$ (i.e., $y \sim 0.99 ~\mathrm{uniform}(0,1) + 0.01 ~\mathrm{uniform}(-4,0)$), then the winning margin therein is twice as big,  while the winning probability as well as the stacking weight remains nearly unchanged.

 \begin{figure}[!b]
 	\centering
 	\includegraphics[width=\textwidth]{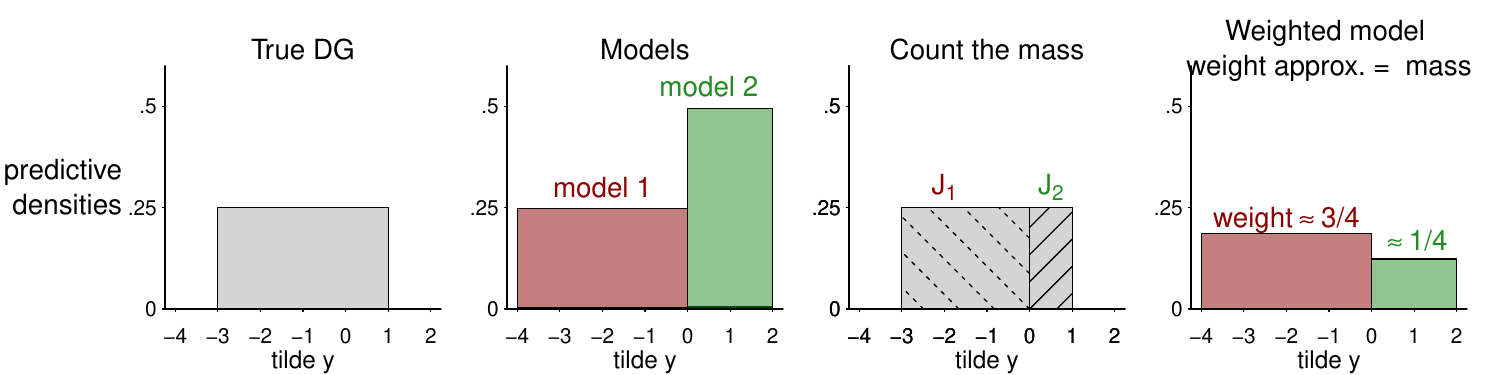}
 	\caption{\em The true data is generated from $\mbox{uniform}(-3,1)$ and there are two models with spikes and slabs on intervals $(-4,0)$ and $(0,2)$ respectively.  $\mathcal{J}_1$ and $\mathcal{J}_2$ in Theorem \ref{thm_stacking_seperate_y} are $[-4,0]$ and $(0,2]$, with DG probabilities 3/4 and 1/4. The stacking weights are approximately  these two probabilities, and irrelevant to how high the winning margins are.} \label{fig:toy}
 \end{figure}
 At the pointwise level, stacking behaves as  a   plurality voting system:  as long a model ``wins" a sub-region (subject to  a prefixed threshold  $L$ in condition \eqref{eq_sep_y}),  \emph{the winner take all} and its winning margin no longer matters.   
 
 By contrast, likelihood-based model averaging techniques such as Bayesian model averaging \citep[BMA,][]{hoeting1999bayesian} and pseudo-Bayesian model averaging \citep{yao2018using} are analogies of  \emph{proportional  representation}: every count of the winning margin matters. For illustration, we vary the slab probability $\delta$ in Model 1 and 2:   \begin{align*}
 	M_1 \mid \delta:~~ y \sim (1-\delta) \times \mathrm{uniform}(-4,0) + \delta \times\mathrm{uniform}(0,2),\\
 	M_2\mid \delta: ~~y \sim (1-\delta) \times\mathrm{uniform}(0,2) + \delta \times\mathrm{uniform}(-4,0).
 \end{align*}
 The left column  in Figure \ref{fig:winner} visualizes the predictive densities from these two models at $\delta=0.2$, $0.33$, and $0.45$.

 \begin{figure}
 	\centering
 	\includegraphics[width=0.52 \textwidth]{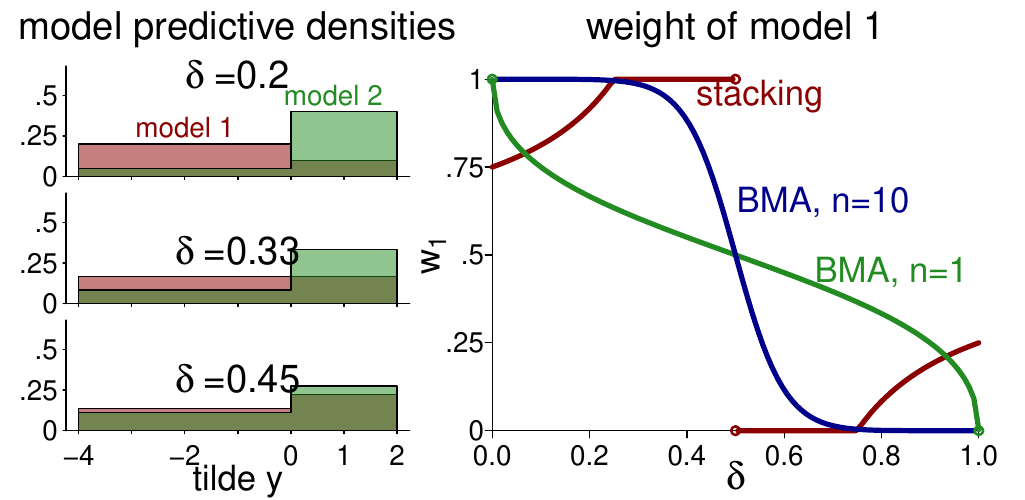}
 	\caption{\em  Left: Pointwise predictive density  $p(\tilde y|M_1 ~\mathrm{or}~ M_2)$ when the slab probability $\delta$ is chosen 0.2, 1/3 and 0.45. Right: Weight of model 1 in complete-pooling stacking (not defined at $\delta= 0.5$) and pseudo-BMA (sample size $n$=1 or 10, not defined at $\delta=0$ or 1) as a function of the slab probability $\delta$. They evolve in the opposite direction. Besides, stacking weights are more polarized when models are more similar.} \label{fig:winner}
 \end{figure}

 \begin{figure}
 	\centering
 	\includegraphics[width=\textwidth]{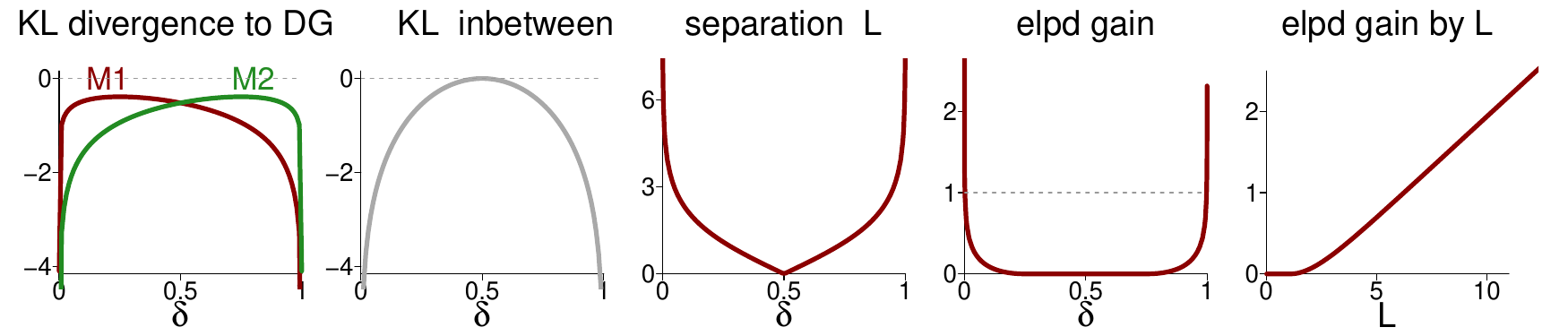}
 	\caption{\em From left: (1) KL divergence between model 1 or model 2 and data generating process. (2) KL divergence between model 1 and model 2.
 		(3) Separation constant $L$. (4) Stacking elpd gain compared with the best individual  model.
 		(5) Stacking elpd gain as a function of $L$. } \label{fig:winner2}
 \end{figure}

 When the slab probability $\delta$ increases from 0 to 0.5, these two models are closer and closer to each other,  measured by a smaller KL($M_1, M_2$). The $(0.5,1)$ counterpart is similar, though not exactly symmetric. We compute stacking weight and the expected pseudo-BMA  weight with sample size $n$: $w_1^{\mathrm{BMA}}(n, \delta)=   \left(1+\exp\left( n\E_{y|\delta}  \log p_2(y)- n\E_{y|\delta}  \log p_1(y)\right)\right)^{-1}$.

 Interestingly,  pseudo-BMA  weight $w_1^{\mathrm{BMA}}(n, \delta)$ is strictly decreasing as a function of $\delta \in  (0,1)$. This is because when $\delta\to 0^+$, log predictive density of model 2 in the left part $\log(\delta /4) \to \infty$ can be arbitrarily small, and the influence of this bad region dominates the overall performance of model 2. By contrast, stacking weight is monotonic non-increasing on $(0,0.5)$ (strictly decreasing on $(0,1/3)$, and remains flat afterwards)---the opposite direction of BMA. 
 Stacking simply recognizes model 1  winning the $[-3,0]$ interval and does not haggle over how much it wins. 
 
 In addition, when $\delta=1/3$, $M_1$ becomes a uniform density on $[-4,2]$. When $\delta \in (1/3,1/2)$, model 2 is not only strictly worse than model 1 but also provides no extra information for model averaging. Hence stacking assigns it weight zero.

 The first two panels in Figure \ref{fig:winner2} show the KL divergence from model 1 or from model 2 to the data generating process and the KL divergence between model 1 and model 2. The third panel is the largest separation constant $L$ for which the  separation condition \eqref{eq_sep_y} holds. The last two panels show the stacking epld gain (compared with the best individual model) as a function of $\delta$ and $L$. This constructive example reflects the worst case 
 for it matches the  theoretical lower bound
 $g^*(L,K, \rho, \epsilon)=\log(\rho)+ (1-\rho)(\log(1-\rho)- \log(K-1))$ (here $L=L, K=2, \rho=1/4, \epsilon=0$) in Theorem~\ref{theo_lower_stacking}.

 When  $\delta \in [1/3,1/2)$, Model 2 still wins 
 on the interval $\mathcal{J}_2= (0,2]$ with the separation constant  $\epsilon =0$ and  $L\leq \log 2$ (the winning margin is maximized at $\delta=1/3$).  Nevertheless, a zero stacking weight and a non-zero winning area do not contradict  Theorem  \ref{thm_stacking_seperate_y}. Indeed, Theorem \ref{thm_winning_prob}  precisely bounds the mass of the winning region when stacking weight is zero.  We provide self-contained theorem proofs  in the next section.

 Loosely speaking, BMA computes the probability of a model being \emph{true} (if one model \emph{has to} be true), while stacking (through the approximation  Pr$(\mathcal{J}_k)$) computes  the probability of a model being the \emph{best}.

 \section{Proofs of theorems}\label{sec_proof}

 For briefly, in later proofs we will use the abbreviation for the posterior pointwise conditional predictive density from the $k$-th model:  $$ p_k(\tilde y | \tilde x)\coloneqq p(\tilde y | \tilde x, M_k) = \int p(\tilde y | \tilde x, \theta_k) p(\theta_k |\mathcal{D})d\theta_k , \quad k=1, \dots, K.$$
 
 This subscript index $k$ should not be confused with the notation $t$ as in $p_t(\tilde y | \tilde x)$ or $p_t(\tilde y, \tilde x)$: the unknown conditional or joint density of the true data generating process. The subscript letter ${t}$ is always reserved for ``\emph{true}".

 Recall that in this section $\w^{\mathrm{stacking}}$ refers to the complete-pooling stacking in the  population:
 \begin{align*}\label{eq_stacking_population}
 	\w^{\mathrm{stacking}}&\coloneqq \arg\max_{\w\in \mathcal{S_K}} \mathrm{elpd} (\w),\nonumber\\ 
 	\mathrm{elpd} (\w)&=\int_{\mathcal{X}  \times \mathcal{Y}} \log\left( \sum_{k=1}^K  w_k p(\tilde y| M_k, \tilde x)\right) p_t(\tilde y,  \tilde x) d\tilde y d\tilde x. 
 \end{align*}
 
 \begin{repthm}
 	We call $K$ predictive densities $\{p(\tilde y =\cdot | \tilde x=\cdot, M_k)\}_{k=1}^K$ to be locally separable with a constant pair  $L>0$ and  $0<\epsilon<1$ with respect to the true data generating process  $p_t(\tilde y, \tilde x)$, if 
 	\begin{equation*} 
 		\sum_{k=1}^K \int_{(\tilde x, \tilde y)\in \mathcal{J}_k  } \mathbbm{1}\Big(  \log p(\tilde y|\tilde x, M_k)<   \log p(\tilde y|\tilde x, M_{k^{\prime}})+ L,  \; \forall    k^{\prime} \neq k \Big)  p_t(\tilde y,  \tilde x) d\tilde y d\tilde x \leq \epsilon.
 	\end{equation*}
 	For a small $\epsilon$ and a large $L$, the stacking weights that solve
 	\eqref{stacking_obj}  is  approximately  the proportion of the model being the locally best model:  $$\w^{\mathrm{stacking}}_k  \approx  \w^{\mathrm{approx}}_k \coloneqq \Pr(\mathcal{J}_k)= \int_{\mathcal{J}_k}   p_t(\tilde y | \tilde x ) p( \tilde x )d\tilde y d\tilde x.$$ in the sense  that the objective function is nearly optimal: 
 	\begin{equation*} 
 		|\mathrm{elpd}(\w^{\mathrm{approx}}) - \mathrm{elpd}(\w^{\mathrm{stacking}})| \leq   \mathcal{O}(\epsilon + \exp(-L)).
 	\end{equation*} 
 \end{repthm}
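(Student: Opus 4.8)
The plan is to exploit that $\w^{\mathrm{stacking}}$ maximizes $\mathrm{elpd}$, so $\mathrm{elpd}(\w^{\mathrm{stacking}}) \ge \mathrm{elpd}(\w^{\mathrm{approx}})$ holds automatically, and only the reverse inequality $\mathrm{elpd}(\w^{\mathrm{stacking}}) - \mathrm{elpd}(\w^{\mathrm{approx}}) \le \mathcal{O}(\epsilon + e^{-L})$ needs work. Write $P_k := \Pr(\mathcal{J}_k) = w_k^{\mathrm{approx}}$, let $B$ be the ``boundary'' set carved out by the indicator in the separation condition \eqref{eq_sep_y} (so $\Pr(B) \le \epsilon$), put $B_k := B \cap \mathcal{J}_k$ with $\beta_k := \Pr(B_k)$ and $\sum_k \beta_k \le \epsilon$, and define the weight-independent constant $C := \sum_k \int_{\mathcal{J}_k} \log p_k(\tilde y\mid\tilde x)\, p_t(\tilde y,\tilde x)\, d\tilde y\, d\tilde x$ (assumed finite, so that $\mathrm{elpd}$ is well defined). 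The argument is then a sandwich: a universal lower bound on $\mathrm{elpd}(\w)$ evaluated at $\w^{\mathrm{approx}}$, pitted against a universal upper bound that I maximize over all $\w$.

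For the lower bound I would use only the definition of $\mathcal{J}_k$: on all of $\mathcal{J}_k$ the density $p_k$ dominates, so $\sum_{k'} w_{k'} p_{k'} \ge w_k p_k$ pointwise, giving $\mathrm{elpd}(\w) \ge \sum_k P_k \log w_k + C$ for every $\w$, and in particular $\mathrm{elpd}(\w^{\mathrm{approx}}) \ge \sum_k P_k \log P_k + C$. For the upper bound I would split each $\mathcal{J}_k$ into its well-separated part and the boundary $B_k$. On the well-separated part \eqref{eq_sep_y} gives $p_{k'} \le e^{-L} p_k$ for all $k' \ne k$, hence $\sum_{k'} w_{k'} p_{k'} \le (w_k + e^{-L}) p_k$; on $B_k$ I use only the crude $\sum_{k'} w_{k'} p_{k'} \le p_k$ (maximal density times total weight). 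Integrating and folding the $\log p_k$ pieces back into $C$ yields, for every $\w$, the bound $\mathrm{elpd}(\w) \le \sum_k q_k \log(w_k + e^{-L}) + C$, where $q_k := P_k - \beta_k$ and $Q := \sum_k q_k = 1 - \Pr(B) \ge 1-\epsilon$.

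The next step is to maximize this right-hand side over the simplex without referring to the actual optimizer. Setting $\tilde w_k := w_k + e^{-L}$, normalizing by $S := \sum_k \tilde w_k = 1 + Ke^{-L}$, and applying Gibbs' inequality to the normalized vector $q/Q$ gives $\sum_k q_k \log(w_k + e^{-L}) \le \sum_k q_k \log q_k - Q\log Q + Ke^{-L}$, where I used $Q\log(1+Ke^{-L}) \le Ke^{-L}$. Subtracting the lower bound makes $C$ cancel, leaving $\mathrm{elpd}(\w^{\mathrm{stacking}}) - \mathrm{elpd}(\w^{\mathrm{approx}}) \le \sum_k\bigl(q_k\log q_k - P_k\log P_k\bigr) - Q\log Q + Ke^{-L}$, and the term $-Q\log Q \le \Pr(B) \le \epsilon$ is immediate from $-(1-x)\log(1-x)\le x$.

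The hard part is the entropy-difference term $\sum_k (q_k \log q_k - P_k \log P_k)$: because $q_k \le P_k$ with $\sum_k(P_k-q_k)=\sum_k\beta_k \le \epsilon$, this is a difference of (unnormalized) entropies of two vectors that are $\epsilon$-close in $\ell_1$, and entropy is only uniformly continuous up to a logarithmic factor. Concretely I would write $q_k\log q_k - P_k\log P_k = q_k\log(q_k/P_k) - \beta_k\log P_k \le -\beta_k\log P_k$ (since $q_k\le P_k$), then maximize $\sum_k \beta_k \log(1/P_k)$ subject to $\sum_k\beta_k\le\epsilon$ and $\beta_k\le P_k$; the extremal configuration dumps the deficit onto the smallest $P_k$ and produces a bound of order $\epsilon\log(1/\epsilon)$. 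This is the genuine obstacle — models with vanishingly small winning mass $\Pr(\mathcal{J}_k)$ are precisely where the approximation $w_k^{\mathrm{approx}} = \Pr(\mathcal{J}_k)$ is most delicate — and it is what forces the final rate to be stated as $\mathcal{O}(\epsilon + e^{-L})$ with the $\epsilon\log(1/\epsilon)$ absorbed into the constant (treating $K$ as fixed and using the convention $0\log 0 = 0$ for empty winning regions).
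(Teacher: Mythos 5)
Your proof strategy is sound and genuinely different from the paper's. The paper splits $\mathrm{elpd}(\w)$ into a surrogate $\mathrm{elpd}^{\mathrm{surrogate}}(\w)=\sum_k \Pr(\mathcal{J}_k)\log w_k + C$ plus a remainder, identifies $\w^{\mathrm{approx}}$ as the surrogate's maximizer by the same Gibbs/Jensen step you use, and then bounds the remainder \emph{two-sidedly} at both weight vectors by $\bigl(\sum_k (1-w_k)/w_k\bigr)(\epsilon+\exp(-L))$; weights equal to zero must then be handled by a separate model-removal argument that invokes Theorem~\ref{thm_winning_prob}. You instead run a one-sided sandwich: your lower bound coincides with the paper's surrogate evaluated at $\w^{\mathrm{approx}}$, but your upper bound---valid for every $\w$, with deflated masses $q_k=\Pr(\mathcal{J}_k)-\beta_k$ and the separation constant folded inside the logarithm as $\log(w_k+\exp(-L))$---is new, and it buys something concrete: nothing in your argument ever divides by $w_k$ or evaluates $\log w_k$ at a vanishing weight, so the degenerate configurations that force the paper into its pruning digression are handled uniformly.

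The price is the caveat you flagged yourself, and your dismissal of it is the one genuine flaw: $\epsilon\log(1/\epsilon)$ is \emph{not} $\mathcal{O}(\epsilon)$ (the ratio diverges as $\epsilon\to 0$), so it cannot be absorbed into a constant, and what you have actually proved is $|\mathrm{elpd}(\w^{\mathrm{approx}})-\mathrm{elpd}(\w^{\mathrm{stacking}})|\le\mathcal{O}\bigl(\epsilon\log(1/\epsilon)+\exp(-L)\bigr)$ for fixed $K$. The loss is intrinsic to your sandwich, not slack in your maximization step: if some model has $\Pr(\mathcal{J}_k)=\beta_k=\epsilon$ (a tiny winning region, all of it badly separated), your lower bound charges $\w^{\mathrm{approx}}$ the full $\epsilon\log\epsilon$ on $\mathcal{J}_k$ while your upper bound credits $q_k=0$ there, so the two sides differ by exactly $\epsilon\log(1/\epsilon)$. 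In fairness, the paper's proof is no tighter in this regime: its hidden constant $\sum_k(1-w_k)/w_k$ blows up as any winning probability shrinks, and its zero-weight removal step likewise discards a surrogate contribution of size $\Pr(\mathcal{J}_k)\log\Pr(\mathcal{J}_k)$ with only the remark that it is a ``small order term.'' So your argument sits at rigor parity with the published one; to honestly reach the stated $\mathcal{O}(\epsilon+\exp(-L))$ rate, either proof would need an explicit lemma handling models with winning probability of order $\epsilon$ or smaller, rather than pushing them through the entropy (or $1/w_k$) bound.
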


 \begin{proof}
 	The expected log predictive density of the weighted prediction $\sum_{k} w_k p_k(\cdot | x)$ (as a function of $\w$) is
 	\begin{align*}
 		\mathrm{elpd}(\w)&= \int_{\mathcal{X}\times\mathcal{Y} } \log \left(\sum_{l=1}^K w_l p_l(\tilde y |\tilde x)  \right)p_t(\tilde y|\tilde x) p(\tilde x) d\tilde x d\tilde y\\
 		&=\sum_{k=1}^K \int_{\mathcal{J}_k } \log \left(\sum_{l=1}^K w_l p_l(\tilde y |\tilde x)  \right)p_t(\tilde y|\tilde x) p(\tilde x) d\tilde x d\tilde y\\
 		&=\sum_{k=1}^K  \int_{\mathcal{J}_k } \log \left(w_k p_k(\tilde y |\tilde x) + \sum_{l\neq k} w_l p_l(\tilde y |\tilde x)  \right)
 		p_t(\tilde y|\tilde x) p(\tilde x) d\tilde x d\tilde y\\
 		&=\sum_{k=1}^K  \int_{\mathcal{J}_k } \left( \log \left(w_k p_k(\tilde y |\tilde x)\right)+\log   \left(1+\sum_{l\neq k}\frac{w_l p_l(\tilde y |\tilde x)}{w_k p_k(\tilde y |\tilde x)}
 		\right)\right)
 		p_t(\tilde y|\tilde x) p(\tilde x) d\tilde x d\tilde y.
 	\end{align*}
 	The expression is legit for any simplex vector $\w\in \mathcal{S}_K$  that  does not contain zeros. We will treat zeros later. For now we only consider a dense weight: $\{ \w\in \mathcal{S}_K: w_k>0, k=1, \dots K\}$.

 	Consider a surrogate objective function (the first term in the integral above): 
 	\begin{align*}
 		\mathrm{elpd}^{\mathrm{surrogate}}(\w)&=\sum_{k=1}^K  \int_{\mathcal{J}_k }   \log \left(w_k p_k(\tilde y |\tilde x)\right)
 		p_t(\tilde y|\tilde x) p(\tilde x) d\tilde x d\tilde y\\
 		&=\sum_{k=1}^K  \int_{\mathcal{J}_k }   \left( \log w_k + \log p_k(\tilde y |\tilde x)\right)
 		p_t(\tilde y|\tilde x) p(\tilde x) d\tilde x d\tilde y\\
 		&=\sum_{k=1}^K  \log w_k \int_{\mathcal{J}_k } p_t(\tilde y|\tilde x) p(\tilde x)d\tilde x d\tilde y + \sum_{k=1}^K  \int_{\mathcal{J}_k } \log p_k(\tilde y |\tilde x) p_t(\tilde y|\tilde x) p(\tilde x)d\tilde x d\tilde y\\
 		&=\sum_{k=1}^K \left( \Pr(\mathcal{J}_k)   \log w_k\right) + \mathrm{constant}.
 	\end{align*}

 	Ignoring the constant term above (the expected cross-entropy between each conditional prediction and the true DG), to maximize the  surrogate objective function is equivalent to maximizing  
 	$\sum_{k=1}^K \Pr(\mathcal{J}_k) \log w_k$, we call this function elbo$(\w)$, the \emph{evidence lower bound}. To optimize  $\mathrm{elpd}^{\mathrm{surrogate}}$ is equivalent to optimizing  elbo.  We show that this elbo function has a closed form optimum.
 	Using Jensen's inequality,
 	\begin{align*}
 		\mathrm{elbo}(\w)& =\sum_{k=1}^K \Pr(\mathcal{J}_k) \log w_k \\
 		& =  \sum_{k=1}^K \Pr(\mathcal{J}_k) \log \frac{w_k}{\Pr(\mathcal{J}_k)} +\sum_{k=1}^K \Pr(\mathcal{J}_k) \log {\Pr(\mathcal{J}_k)}\\
 		& \leq  \log \left( \sum_{k=1}^K \Pr(\mathcal{J}_k) \frac{w_k}{\Pr(\mathcal{J}_k)} \right)+\sum_{k=1}^K \Pr(\mathcal{J}_k) \log {\Pr(\mathcal{J}_k)}\\
 		&=\sum_{k=1}^K \Pr(\mathcal{J}_k) \log {\Pr(\mathcal{J}_k)}.
 	\end{align*}
 	The equality is attained at   $w_k= \Pr(\mathcal{J}_k),  ~k=1, \dots, K$,  which reaches our definition of $\w^{\mathrm{approx}}$ in Theorem~\ref{thm_stacking_seperate_y}. 
 	
 	What remains to be proved is that the  surrogate objective function is close to the actual objective.   
 	We  divide each set $\mathcal{J}_k$ into two disjoint subsets $\mathcal{J}_k =\mathcal{J}_\circ \cup \mathcal{J}_k^{\bullet}$, for 
 	\begin{align*}
 		&\mathcal{J}_k^\circ\coloneqq \left\{(\tilde x, \tilde y)\in  \mathcal{J}_k :\log p( \tilde y|\tilde x, M_k)<   \log p(\tilde y|\tilde x, M_{ k^\prime})+ L   \right\}; \\
 		&\mathcal{J}_k^\bullet\coloneqq \left\{(\tilde x, \tilde y)\in  \mathcal{J}_k :\log p( \tilde y|\tilde x, M_k)\geq   \log p(\tilde y|\tilde x, M_{ k^\prime})+ L   \right\}.
 	\end{align*}
 	The separation condition ensures  $\sum_{k=1}^K \Pr(\mathcal{J}_k^\circ)\leq \epsilon$.
 	
 	Let $\Delta (\w)=\mathrm{elpd}(\w)- \mathrm{elpd}^{\mathrm{surrogate}}(\w)$. For any fixed simplex vector $\w$,
 	this absolute difference of the objective function is bounded by 
 	\begin{align*}
 		|\Delta (\w)|&=  \left|\sum_{k=1}^K  \int_{\mathcal{J}_k }\left(\log   \left(1+\sum_{l\neq k}\frac{w_l p_l(\tilde y |\tilde x)}{w_k p_k(\tilde y |\tilde x)}
 		\right)\right)
 		p_t(\tilde y|\tilde x) p(\tilde x) d\tilde x d\tilde y  \right|\\
 		&\leq  \sum_{k=1}^K  \int_{\mathcal{J}_k }\left|\log   \left(1+\sum_{l\neq k}\frac{w_l p_l(\tilde y |\tilde x)}{w_k p_k(\tilde y |\tilde x)}
 		\right)\right|
 		p_t(\tilde y|\tilde x) p(\tilde x) d\tilde x d\tilde y   \\
 		&=  \sum_{k=1}^K \left( \int_{\mathcal{J}_k^\circ  }+ \int_{\mathcal{J}_k^\bullet } \right)\left|\log   \left(1+\sum_{l\neq k}\frac{w_l p_l(\tilde y |\tilde x)}{w_k p_k(\tilde y |\tilde x)}
 		\right)\right|
 		p_t(\tilde y|\tilde x) p(\tilde x) d\tilde x d\tilde y \\
 		&\leq \sum_{k=1}^K \int_{\mathcal{J}_k^\circ}  \log(1+ \sum_{l\neq k}\frac{w_l}{w_k} )p_t(\tilde y|\tilde x) p(\tilde x) d\tilde x d\tilde y  +  \sum_{k=1}^K \int_{\mathcal{J}_k^\bullet}  \sum_{l\neq k}\frac{w_l}{w_k} \frac{p_l(\tilde y |\tilde x)}{p_k(\tilde y |\tilde x)}
 		p_t(\tilde y|\tilde x) p(\tilde x) d\tilde x d\tilde y \\
 		&\leq  \left( \sum_{k=1}^K\sum_{l\neq k}\frac{w_l}{w_k}\right) \left(\sum_{k=1}^K \int_{\mathcal{J}_k^\circ}p_t(\tilde y|\tilde x)  p(\tilde x) d\tilde x d\tilde y +   \sum_{k=1}^K \int_{\mathcal{J}_k^\bullet}\frac{p_l(\tilde y |\tilde x)}{p_k(\tilde y |\tilde x)}
 		p_t(\tilde y|\tilde x) p(\tilde x) d\tilde x d\tilde y 
 		\right)\\
 		&\leq  \left( \sum_{k=1}^K \frac{1-w_k}{w_k}\right) (\epsilon+\exp(-L)).
 	\end{align*}
 	The second inequality used $\log(1+x)\leq x$ for $x\geq 0$.
 	
 	The exact optima of objective function is $\w^{\mathrm{stacking}}$. Using the inequality above twice,
 	\begin{align*}
 		0\leq\mathrm{elpd}(\w^{\mathrm{stacking}})- \mathrm{elpd}(\w^{\mathrm{approx}}) &\leq |\mathrm{elpd}^\mathrm{surrogate}(\w^{\mathrm{stacking}})-\mathrm{elpd}(\w^{\mathrm{stacking}})|\\
 		&\qquad + |\mathrm{elpd}^\mathrm{surrogate}(\w^{\mathrm{approx}})-\mathrm{elpd}(\w^{\mathrm{approx}})|\\ 
 		&\qquad + \mathrm{elpd}^\mathrm{surrogate}(\w^{\mathrm{stacking}})- \mathrm{elpd}^\mathrm{surrogate}(\w^{\mathrm{approx}})\\
 		&\leq |\Delta (\w^\mathrm{approx})|+|\Delta (\w^\mathrm{stacking})|\\
 		&\leq  \sum_{k=1}^K\left(  \frac{1-w_k^{\mathrm{approx}}}{w_k^{\mathrm{approx}}}+\frac{1-w_k^\mathrm{stacking}} {w_k^\mathrm{stacking}} \right)  (\epsilon+\exp(-L)).
 	\end{align*}
 	
 	It has almost finished the proof 
 	except for the simplex edge where $w_k^\mathrm{stacking}$ or $w_k^\mathrm{approx}$   attains zero. 
 	
 	Without loss of generality,  if $w_1^\mathrm{approx}=0, w_k^\mathrm{approx}\neq 0, \forall k\neq 1$, which means $p(\tilde y|M_k,  \tilde x)$ is always inferior to some other models. This will only happen if $p(\tilde y|M_k,  \tilde x)$ is almost sure zero (w.r.t $p_t(\tilde y|\tilde x)p(\tilde x)$) hence we can remove  model 1 from the model list, and the same  $\mathcal{O}(\epsilon+\exp(-L))$ bound applies to remaining model $2, \dots, K$. If there are more than one zeros,  repeat until all zeros have been removed. 
 	
 	Next, we deal with $w_1^\mathrm{stacking}=0, w_k^\mathrm{stacking}\neq 0, \forall k\neq 1$.  If $w_1^\mathrm{approx}=0$, too,  then we have solved in the previous paragraph.  If not,  Theorem \ref{thm_winning_prob}  shows that   $w_1^\mathrm{approx}$ has to be a small order term:
 	$$\Pr(\mathcal{J}_1) \leq  (1+ (\exp(L)-1) (1-\epsilon) +  \epsilon)^{-1} <   \exp(-L)+  \epsilon.$$
 	We leave the proof of this inequality in  Theorem \ref{thm_winning_prob}.   
 	
 	The contribution of the first model in the surrogate model is at most 
 	$\Pr(\mathcal{J}_1) \log  \Pr(\mathcal{J}_1)$. After we  remove the first model from the model list, with the surrogate model elpd changes by at most a small order term, not affecting the final bound.  Because the separation condition with constant $(\epsilon, L)$ applies to  model $1,\dots, K$,  and due to lack of a competition source, the  same separation condition  applies to model $2,\dots, K$ and the same bound applies.  
 	
 	% $\tilde w^{\mathrm{approx}}_{2:K} \coloneqq w^{\mathrm{approx}}_{2:K} / (1-\tilde w^{\mathrm{approx}}_{1})$ and 
 	% $\tilde w^{\mathrm{approx}}_{1} \coloneqq 0$

 	%Hypothetically we may still remove the first model from the model list. Because the separation condition with constant $(\epsilon, L)$ applies to  model $1,\dots, K$,  and due to lack of a competition source, the  same separation condition  applies to model $2,\dots, K$ by modifying
 	% $\tilde w^{\mathrm{approx}}_{2:K} \coloneqq w^{\mathrm{approx}}_{2:K} / (1-\tilde w^{\mathrm{approx}}_{1})$ and 
 	%$\tilde w^{\mathrm{approx}}_{1} \coloneqq 0$. We apply the previous bound to model $2,\dots, K$  and obtain $$\mathrm{elpd}(0, w^{\mathrm{stacking}}_{2:K})- \mathrm{elpd}(\tilde w^{\mathrm{approx}})\leq \mathcal{O}(\epsilon+\exp(-L)).$$
 	% Next by the optimality of $w^{\mathrm{approx}}$  we have $$\mathrm{elpd}^\mathrm{surrogate}(\tilde w^{\mathrm{approx}}) \leq \mathrm{elpd}^\mathrm{surrogate}( w^{\mathrm{approx}}).$$ Lastly $|\mathrm{elpd}^\mathrm{surrogate}( w^{\mathrm{approx}})|$. 
 	
 	%Hence $w^{\mathrm{approx}})$ can only be a better approximation and we
 \end{proof} 
 
 \begin{repthm} 
 	When the separation condition \eqref{eq_sep_y} holds,  and if the $k$-th model has zero weight in stacking, $w_k^\mathrm{stacking}=0$,
 	then the probability of its winning region is bounded by:
 	$$\Pr(\mathcal{J}_k) \leq  \left(1+ (\exp(L)-1) (1-\epsilon) +  \epsilon\right)^{-1}.$$ 
 	The right hand side can be further upper-bounded by  $\exp(-L)+  \epsilon$.
 \end{repthm}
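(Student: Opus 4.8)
The plan is to read off a first-order optimality condition from the fact that $w_k^{\mathrm{stacking}}=0$ maximizes the population objective $\mathrm{elpd}(\w)$ over the simplex, and then to translate that condition into a bound on the mass $\Pr(\mathcal{J}_k)$ using the local separation \eqref{eq_sep_y}. Since $\mathrm{elpd}(\w)$ is concave in $\w$ and $w_k=0$ sits on the boundary, I would perturb the optimum toward the vertex that places all mass on model $k$: writing $\w_t=(1-t)\,\w^{\mathrm{stacking}}+t\,e_k$ and differentiating at $t=0$, optimality forces the directional derivative to be nonpositive. Because $w_k^{\mathrm{stacking}}=0$, the mixture in the denominator is $q\coloneqq\sum_{l\neq k}w_l^{\mathrm{stacking}}\,p_l(\tilde y\mid\tilde x)$, so this step yields the clean inequality $\int_{\mathcal{X}\times\mathcal{Y}}\frac{p_k(\tilde y\mid\tilde x)}{q(\tilde y\mid\tilde x)}\,p_t(\tilde y,\tilde x)\,d\tilde y\,d\tilde x\le 1$.

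The second step is to lower-bound this integral by restricting to $\mathcal{J}_k$ and decomposing it, exactly as in the proof of Theorem~\ref{thm_stacking_seperate_y}, into the well-separated part $\mathcal{J}_k^\bullet$ (second-best margin at least $L$) and the remainder $\mathcal{J}_k^\circ$. On $\mathcal{J}_k^\bullet$ every competitor satisfies $p_l\le\exp(-L)\,p_k$, and since $\sum_{l\neq k}w_l^{\mathrm{stacking}}=1$ we get $q\le\exp(-L)\,p_k$, so the integrand is at least $\exp(L)$ there; on $\mathcal{J}_k^\circ$ model $k$ is still pointwise best, hence $q<p_k$ and the integrand exceeds $1$; on the complement I would simply drop the nonnegative contribution. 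This produces $\exp(L)\,\Pr(\mathcal{J}_k^\bullet)+\Pr(\mathcal{J}_k^\circ)\le 1$.

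The final step is bookkeeping against the separation budget. Setting $a=\Pr(\mathcal{J}_k^\bullet)$ and $b=\Pr(\mathcal{J}_k^\circ)$, the separation condition \eqref{eq_sep_y} gives $b\le\epsilon$ (indeed it bounds $\sum_j\Pr(\mathcal{J}_j^\circ)$), and I would maximize $\Pr(\mathcal{J}_k)=a+b$ subject to $\exp(L)\,a+b\le1$ and $b\le\epsilon$. The looser stated bound $\exp(-L)+\epsilon$ then drops out immediately, since $a\le\exp(-L)$ and $b\le\epsilon$; the sharper reciprocal form is meant to follow by substituting $a=\Pr(\mathcal{J}_k)-b$ into $\exp(L)a+b\le1$ and pushing the budget to the extremal configuration.

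The step I expect to be the main obstacle is reconciling this linear-programming estimate with the precise reciprocal expression $\big(1+(\exp(L)-1)(1-\epsilon)+\epsilon\big)^{-1}$. The crude decomposition controls $a$ and $b$ only through the two separate constraints $\exp(L)a+b\le1$ and $b\le\epsilon$, and it is delicate to identify which configuration of $(a,b)$ simultaneously saturates the winning-region mass and the optimality budget, especially because the unseparated mass $\mathcal{J}_k^\circ$ enters the objective at a different rate than it enters the constraint; I would want to double-check that the extremal $(a,b)$ really attains the claimed reciprocal and not merely the intermediate value $\exp(-L)+(1-\exp(-L))\epsilon$. I would also need to treat the degenerate cases where $q$ vanishes on part of $\mathcal{J}_k$ (so the integrand is $+\infty$ and the constraint is consistent but vacuous), and finally verify the claimed reduction to $\exp(-L)+\epsilon$, which amounts to the elementary inequality $\big(1+\exp(L)\,\epsilon\big)\big(\exp(L)(1-\epsilon)+2\epsilon\big)\ge\exp(L)$ for $L>0$ and $0\le\epsilon\le1$.
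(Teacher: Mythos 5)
Your proposal is, in substance, the paper's own proof. The paper also reads off the first-order optimality condition at the boundary: writing $p_0=\sum_{l\neq k}w_l^{\mathrm{stacking}}p_l(\tilde y|\tilde x)$, it differentiates $\E\log\left(w_k p_k+(1-w_k)p_0\right)$ at $w_k=0$ to get $\E\left[(p_k-p_0)/p_0\right]\le 0$, which is exactly your $\E\left[p_k/q\right]\le 1$; it then splits $\mathcal{J}_k$ into $\mathcal{J}_k^\bullet$ and $\mathcal{J}_k^\circ$, uses the margin $L$ on $\mathcal{J}_k^\bullet$, and bounds the integrand below by $-1$ off $\mathcal{J}_k$, which is identical to your dropping the nonnegative contribution of the complement. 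Your two side worries also resolve: if $q=0$ on a $p_t$-positive subset of $\mathcal{J}_k$ (where necessarily $p_k>0$), concavity makes the one-sided directional derivative $+\infty$, contradicting optimality of $w_k=0$; and your closing elementary inequality $\left(1+\exp(L)\epsilon\right)\left(\exp(L)(1-\epsilon)+2\epsilon\right)\ge\exp(L)$ is true (divide by $\epsilon$; what remains is linear in $\epsilon$ and positive at both $\epsilon=0$ and $\epsilon=1$), so the reciprocal form does imply the looser claim $\exp(-L)+\epsilon$.

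The obstacle you flag is genuine, but it is a defect of the paper's stated constant, not of your argument. The honest output of this method is exactly your constant: from $\exp(L)\,a+b\le 1$, $b\le\epsilon$, and $a+b=\Pr(\mathcal{J}_k)$ one gets $\Pr(\mathcal{J}_k)\le \exp(-L)+(1-\exp(-L))\epsilon$, and no rearrangement will produce $\left(1+(\exp(L)-1)(1-\epsilon)+\epsilon\right)^{-1}$. The paper reaches that form by asserting the intermediate inequality $\E\left[(p_k-p_0)/p_0\mid \mathcal{J}_k\right]\ge(\exp(L)-1)(1-\epsilon)+\epsilon$, which does not follow from the decomposition: on $\mathcal{J}_k^\circ$ the integrand is only $\ge 0$, not $\ge 1$, and condition \eqref{eq_sep_y} controls $\Pr(\mathcal{J}_k^\circ)$ unconditionally rather than conditionally on $\mathcal{J}_k$. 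In fact the reciprocal form is false in general when $\exp(L)<2$ and $\epsilon>0$: take $K=2$, $p_t$ uniform on $[0,1]$, $p_2\equiv 1$, and $p_1=\exp(L)$ on $[0,\exp(-L))$ and $0$ elsewhere; then $w_1^{\mathrm{stacking}}=0$, condition \eqref{eq_sep_y} holds with every $\epsilon\ge0$, yet $\Pr(\mathcal{J}_1)=\exp(-L)$ exceeds the reciprocal bound, which for $\exp(L)<2$ shrinks as $\epsilon$ grows---a sure sign something is off, since increasing $\epsilon$ weakens the hypothesis. Your constant, by contrast, is monotone in $\epsilon$ and survives this example. Since both constants are at most $\exp(-L)+\epsilon$, which is the only form invoked downstream (in the proof of Theorem \ref{thm_stacking_seperate_y}), nothing else in the paper is affected.
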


 \begin{proof}
 	Without loss of generality, assume $w_1^\mathrm{stacking}=0$.  Let $p_0(\tilde y|\tilde x) = \sum_{k=2}^K \w^{\mathrm{stacking}} {p_k(\tilde y|\tilde x)}$.
 	Consider a constrained objective 
 	$\widetilde{\mathrm{elpd}}(w_1)=\E (\log (w_1 p_1(\tilde y | \tilde x) + (1-w_1)p_0(\tilde y|\tilde x)))$
 	where the expectation is over both $\tilde y$ and $\tilde x$ as before. Because the max is attained at $w_1=0$ and because   $\log(\cdot)$ is a concave function,
 	the derivative at any $w_1 \in [0,1]$ is
 	$$\frac{d}{d w_1} \widetilde{\mathrm{elpd}}(w_1)=
 	\E_{\tilde y, \tilde x}\left( \frac{p_1(\tilde y|\tilde x)-p_0(\tilde y|\tilde x) }{w_1 p_1(\tilde y|\tilde x) + (1-w_1) p_0(\tilde y|\tilde x)}\right)\leq 0. $$
 	That is
 	\begin{align*}
 		0\geq&  \E\left( \frac{p_1(\tilde y|\tilde x)- p_0(\tilde y|\tilde x)}{ p_0(\tilde y|\tilde x)}\right)\\
 		=&\Pr(\mathcal{J}_1)  \E[\frac{p_1(\tilde y|\tilde x)- p_0(\tilde y|\tilde x)}{ p_0(\tilde y|\tilde x)} |\mathcal{J}_1]+  (1-\Pr(\mathcal{J}_1))\E[\frac{p_1(\tilde y|\tilde x)- p_0(\tilde y|\tilde x)}{ p_0(\tilde y|\tilde x)} |\mathcal{J}_0]\\
 		\geq&\Pr(\mathcal{J}_1)  \E[\frac{p_1(\tilde y|\tilde x)- p_0(\tilde y|\tilde x)}{ p_0(\tilde y|\tilde x)} |\mathcal{J}_1]-(1-\Pr(\mathcal{J}_1) ). 
 	\end{align*}
 	Rearranging this inequality arrives at
 	\begin{align*}
 		1\geq& \Pr(\mathcal{J}_1) \left(1+\E[\frac{p_1(\tilde y|\tilde x)- p_0(\tilde y|\tilde x)}{ p_0(\tilde y|\tilde x)} |\mathcal{J}_1]\right)\\
 		\geq& \Pr(\mathcal{J}_1) (1+ (\exp(L)-1) (1-\epsilon) +  \epsilon).
 	\end{align*}
 	As a result, the model that has stacking weight zero cannot have a large probability to predominate all other models,
 	$$\Pr(\mathcal{J}_1) \leq  (1+ (\exp(L)-1) (1-\epsilon) +  \epsilon)^{-1} <   \exp(-L)+  \epsilon.$$
 \end{proof}

 \begin{repthm}
 	Let $\rho=\sup_{1\leq k \leq K} \Pr(\mathcal{J}_k)$,  and two  deterministic functions $g$ and $g^*$ by
 	\begin{align*}
 		&g(L,K, \rho, \epsilon)= L(1-\rho)(1-\epsilon)-\log K \\
 		\leq& g^*(L,K, \rho, \epsilon)= L(1-\rho)(1-\epsilon) + \rho \log(\rho)+ (1-\rho)(\log(1-\rho)- \log(K-1)).    
 	\end{align*}
 	Assuming the separation condition  \eqref{eq_sep_y} holds for all $k=1, \dots, K$, then the utility gain of stacking is further lower-bounded by
 	$$ \mathrm{elpd}_{\mathrm{stacking}} -  \mathrm {elpd}_k \geq   \max \left(g^*(L,K, \rho)+ \mathcal{O}(\exp(-L)+ \epsilon) , 0\right).$$
 \end{repthm}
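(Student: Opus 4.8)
The plan is to sandwich $\mathrm{elpd}_{\mathrm{stacking}}$ between the exact population optimum and the closed-form surrogate already analyzed in the proof of Theorem~\ref{thm_stacking_seperate_y}. Since $\w^{\mathrm{stacking}}$ maximizes $\mathrm{elpd}$ over the whole simplex while each single model $M_j$ sits at the vertex $\w = e_j$, we trivially have $\mathrm{elpd}_{\mathrm{stacking}}\geq\mathrm{elpd}_j$, which already accounts for the $\max(\cdot,0)$ clause; the work is therefore to establish $\mathrm{elpd}_{\mathrm{stacking}}-\mathrm{elpd}_j\geq g^*+\mathcal{O}(\exp(-L)+\epsilon)$ for every $j$. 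First I would bound from below
\begin{equation*}
\mathrm{elpd}_{\mathrm{stacking}}\;\geq\;\mathrm{elpd}(\w^{\mathrm{approx}})\;\geq\;\mathrm{elpd}^{\mathrm{surrogate}}(\w^{\mathrm{approx}})-|\Delta(\w^{\mathrm{approx}})|,
\end{equation*}
where $\Delta=\mathrm{elpd}-\mathrm{elpd}^{\mathrm{surrogate}}$ satisfies $|\Delta(\w^{\mathrm{approx}})|\leq\bigl(\sum_k(1-w_k^{\mathrm{approx}})/w_k^{\mathrm{approx}}\bigr)(\epsilon+\exp(-L))=\mathcal{O}(\exp(-L)+\epsilon)$, exactly as derived for Theorem~\ref{thm_stacking_seperate_y}.

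Next I would exploit the decomposition $\mathrm{elpd}^{\mathrm{surrogate}}(\w^{\mathrm{approx}})=C+\sum_{k}\Pr(\mathcal{J}_k)\log\Pr(\mathcal{J}_k)$, where $C=\sum_{k}\int_{\mathcal{J}_k}\log p_k(\tilde y|\tilde x)\,p_t(\tilde y,\tilde x)\,d\tilde y\,d\tilde x$ is the winning-model cross-entropy. Writing the single-model utility as $\mathrm{elpd}_j=\sum_{k}\int_{\mathcal{J}_k}\log p_j(\tilde y|\tilde x)\,p_t\,d\tilde y\,d\tilde x$, the key comparison is
\begin{equation*}
C-\mathrm{elpd}_j=\sum_{k}\int_{\mathcal{J}_k}\bigl(\log p_k(\tilde y|\tilde x)-\log p_j(\tilde y|\tilde x)\bigr)\,p_t\,d\tilde y\,d\tilde x\geq 0,
\end{equation*}
since $p_k$ is the pointwise winner on $\mathcal{J}_k$. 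To make this quantitative I would restrict to the well-separated pieces $\mathcal{J}_k^{\bullet}$ (for $k\neq j$) introduced in the proof of Theorem~\ref{thm_stacking_seperate_y}, on which $\log p_k\geq\log p_j+L$, and discard the nonnegative contributions from $\mathcal{J}_j$ and from the $\mathcal{J}_k^{\circ}$. This gives $C-\mathrm{elpd}_j\geq L\sum_{k\neq j}\Pr(\mathcal{J}_k^{\bullet})\geq L\bigl(1-\Pr(\mathcal{J}_j)-\epsilon\bigr)$ via the separation bound $\sum_k\Pr(\mathcal{J}_k^{\circ})\leq\epsilon$; bounding $\Pr(\mathcal{J}_j)\leq\rho$ and folding the $\mathcal{O}(\epsilon)$ discrepancy between $1-\rho-\epsilon$ and $(1-\rho)(1-\epsilon)$ into the error term recovers the leading factor $L(1-\rho)(1-\epsilon)$.

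The remaining term is the negative entropy $\sum_k\Pr(\mathcal{J}_k)\log\Pr(\mathcal{J}_k)$, which supplies the additive constants distinguishing $g^*$ from $g$. For the crude bound $g$ one only needs entropy $\leq\log K$, i.e.\ $\sum_k\Pr(\mathcal{J}_k)\log\Pr(\mathcal{J}_k)\geq-\log K$. The sharper $g^*$ needs a water-filling argument: among all probability vectors on $K$ cells whose largest coordinate equals $\rho=\sup_k\Pr(\mathcal{J}_k)$, entropy is maximized (negative entropy minimized) by pinning one coordinate at $\rho$ and spreading the residual mass $1-\rho$ uniformly over the other $K-1$ cells, which is feasible precisely because $\rho\geq 1/K$, yielding
\begin{equation*}
\sum_k\Pr(\mathcal{J}_k)\log\Pr(\mathcal{J}_k)\;\geq\;\rho\log\rho+(1-\rho)\bigl(\log(1-\rho)-\log(K-1)\bigr).
\end{equation*}
Collecting the three pieces gives $\mathrm{elpd}_{\mathrm{stacking}}-\mathrm{elpd}_j\geq g^*(L,K,\rho,\epsilon)+\mathcal{O}(\exp(-L)+\epsilon)$, and intersecting with the trivial bound $\geq0$ closes the argument. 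I expect the main obstacle to be the entropy step: verifying that the constrained maximum-entropy vector is genuinely the claimed uniform split (rather than a boundary configuration) and that this split is feasible, together with carefully tracking that the constant prefactor in $|\Delta(\w^{\mathrm{approx}})|$ does not blow up when some $\Pr(\mathcal{J}_k)$ is small—so that, as in the proof of Theorem~\ref{thm_stacking_seperate_y}, near-degenerate models can be pruned and the $\mathcal{O}(\exp(-L)+\epsilon)$ slack remains legitimate.
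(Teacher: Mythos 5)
Your proposal is correct and follows essentially the same route as the paper's proof: lower-bounding stacking by the surrogate at $\w^{\mathrm{approx}}=\Pr(\mathcal{J}_k)$ via the $\Delta$-error bound from Theorem \ref{thm_stacking_seperate_y}, decomposing the surrogate into the winning-model cross-entropy plus $\sum_k \Pr(\mathcal{J}_k)\log\Pr(\mathcal{J}_k)$, extracting the $L(1-\rho)(1-\epsilon)$ term from the well-separated sets $\mathcal{J}_l^{\bullet}$, and bounding the negative entropy by the convexity/water-filling argument pinning the largest mass at $\rho$. Even your flagged caveats (feasibility at $\rho\geq 1/K$, pruning near-degenerate models so the $\Delta$ prefactor stays controlled, and absorbing the $1-\rho-\epsilon$ versus $(1-\rho)(1-\epsilon)$ discrepancy into the error term) mirror exactly how the paper handles these points.
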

 
 \begin{proof}
 	As before, we consider the approximate weights:
 	$w_k^{\mathrm{approx}}= \Pr(\mathcal{J}_k)$, and the surrogate elpd 
 	$ \mathrm{elpd}^{\mathrm{surrogate}}(\w)=\sum_{k=1}^K  \int_{\mathcal{J}_k }   \log \left(w_k p_k(\tilde y |\tilde x)\right)
 	p_t(\tilde y|\tilde x) p(\tilde x) d\tilde x d\tilde y.$ 
 	\begin{align*}
 		&\mathrm{elpd}^{\mathrm{surrogate}}(\w^{\mathrm{approx}})- \mathrm{elpd}_k\\
 		=&\sum_{l=1}^K  \int_{\mathcal{J}_l }   \log \left(\Pr({\mathcal{J}_l})  p_l(\tilde y |\tilde x)\right) p_t(\tilde y | \tilde x) p(\tilde x) d \tilde x d \tilde y - \sum_{l=1}^K  \int_{\mathcal{J}_l }   \log \left( p_k(\tilde y |\tilde x)\right) p_t(\tilde y | \tilde x) p(\tilde x) d \tilde x d \tilde y \\
 		=&\sum_{l=1}^K \int_{\mathcal{J}_l }  \left( \log \Pr({\mathcal{J}_l}) + \log p_l(\tilde y |\tilde x) -  \log p_k (\tilde y |\tilde x)  \right)  p_t(\tilde y | \tilde x) p(\tilde x) d \tilde x d \tilde y\\
 		=&\sum_{l=1}^K  \Pr({\mathcal{J}_l}) \log \Pr({\mathcal{J}_l}) + \sum_{l=1 }^K \mathbbm{1}(l\neq k)\int_{\mathcal{J}_l } \log (p_l(\tilde y |\tilde x) -  \log p_k (\tilde y |\tilde x) ) p_t(\tilde y | \tilde x) p(\tilde x) d \tilde x d \tilde y\\
 		=&\sum_{l=1}^K  \Pr({\mathcal{J}_l}) \log \Pr({\mathcal{J}_l}) + \sum_{l=1 }^K \mathbbm{1}(l\neq k) \left(\int_{\mathcal{J}_l^\circ } + \int_{\mathcal{J}_l^\bullet}\right)  \log (p_l(\tilde y |\tilde x)- \log p_k (\tilde y |\tilde x) )    p_t(\tilde y | \tilde x) p(\tilde x)  d \tilde x d \tilde y\\
 		\geq& \sum_{l=1}^K  \Pr({\mathcal{J}_l}) \log \Pr({\mathcal{J}_l})+ (1-\epsilon)(1-\rho)L -\epsilon\\
 		\geq& \rho\log \rho + (1-\rho) \log \frac{1-\rho}{K-1} + (1-\epsilon)(1-\rho)L -\epsilon\\
 		=& g^*(L, K, \rho, \epsilon) -\epsilon.
 	\end{align*}
 	The last inequality comes from the fact that, under the constraint of $\max_{k} \Pr({\mathcal{J}_k})=\rho$, the  entropy  $\sum_{k=1}^K  \Pr({\mathcal{J}_k}) \log \Pr({\mathcal{J}_k})$ 
 	attains its minimal when each of the $\Pr({\mathcal{J}_{k}})$ term equals   $(1-\rho)/(K-1)$ except for the largest term $\rho$.
 	This inequality is due to the convexity of $x\log x$. 
 	
 	Finally, using the proof of Theorem \ref{thm_stacking_seperate_y}, the error from the surrogate is bounded,  $$ |\mathrm{elpd}^{\mathrm{surrogate}}(\w^{\mathrm{approx}})- \mathrm{elpd}^{\mathrm{stacking}}(\w^{\mathrm{stacking}})| \leq \mathcal{O} (\exp(-L)+ \epsilon).$$
 	Hence the overall utility is bounded,
 	\begin{align*}
 		&\mathrm{elpd}_{\mathrm{stacking}} -  \mathrm {elpd}_k \\
 		&= \left(\mathrm{elpd}_{\mathrm{stacking}} -  \mathrm{elpd}^{\mathrm{surrogate}}(\w^{\mathrm{approx}})\right) + \left(\mathrm{elpd}^{\mathrm{surrogate}}(\w^{\mathrm{approx}})- \mathrm {elpd}_k \right)\\
 		&\geq  g^*(L,K, \rho)+ \mathcal{O}(\exp(-L)+ \epsilon).
 	\end{align*}
 	Because selection is always a specials case of averaging, the utility is further bounded below by 0.
 	
 	To replace $g^*(\cdot)$ with the looser bound $g(\cdot)$, we only need to ensure  $\rho\log \rho + (1-\rho) \log \frac {1-\rho}{K-1}\geq -\log K $, for the range $\rho \in [1/K,1), K\geq 2$. The proof is  elementary.
 	For any fixed $K\geq 2$,  let  $h(\rho) = \rho\log \rho + (1-\rho) \log \frac {1-\rho}{K-1} + \log K$. It is increasing on $\rho \in [1/K,1)$, for $\frac{d }{d \rho} h(\rho) = \log \frac{(K-1) \rho}{1-\rho} \geq 0$ .  Hence, $h(\rho)$ attains minimum at $\rho= 1/K$,  at which $h(1/K) =0$.
 \end{proof}
 
 From the constrictive example (Appendix \ref{theory_example}),   $g^* (\cdot)$  is a tight bound. We use the looser bound $g(\cdot)$ in the main paper for its simpler form.

 \begin{repthm}
 	Under the strong separation assumption 
 	$$
 	\sum_{k=1}^K \int_{\tilde x \in \mathcal{I}_k} \int_{\tilde y \in \mathcal{Y}}  \mathbbm{1}\Big(\log p(\tilde y|M_{k}, x )< \log p(\tilde y|M_{k^{\prime}}, x)+ L,  \; \forall    k^{\prime} \neq k^*(x)  \Big) p_t (\tilde y|x, D) d\tilde y d\tilde x \leq \epsilon, 
 	$$
 	and if the sets $\{\mathcal{I}_k\}$ are known exactly, then we can construct pointwise selection
 	$$p(\tilde y| x, \mathrm{pointwise~selection} ) = \sum_{k=1}^K\mathbbm{1} (x\in \mathcal{I}_k) p(\tilde y|x,  M_k).$$ Its utility gain is bounded from below by
 	%$$  \frac{\exp\left(  \mathrm{elpd}_{\mathrm{pointwise~ selection}}  \right) - \exp(\mathrm{elpd}_{\mathrm{stacking}} )}{\exp(\mathrm{elpd}_{\mathrm{stacking}} )}  \geq \frac{1-\rho}{\rho} $$
 	$$ \mathrm{elpd}_{\mathrm{pointwise~ selection}}    - \mathrm{elpd}_{\mathrm{stacking}}    \geq  -\log \rho_\mathcal{X} + \mathcal{O} (\exp(-L)+ \epsilon).$$
 \end{repthm}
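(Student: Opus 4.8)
The plan is to reuse the surrogate-objective technique from the proof of Theorem~\ref{thm_stacking_seperate_y}, but to partition the input space by the celpd-winning regions $\{\mathcal{I}_k\}$ rather than the pointwise regions $\{\mathcal{J}_k\}$. The observation that makes everything fall out is that, with this partition, the surrogate objective decomposes \emph{exactly} into an entropy term plus the pointwise-selection utility. Writing $p_k(\tilde y| \tilde x)$ for the $k$-th predictive density and defining
\[
\mathrm{elpd}^{\mathrm{surr}}(\w)\coloneqq\sum_{k=1}^K\int_{\mathcal{I}_k}\!\int_{\mathcal{Y}}\log\bigl(w_k\,p_k(\tilde y| \tilde x)\bigr)\,p_t(\tilde y| \tilde x)\,p(\tilde x)\,d\tilde y\,d\tilde x=\sum_{k=1}^K\Pr(\mathcal{I}_k)\log w_k+\mathrm{elpd}_{\mathrm{pointwise~selection}},
\]
where the last equality holds because $\sum_k\int_{\mathcal{I}_k}\int_{\mathcal{Y}}\log p_k(\tilde y| \tilde x)\,p_t(\tilde y| \tilde x)\,p(\tilde x)\,d\tilde y\,d\tilde x$ is, by definition \eqref{eq_point_selection}, precisely $\mathrm{elpd}_{\mathrm{pointwise~selection}}$.

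The next steps are mechanical. By the same Jensen/Gibbs argument used for the elbo in Theorem~\ref{thm_stacking_seperate_y}, $\sum_k\Pr(\mathcal{I}_k)\log w_k$ is maximized over the simplex at $w_k=\Pr(\mathcal{I}_k)$, so
\[
\max_{\w\in\mathcal{S}_K}\mathrm{elpd}^{\mathrm{surr}}(\w)=-H_{\mathcal{X}}+\mathrm{elpd}_{\mathrm{pointwise~selection}},\qquad H_{\mathcal{X}}\coloneqq-\sum_{k=1}^K\Pr(\mathcal{I}_k)\log\Pr(\mathcal{I}_k).
\]
Setting $\Delta(\w)\coloneqq\mathrm{elpd}(\w)-\mathrm{elpd}^{\mathrm{surr}}(\w)$ and using that $\w^{\mathrm{stacking}}$ cannot beat the surrogate optimum gives $\mathrm{elpd}_{\mathrm{stacking}}=\mathrm{elpd}^{\mathrm{surr}}(\w^{\mathrm{stacking}})+\Delta(\w^{\mathrm{stacking}})\le-H_{\mathcal{X}}+\mathrm{elpd}_{\mathrm{pointwise~selection}}+\Delta(\w^{\mathrm{stacking}})$. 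Rearranging and applying the entropy bound $H_{\mathcal{X}}=\sum_k\Pr(\mathcal{I}_k)\bigl(-\log\Pr(\mathcal{I}_k)\bigr)\ge(-\log\rho_{\mathcal{X}})\sum_k\Pr(\mathcal{I}_k)=-\log\rho_{\mathcal{X}}$, which follows from $\Pr(\mathcal{I}_k)\le\rho_{\mathcal{X}}$ and $\sum_k\Pr(\mathcal{I}_k)=1$, yields
\[
\mathrm{elpd}_{\mathrm{pointwise~selection}}-\mathrm{elpd}_{\mathrm{stacking}}\ge H_{\mathcal{X}}-\Delta(\w^{\mathrm{stacking}})\ge-\log\rho_{\mathcal{X}}-\Delta(\w^{\mathrm{stacking}}).
\]
Thus the theorem reduces to the single estimate $\Delta(\w^{\mathrm{stacking}})\le\mathcal{O}(\exp(-L)+\epsilon)$.

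The main obstacle is precisely this bound on $\Delta$. Expanding $\Delta(\w)=\sum_k\int_{\mathcal{I}_k}\int_{\mathcal{Y}}\log\bigl(1+\sum_{l\ne k}\tfrac{w_l p_l}{w_k p_k}\bigr)\,p_t\,p$ and splitting each $\mathcal{I}_k$ into a ``good'' part, where model $k$ beats every rival by margin $L$, and a ``bad'' part of total mass $\le\epsilon$ under \eqref{eq_sep}, the good part is handled exactly as in Theorem~\ref{thm_stacking_seperate_y}: there $p_l/p_k\le\exp(-L)$ and $\log(1+x)\le x$ give a contribution $\le\bigl(\sum_{l\ne k}w_l/w_k\bigr)\exp(-L)$. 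The delicate term is the bad part, and this is where the proof differs from Theorem~\ref{thm_stacking_seperate_y}: that proof partitions by the \emph{pointwise} winners $\{\mathcal{J}_k\}$, so $p_l/p_k\le1$ throughout $\mathcal{J}_k$ and the bad-set contribution is automatically a bounded constant times $\epsilon$, whereas here the celpd-winner $k$ need not win pointwise on the bad part of $\mathcal{I}_k$, so the ratios $p_l/p_k$ are not controlled and $\log\bigl(1+\sum_{l\ne k}w_l p_l/(w_k p_k)\bigr)$ can be large. I expect to dispatch this exactly as the earlier proofs implicitly do, folding into the $\mathcal{O}(\epsilon)$ a factor that bounds the pairwise log predictive-density differences over the measure-$\epsilon$ bad set. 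As a supporting reconciliation, note that under \eqref{eq_sep} the sets $\mathcal{I}_k\times\mathcal{Y}$ and $\mathcal{J}_k$ coincide off a set of mass $\mathcal{O}(\epsilon)$, so $\Pr(\mathcal{I}_k)$ and $\Pr(\mathcal{J}_k)$, and hence $\rho_{\mathcal{X}}$ and $\rho$ and the two entropies, differ by $\mathcal{O}(\epsilon)$; this lets me import the clean $\Delta$-bound of Theorem~\ref{thm_stacking_seperate_y} and transfer it to the $\mathcal{I}$-partition at the cost of an extra $\mathcal{O}(\epsilon)$, absorbing it into the stated remainder.
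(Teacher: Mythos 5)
Your proposal is correct and follows essentially the same route as the paper's own proof: the paper likewise forms the surrogate objective over the $\{\mathcal{I}_k\}$ partition, reduces the gain to the exact entropy identity $\mathrm{elpd}_{\mathrm{pointwise~ selection}}-\mathrm{elpd}^{\mathrm{surrogate}}(\w^{\mathrm{approx}})=-\sum_{k}\Pr(\mathcal{I}_k)\log\Pr(\mathcal{I}_k)\geq-\log\rho_{\mathcal{X}}$, and then simply cites the proof of Theorem~\ref{thm_stacking_seperate_y} for the remaining $\mathcal{O}(\exp(-L)+\epsilon)$ error term. The bad-set difficulty you flag---that on the mass-$\epsilon$ part of $\mathcal{I}_k\times\mathcal{Y}$ the ratios $p_l/p_k$ are not bounded by $1$, unlike on $\mathcal{J}_k$---is genuine, but the paper's proof silently glosses over exactly the same point when it imports the Theorem~\ref{thm_stacking_seperate_y} bound, so your treatment is no less complete than the original.
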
 
 \begin{proof}
 	\begin{align*}
 		&\mathrm{elpd}_{\mathrm{pointwise~ selection}}    - \mathrm{elpd}^{\mathrm{surrogate}}(\w^{\mathrm{approx}})\\
 		&=\sum_{l=1}^K 
 		\int_{\mathcal{I}_l } \left( \log  p_l(\tilde y |\tilde x)-
 		\log \left(\Pr({\mathcal{I}_l})  p_l(\tilde y |\tilde x)\right)  \right)p_t(\tilde y | \tilde x) p(\tilde x) d \tilde x d \tilde y\\
 		&=- \sum_{l=1}^K   \Pr({\mathcal{I}_l} ) \log \Pr({\mathcal{I}_l})\\
 		&\geq - \sum_{l=1}^K \Pr({\mathcal{I}_l} ) \log     \rho_x\\
 		&= -  \log \rho_x.
 	\end{align*}
 	Finally, from the proof of Theorem \ref{thm_stacking_seperate_y},$$ |\mathrm{elpd}^{\mathrm{surrogate}}(\w^{\mathrm{approx}})- \mathrm{elpd}^{\mathrm{stacking}}(\w^{\mathrm{stacking}})| \leq \mathcal{O} (\exp(-L)+ \epsilon).$$
 \end{proof}
 
 We close this section with two remarks. First, 
 \citet{yao2019bayesian} approximates the probabilistic stacking weights under the strong separation condition \eqref{eq_sep}. The result therein can be viewed as a special case of Theorem~\ref{theo_lower_selection} in the present paper as $\Pr(\mathcal{J}_k)\approx\Pr(\mathcal{I}_k)$ under assumption  \eqref{eq_sep}. 
 
 Second, most proofs only use the concavity of the log scoring rule.
 %: the function $g(w_1)=   \log(w_1p_1(y))$  is concave on $w_1$. 
 Therefore, some proprieties of stacking weights could be extended to other concave scoring rules, too. 
 
 \clearpage
 \section{Software implementation in Stan}\label{sec_stan}
 We summarize our formulation of hierarchical stacking by pseudo code \ref{eq_code}.

 \begin{algorithm}
 	\SetAlgoLined
 	\SetKwInOut{Input}{Input}
 	\KwData {	$y$: outcomes; $x$: input on which the stacking weights vary, $z$: other inputs; \\
 		$p_{k,-i}$: approximate leave-one-out predictive densities of the $k$-th model and $i$-th data. }
 	\KwResult{input-dependent stacking weight  $\overline \w(x): \mathcal{X} \to \mathcal{S}_K$ ; combined model.}
 	Sample  from the    joint densities   $p(\alpha, \mu, \sigma|\mathcal{D})$ in
 	hierarchical stacking model \eqref{stacking_obj_h2}\;
 	Compute posterior mean of $w_{k}(\tilde x)$ at any $\tilde x$,  and 
 	make predictions  $p(\tilde y| \tilde x, \tilde z )$   by \eqref{eq_point_stacking_final}.
 	\caption{Hierarchical stacking}\label{eq_code}
 \end{algorithm}

 To code the basic additive model, we prepare the input covariate $X= (X_\mathrm{discrete},   X_\mathrm{continuous})$,  where $X_\mathrm{discrete}$ is discrete dummy variable, and $X_\mathrm{continuous}$ are remaining features (already rectified as in \eqref{eq_relu}). The dimension of these two parts are $d_\mathrm{continuous}$ and  $d_\mathrm{discrete}$.
 
 Here we use the ``grouped hierarchical priors"   (Section \ref{sec_stacking_post}) with only two groups,  distinguishing between continuous and discrete variables.  We  discuss more on the hyper prior choice in the next section.
 \begin{align*}
 	&w_{1:K}(x)=  \mathrm{softmax}(w^*_{1:K}(x)), ~~ w^*_k(x)=   \sum_{m=1}^M \alpha_{mk} f_{m}(x)+  \mu_k,~~ k\leq K-1,  ~~w^*_{K}(x)=0,    \\
 	&  \alpha_{mk} \mid   \sigma_{k1} \sim \n (0, \sigma_{k1}),  ~ k=1, \dots, K-1, ~ m=1, \dots, d_\mathrm{discrete},\\
 	&  \alpha_{mk} \mid   \sigma_{k2} \sim \n (0, \sigma_{k2}),  ~ k=1, \dots, K-1, ~ m= d_\mathrm{discrete}+1, \dots, d_\mathrm{discrete}+d_\mathrm{continuous}, \\
 	&  \mu_k\sim \n(\mu_0, \tau_\mu),  \quad \sigma_{k1} \sim \n^+(0, \tau_{\sigma1}), \sigma_{k2} \sim \n^+(0, \tau_{\sigma2}), \quad  k=1, \dots, K-1.
 \end{align*}

 \paragraph{Stan code for hierarchical stacking.}
 Besides advantage listed in this paper,  another benefit of stacking  now being a  Bayesian model is the automated inference in generic computing programs, such as \texttt{Stan} \citep{stan2020}.  The following \texttt{Stan} program is one example of stacking with a linear additive form.
 \begin{lstlisting}[language=C++ ]
 	data {
 		int<lower=1> N; // number of observations
 		int<lower=1> d; //number of input variables
 		int<lower=1> d_discrete; // number of discrete dummy inputs
 		int<lower=2> K;  // number of models  
 		//when K=2, replace softmax by inverse-logit for higher efficiency
 		matrix[N,d] X;   // predictors 
 		//including continuous and discrete in dummy variables, no constant
 		matrix[N,K] lpd_point;  //the input pointwise  predictive density
 		real<lower=0> tau_mu;
 		real<lower=0> tau_discrete;//global regularization for discrete x
 		real<lower=0> tau_con;//overall regularization for continuous x
 	}
 	
 	transformed data {
 		matrix[N,K] exp_lpd_point = exp(lpd_point);
 	}
 	
 	parameters {
 		vector[K-1] mu; 
 		real mu_0;
 		vector<lower=0>[K-1] sigma;  
 		vector<lower=0>[K-1] sigma_con;
 		vector[d-d_discrete] beta_con[K-1];
 		vector[d_discrete] tau[K-1]; // using non-centered parameterization
 	}
 	
 	transformed parameters {
 		vector[d] beta[K-1];
 		simplex[K] w[N];
 		matrix[N,K] f;
 		for (k in 1:(K-1))
 		beta[k] = append_row(mu_0*tau_mu + mu[k]*tau_mu + sigma[k]*tau[k],
 		sigma_con[k]*beta_con[k]); 
 		for (k in 1:(K-1))
 		f[,k] = X * beta[k];
 		f[,K] = rep_vector(0, N);
 		for (n in 1:N)
 		w[n] = softmax(to_vector(f[n, 1:K]));
 	}
 	
 	model{
 		for (k in 1:(K-1)){
 			tau[k] ~ std_normal();
 			beta_con[k] ~ std_normal();
 		}
 		mu ~ std_normal();
 		mu_0 ~ std_normal();
 		sigma ~ normal(0, tau_discrete);
 		sigma_con ~ normal(0,tau_con);
 		for (i in 1:N) 
 		target += log(exp_lpd_point[i,] * w[i]); //log likelihood  
 	}
 	
 	//optional block: needed if an extra layer of LOO (eq.28) is called to evaluate the final stacked prediction.
 	generated quantities { 
 		vector[N] log_lik;
 		for (i in 1:N) 
 		log_lik[i] = log(exp_lpd_point[i,] * w[i]);
}
\end{lstlisting}
 To run this stacking program on model fits, we can fit all individual models in \texttt{Stan}, and extract their leave-one-out likelihoods $\{p_{k,-i}\}$. In \texttt{R}, we use the 
 efficient leave-one-out approximation package \texttt{loo} \citep{vehtari2018loo}:
 
 \begin{lstlisting} [style={R2},  language=R]  
 	library("loo")  # https://mc-stan.org/loo/
 	lpd_point <- matrix(NA, nrow(X), K)
 	for (k in 1:K) {
 		fit_stan <- stan(stan_model = model_k, data = ...)
 		# input x may differ in models
 		log_lik <- extract_log_lik(fit_stan, merge_chains = FALSE)
 		lpd_point[,k] <- loo(log_lik, 
 		r_eff = relative_eff(exp(log_lik)))$pointwise
}
\end{lstlisting}
 
 Finally, we run hierarchical stacking as a regular Bayesian model in \texttt{Stan}.
 \begin{lstlisting} [style={R2},  language=R]  
 	library("rstan")  # https://mc-stan.org/rstan/
 	# save the stan code above to a  file "stacking.stan".
 	stan_data <- list(X = X, N=nrow(X), d=ncol(X), d_discrete=d_discrete, 
 	lpd_point=lpd_point, K=ncol(lpd_point), tau_mu = 1,
 	tau_sigma = 1, tau_discrete= 0.5, tau_con = 1)
 	fit_stacking <- stan("stacking.stan",  data = stan_data)
 	w_fit <- extract(fit_stacking, pars = 'w')$w     # posterior simulation of pointwise stacking weights.
\end{lstlisting}

 \section{Prior recommendations}
 We believe the prior specification should follow the general principle of the weakly-informative prior\footnote{For example, see \url{https://github.com/stan-dev/stan/wiki/Prior-Choice-Recommendations}.}. In the context of the  additive model (Section \ref{sec_additive}), some weakly-informative prior heuristics imply
 \begin{itemize}
 	\item We  would like to use a half-normal instead of a too wide half-Cauchy or inverse-gamma for the model-wise scale parameter (i.e., $\sigma_k \sim \n^+(0, \tau_\sigma)$). This is not only because generally, we prefer half-normal for its lighter right tail in hierarchical models, but also because we know that the complete-pooling stacking  ($\sigma_k\equiv0$) is often a rational solution in many problems, to begin with. 
 	
 	On the contrary, a wide  $\sigma_k^2 \sim \mathrm{InvGamma}(10^{-2}, 10^{-4})$ seems a popular choice in the mixture of experts,  which we do not recommend.
 	\item When the number of features $M$ is large, it is sensible to first standardize  feature such that Var$(f_m(x))=1,  ~1\leq m\leq M$, and 
 	scale the  hyper-parameter to control 
 	Var$( \sum_{m=1}^M\alpha_{mk} f_{m}(x))$. With independent inputs, it leads to  $\tau_\sigma = \mathcal{O}(\sqrt{1/M})$.
 	\item  When there are a small number of features and no extra information to incorporate,  we often  first standardize all features and use a half-normal$(0,1)$ prior on model-wise scale $\sigma_k$ (i.e., $\tau_\sigma \coloneqq 1$). The  half-normal$(0,1)$ has been used as a default informative prior for group-level scale in some applied regression tasks.  
 	\item The structure of the prior matters more than the scale of the prior. Hierarchical stacking is typically not  sensitive to the difference between a half-normal$(0,1)$ or  half-normal$(0,2)$ hyper-prior on $\sigma_k$, although this sensitivity can be checked. But it would be sensitive to the structure of priors, such as feature-model decomposition, correlated priors,  and horseshoe priors, as we have discussed in Section 2.4.
 \end{itemize}
 
 Second, instead of  recommending a static default prior, we would rather adopt the attitude that the prior is part of the model and can be checked and improved. Because of our full-Bayesian formulation of hypercritical stacking, we do not have to reinvent model checking tools. When there are concerns on the prior specification, we would like to run prior predictive checks, sensitivity analysis by influence function or importance sampling, 
 and select, stacking, or hierarchically stack a sequence of priors based upon an extra layer of (approximate) leave-one-out cross validation \eqref{eq_st_loo}.

 \section{Experiment details} \label{app:polling}
 The replication code for experiments is available at\\
 \url{https://github.com/yao-yl/hierarchical-stacking-code}.
 
 \paragraph{Well-switch.}
 \citet{vehtari2017practical} and \citet{gelman2020bayesian}  used the same pointwise pattern (first panel in Figure \ref{fig:well_pattern}) in our   well-switch example  to demonstrate the heterogeneity of model fit. 
 The input contains both continuous $x_{\mathrm{con}}\in \R^D$ and categorical    $x_{\mathrm{cat}}\in\{ 1, \dots, 8\}$. As per previous discussion \eqref{eq_relu},  we convert all continuous  inputs $x_{\mathrm{con}}$ into two parts  $x_{\mathrm{con}, j}^+\coloneqq (x_{\mathrm{con},j}-  \mathrm{median}( x_{\mathrm{con}, j}))_{+}$  and $x_{\mathrm{con}, j}^-\coloneqq (x_{\mathrm{con}, j}- \mathrm{median}( x_{\mathrm{con}, j}))_{-}$.  We then model the unconstrained weight by a linear regression
 \begin{equation} 
 	\alpha_k(x) = \sum_{j=1}^D \left(\beta_{2j-1,k} x_{\mathrm{con}, j}^+  + \beta_{2j,k} x_{\mathrm{con}, j}^-\right) + z_k[x_{\mathrm{cat}}], ~k=1, \dots, 4;  \quad \alpha_5(x)=0.  
 \end{equation}
 And place a default prior on parameters and hyper-parameters.
 $$
 z_k[j]\sim \n(\mu_k, \sigma_k), ~~\beta_j,\mu_k\sim \n(0,1) , ~~ \sigma_k\sim \n^+(0,1). $$

 \paragraph{Gaussian process regression.}
 We use training data $\{x_i,y_i\}$ 
 from \citet{neal1998regression}  (file \texttt{odata.txt} in our repo).  \citet{yao2020stacking} use same setting to explain the benefit of complete-pooling stacking.   The training size is  $n=100$. We generate additional test data for model evaluation. The univariate input $x$ is distributed $\mbox{normal}(0,1)$, and the corresponding outcome $y$ is also Gaussian. The true but unknown conditional mean  is
 $$	
 \E_{\mathrm{true}}(y|x) =f_{\mathrm{true}}(x) = 0.3 + 0.4 x + 0.5 \sin(2.7x ) + 1.1 / (1+x^2).
 $$
 
 In the data generating process, with probability 0.95, $y$ is a realization from $y|f_{\mathrm{true}}= \n(\mathrm{mean}=f_{\mathrm{true}}, ~ \mathrm{sd}=0.1)$.  With probability 0.05, $y$ is considered an outlier and the  standard deviation is inflated to 1:  $y|f_{\mathrm{true}}= \n(f_{\mathrm{true}}, 1)$. This outlier probability is independent of location $x$, and the observational noises are mutually independent.

 To infer the parameter $\theta= (a, \rho, \sigma)$ in the first level GP model  
 $$	y_i = f(x_i)+\epsilon _i, ~  \epsilon _i \sim  \mbox{normal}(0, \sigma),  ~ f(x) \sim  \mathcal{GP} \left( 0, a^2 \exp\left( -\frac{(x-x')^2}{\rho^2} \right) \right). $$
 
 We integrate out all local  $f(x_i)$ and obtain the marginal posterior distribution
 $
 \log p(\theta | y) = - \frac{1}{2}  y^T \left( K(x, x ) + \sigma^2 I \right)  ^{-1} y  - \frac{1}{2} \log | K(x, x ) + \sigma^2I   | + \log p(\theta) + \mathrm{constant},
 $
 where $K$ is squared-exponential-kernel, and $p(\theta)$ is the prior for which we choose an elementwise  half-Cauchy$(0, 3)$. Using  initialization $(\log \rho, \log a, \log \sigma)= (1, 0.7, 0.1)$ and $(-1, -5, 2)$ respectively,  we find  two posterior modes of hyper-parameter $\theta= (a, \rho, \sigma)$.   
 
 The posterior multimodality  relies on the particular realization of $x$ and $y$.   We have tried other randomly generated training datasets, among which only  \citet{neal1998regression}'s original data realization can give rise to two distinct modes.
 We then consider three standard mode-based approximate inference:  
 \begin{itemize}[topsep=0pt,partopsep=1ex,parsep=1ex]
 	\item  {Type-II MAP}: The value $\hat \theta$ that maximizes the marginal posterior distribution. We further draw $f| \hat \theta, y$. 
 	\item  {Laplace approximation}.  First compute $\Sigma$: the inverse of the negative Hessian matrix of the log posterior density at the local mode $\hat \theta$,  draw $z$ from MVN$(0, I_{3})$, and use $\theta(z)= \hat \theta + \mathrm{V} \Lambda ^{1/2} z$ as the approximate posterior samples around the mode $\hat \theta$,  where the matrices $\mathrm{V},  \Lambda$ are from the eigen-decomposition $\Sigma= \mathrm{V} \Lambda ^{1/2} \mathrm{V}^T$.
 	
 	\item {Importance resampling}.  First draw $z$ from uniform$(-4,4)$,  resample $z$ without replacement with probability proportional to  $p\left( \theta (z) | y\right) $,  and use the kept samples of $\theta(z)$ as an approximation of $p(\theta|y)$.
 \end{itemize}
 
 With two local modes $\hat \theta_1, \hat \theta_2$, we either obtain two MAPs, or two  nonoverlapped draws, $(\theta_{1 s})_{s=1}^S,  (\theta_{2s})_{s=1}^S $. We  evaluate the  predictive distribution of $f$, $p_k(f|y, \theta)= \int\! p(f|y, \theta) q(\theta| \hat \theta_k) d \theta,~  k=1,2,$
 where $q(\theta| \hat \theta_k)$ is a delta function at the mode $\hat \theta_k$, or the draws from the Laplace approximation and importance resampling expanded at $\hat \theta_k$.

 In the model averaging phase, we form the model  weight in  GP prior stacking   by 
 $$w_{1}(x) = \mathrm{invlogit}(\alpha(x)),~
 \alpha(x) \sim \mathcal{GP}(0, \mathcal{K}(x)), ~ \mathcal{K} (x_i, x_j)=  a \exp(-  \left((x_i - x_j) / \rho \right)^2 ).$$
 Because input $x$ is distributed $\n(0,1)$, the length scale  $\rho$ should be constrained on a similar scale. 
 We use the following hyperprior for GP prior stacking:
 $$ \rho \sim  \mathrm{Inv\!\!-\!\!Gamma}(4,1),~~
 a \sim \n(0,1).   $$
 The $\mathrm{Inv\!\!-\!\!Gamma}(4,1) $ prior puts 98\% of mass on the interval $0.1 < \rho < 1.2$. 
 
 \paragraph{Election polling.}	
 In the  election example, we conduct a back-test for  one-week-ahead forecasts.  For example, if there are 20 polls between Aug 1 and Aug 7, we first fit each model on the data prior to Aug 1 and forecast for each of the 20 polls in this week. Next, we move on to forecasting for the week between Aug 8 and Aug 14. We use this step-wise approach for both, fitting the candidate models and stacking.
 
 We use two variants of hierarchical stacking with discrete inputs---first with independent priors from Eq. \eqref{eq_prior} and second with correlated priors from Eq. \eqref{eq_corr}. We place default priors on the hyperparameters in both variants:
 $$ \mu_k \sim  \mathrm{normal}(0,1),~~
 \sigma_k \sim \mathrm{normal}^+(0,1).   $$
 
 We are evaluating all combining methods on the same data, therefore we can compare them pointwisely by selecting a reference model---in our case this is the proposed hierarchical stacking and set it to be zero in all visualisations. For each combination method and each poll $i$, we compute the pointwise difference in elpds: $\text{elpd\_diff}_i^{\text{M}_j} = \text{elpd}_i^{\text{M}_j} - \text{elpd}_i^{\text{M}_\text{ref}}$, where $\text{M}_j$ is the $j$-th model and $\text{M}_\text{ref}$ is the reference model. Then we report the mean of this differences over all polls in the test data, $\text{elpd\_diff}^{\text{M}_j} = \frac{1}{N} \sum_{i} \text{elpd\_diff}_i^{\text{M}_j}$, where $N$ is the number of all polls. 
 
 In the main text, to account for non-stationarity discussed in Section \ref{sec:timeseries}, we only use the last four weeks prior to prediction day for training model averaging. In the end  we obtain a trajectory of  this back-testing performance of  hierarchical stacking,  complete-pooling, and no-pooling stacking and single model selection. The time window of four-week is a relatively ad-hoc choice and we did not tune it. 
 Figure \ref{fig:sixtyday} displays the result when trained with the previous 60 days rather than four weeks on each backtesting day. The pattern remains similar.
 
 Additionally, we are interested in how models perform depending on time, as there are few polls available in the early days of the election year, and then their number continuously increases toward election day. This results in noisier observations in the beginning. To suitably evaluate the combining methods, we compute the cumulative mean elpd at each day $d$, elpd$_d^{\text{*,M}_j} = \frac{1}{N_d}\sum_{j \leq d} \text{elpd}_j^{\text{M}_j}$, where $N_d$ is the number of conducted polls prior to or on day $d$. Then we compute the pointwise differences between these cumulative mean elpds of each method and the reference method: $\text{elpd\_diff}_d^{\text{*,M}_j} = \text{elpd}_d^{\text{*,M}_j} - \text{elpd}_d^{\text{*,M}_\text{ref}}$. 
 
 To get the elpd of a state, we take the average of all elpds in that state, for example $\text{elpd}_\text{NY} = \frac{1}{N_\text{NY}} \sum_{i \in A_\text{NY}} \text{elpd}_i$, where $N_\text{NY}$ is the number of polls conducted in New York, and $A_\text{NY}$ is the set of indexes of polls in New York. Figure \ref{fig:polling_cumulative_by_state} displays the state-level  log predictive  density of the combined model in six representative states.
 
 %\begin{figure}
 %	\centering
 %\includegraphics[width=0.8\linewidth]{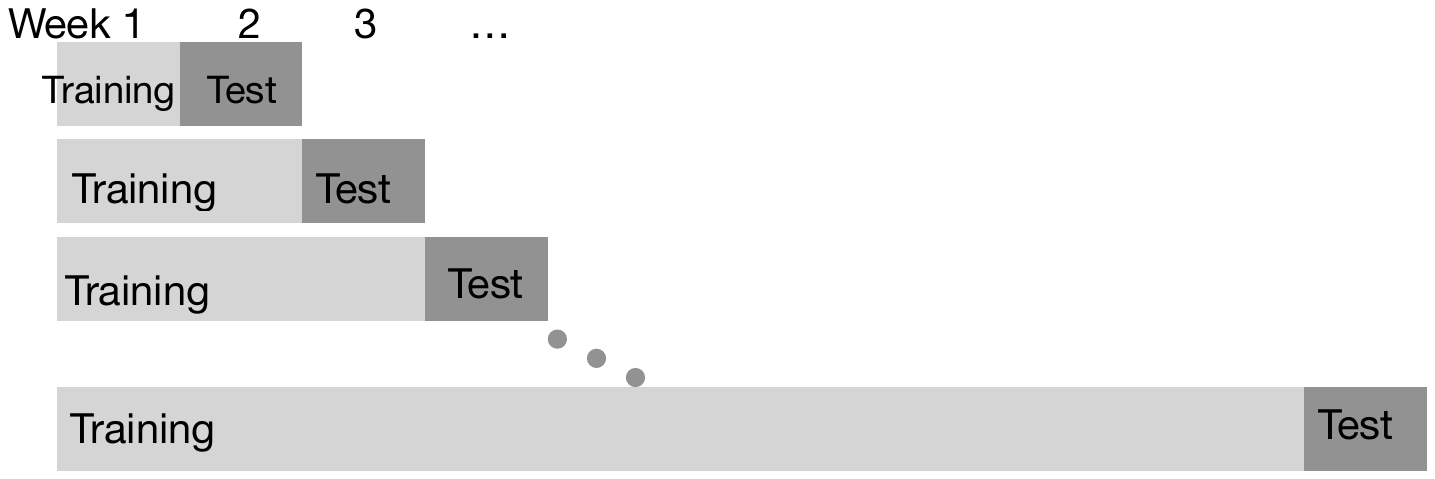}	\caption{\em Illustration of the backtest  evaluation of the election model. To give a sense of how the model performance change dynamically, for week $t=T_0, \dots, T$, we fit all models and run stacking using training data up to week $(t-1)$, and test the model performance of all individual polls on week $t$ in each state.}\label{fig_acc_error}
 % \end{figure}
 \begin{figure}  [!ht] 
 	\centering	\includegraphics[width=\textwidth]{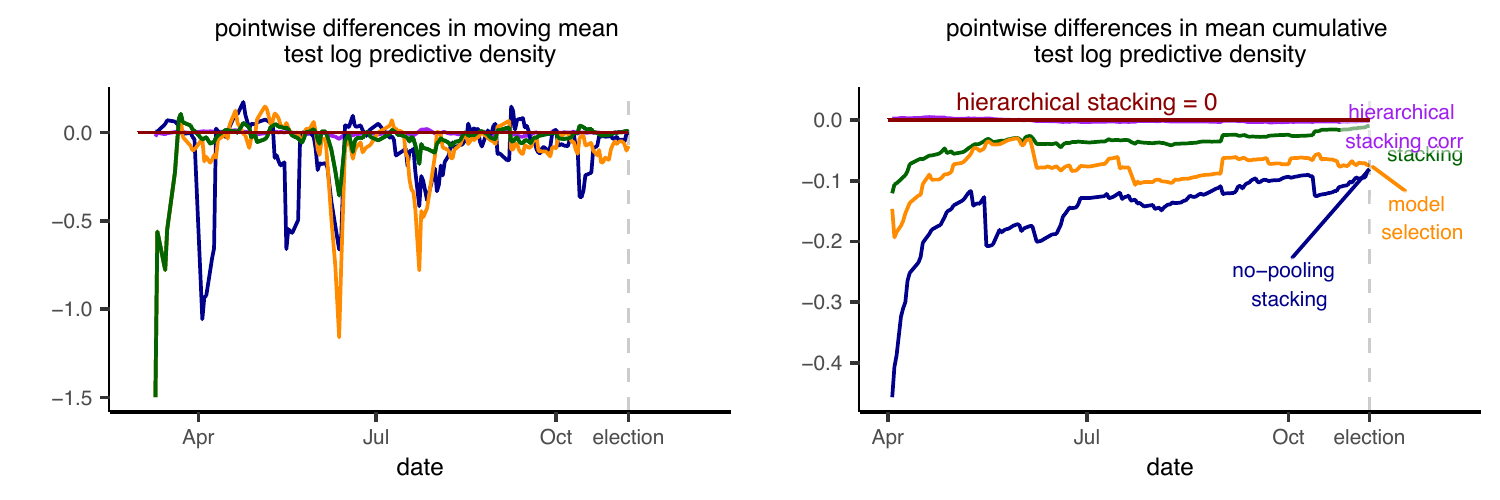}
 	\caption{\em Same pointwise model comparisons as in Figure \ref{fig:polling_results}, except this time all model averaging and selection methods are trained using the previous 60 days rather than 4 weeks on each backtesting day. The pattern remains similar.}\label{fig:sixtyday}
 \end{figure}
 \begin{figure} [!ht] 
 	\centering
 	\includegraphics[width=1\textwidth]{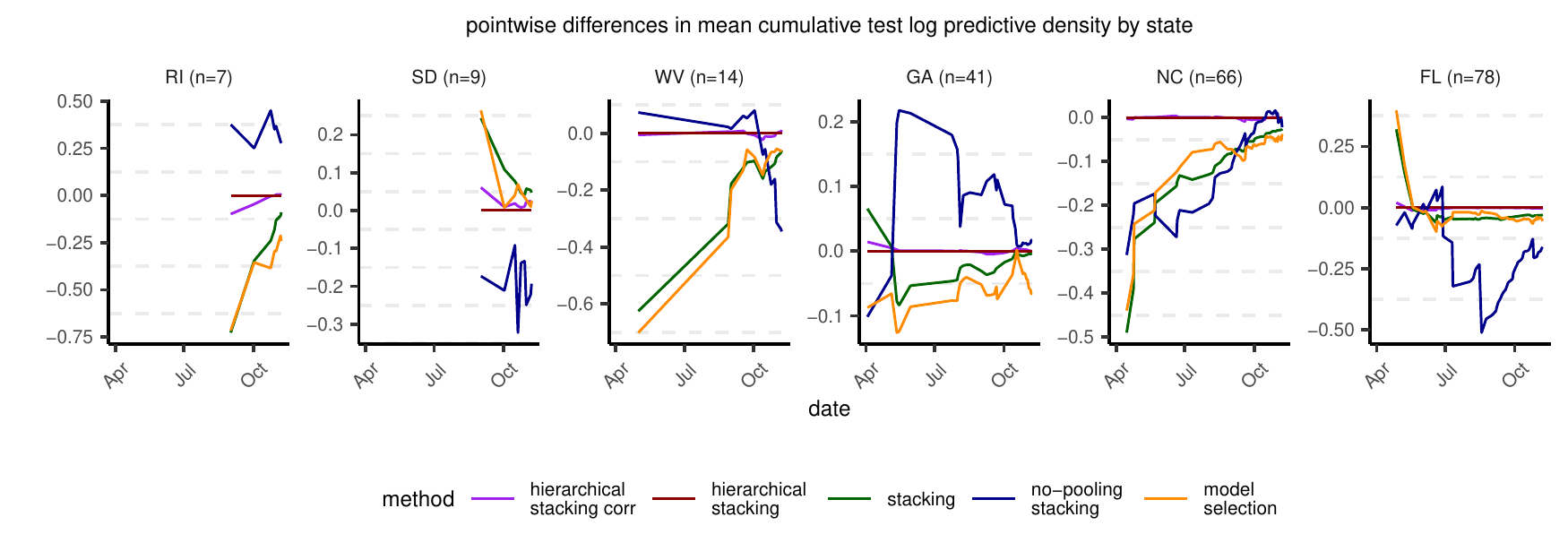}
 	\caption{\em  Log predictive  density of the combined model in three small  states (in the number of available state polls $n$, RI, SD, WV) and three swing states (GA, NC, FL). We fix the uncorrelated hierarchical stacking to be constant zero as a reference.
 		The number in the bracket is the total number of polls in that state.  With a large number of state polls available, for example, close to election day in Florida and North Carolina, no-pooling stacking performs well.   With fewer polls,  no-pooling stacking is unstable, as can be seen in Rhode Island, South Dakota, West Virginia, and the early part of Georgia plots.   Hierarchical stacking  alleviates this instability, while retaining enough flexibility for a good performance with large data come in.} \label{fig:polling_cumulative_by_state}
 \end{figure}

\end{document}